\newcommand{\N}{\ensuremath{\mathbb{N}}}
\newcommand{\layer}[1]{\ensuremath{\mathcal{L}_{#1}}}
\newcommand{\lnum}{\ensuremath{\ell}}
\newcommand{\eps}{\varepsilon}
\newcommand{\range}[1]{\ensuremath{[{#1}]}}
\newcommand{\interval}[2]{\ensuremath{[ {#1},{#2} ]}}
\newcommand{\Oh}[2][]{\ensuremath{\mathcal{O}_{#1}({#2})}}
\newcommand{\R}{\ensuremath{\mathbb{R}}}
\newcommand{\Rc}{\ensuremath{\mathcal{R}}}
\newcommand{\Bc}{\ensuremath{\mathcal{B}}}
\newcommand{\Cc}{\ensuremath{\mathcal{C}}}
\newcommand{\Kc}{\ensuremath{\mathcal{K}}}
\newcommand{\Dc}{\ensuremath{\mathcal{D}}}
\newcommand{\Fc}{\ensuremath{\mathcal{F}}}
\newcommand{\Gc}{\ensuremath{\mathcal{G}}}
\newcommand{\Ic}{\ensuremath{\mathcal{I}}}
\newcommand{\Pc}{\ensuremath{\mathcal{P}}}
\newcommand{\Sc}{\ensuremath{\mathcal{S}}}
\newcommand{\Tc}{\ensuremath{\mathcal{T}}}
\newcommand{\Zc}{\ensuremath{\mathcal{S}}}
\newcommand{\matt}{\ensuremath{t}}
\newcommand{\matrixx}{\ensuremath{M}}
\newcommand{\szone}[3]{\ensuremath{\mathsf{subzone}_{#1}(#2,#3)}}
\newcommand{\low}[1]{\ensuremath{\mathsf{low}(#1)}}
\newcommand{\modu}{\ensuremath{\mathsf{mod}}}
\newcommand{\divi}{\ensuremath{\mathsf{div}}}
\newcommand{\object}[1]{\ensuremath{\mathsf{obj}({#1})}}
\newcommand{\ptr}{\ensuremath{\mathsf{ptr}}}
\newcommand{\ptrit}{\ensuremath{\mathsf{ptrIt}}}
\newcommand{\entry}[2]{\ensuremath{\mathsf{entry}({#1},{#2})}}
\newcommand{\ptrglo}{\ensuremath{\mathsf{ptrGlo}}}
\newcommand{\ceil}[1]{\left \lceil #1 \right \rceil }
\newcommand{\floor}[1]{\left \lfloor #1 \right \rfloor }
\newcommand{\fixed@sra}{$\vrule height 2\fontdimen22\textfont2 width 0pt\shortrightarrow$}
\newcommand{\shortarrow}[1]{%
  \mathrel{\text{\rotatebox[origin=c]{\numexpr#1*45}{\fixed@sra}}}
}
\title{Compact representation for matrices of bounded twin-width} 
\author{Michał Pilipczuk}{Institute of Informatics, University of Warsaw}{michal.pilipczuk@mimuw.edu.pl}{}{}
\author{Marek Sokołowski}{Institute of Informatics, University of Warsaw}{marek.sokolowski@mimuw.edu.pl}{}{}
\author{Anna Zych-Pawlewicz}{Institute of Informatics, University of Warsaw}{anka@mimuw.edu.pl}{}{}
\authorrunning{Mi. Pilipczuk and M. Sokołowski and A. Zych-Pawlewicz} 
\keywords{twin-width, compact representation, adjacency oracle} 
\begin{document}

\maketitle

\begin{abstract}
For every fixed $d\in \N$, we design a data structure that represents a binary $n\times n$ matrix that is $d$-twin-ordered. The data structure occupies $\Oh[d]{n}$ bits, which is the least one could hope for, and can be queried for entries of the matrix in time $\Oh[d]{\log \log n}$ per query.
\end{abstract}

\begin{textblock}{20}(11.1, -1.2)
	\includegraphics[width=40px]{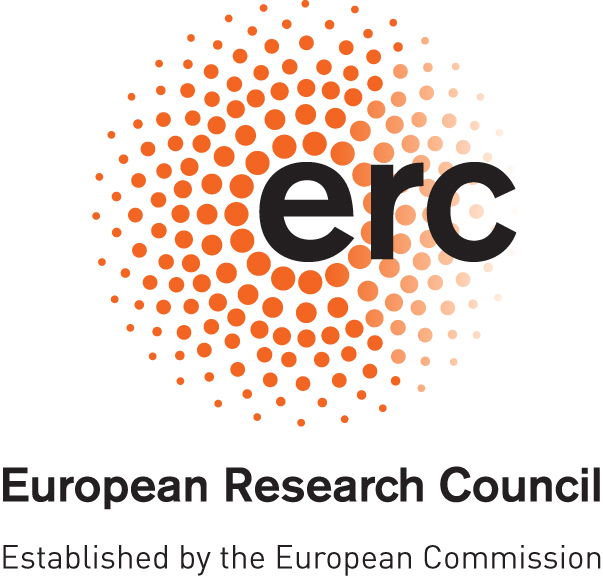}%
\end{textblock}
\begin{textblock}{20}(10.85, -0.8)
	\includegraphics[width=60px]{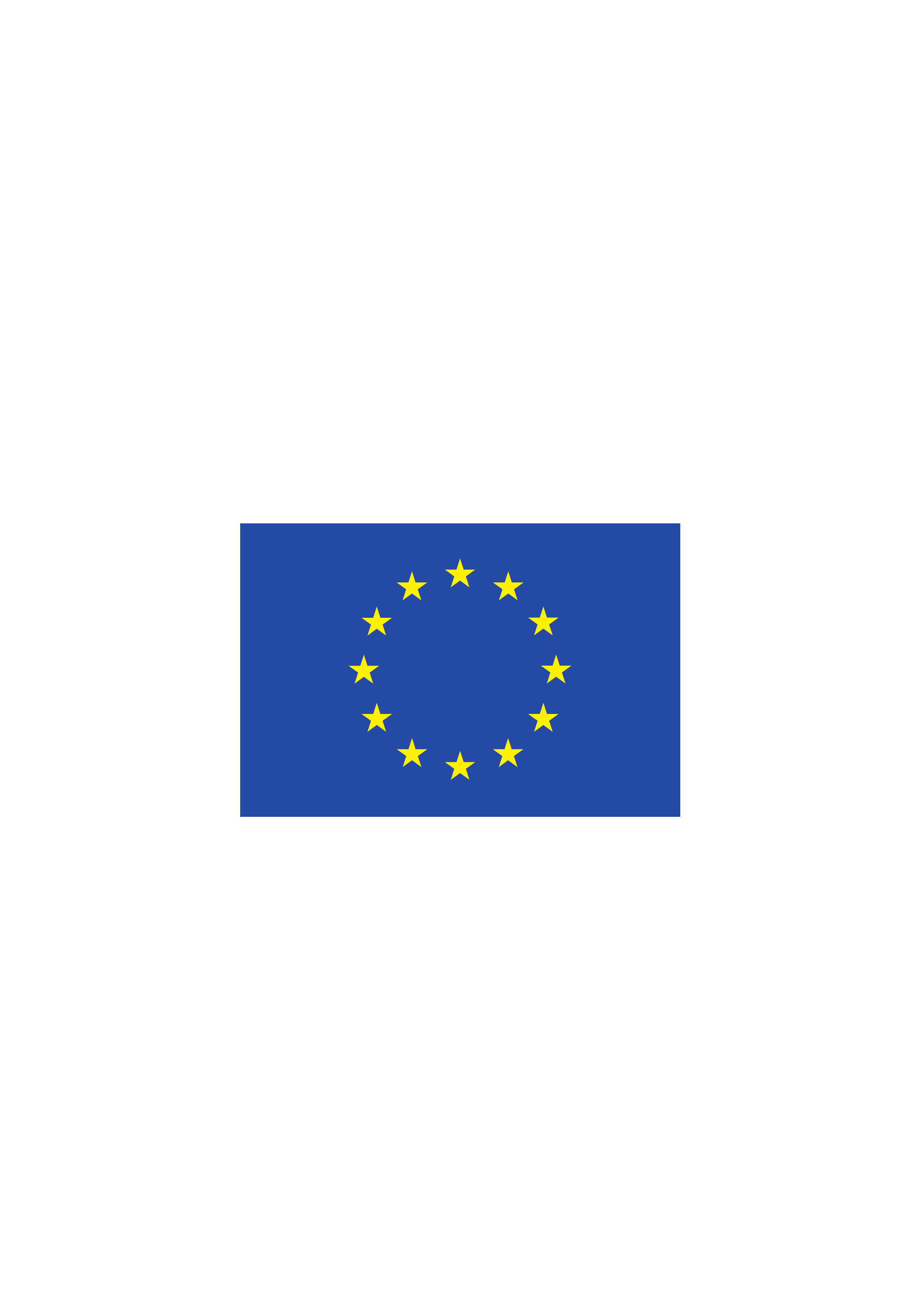}%
\end{textblock}

\newpage

\section{Introduction}\label{sec:intro}

Consider a {\em{binary}} matrix $M$, that is, one with entries in $\{0,1\}$. Given two consecutive columns $c_1,c_2$ of $M$, the operation of {\em{contracting}} those columns consists of replacing them with a single column $c$ with entries reconciled as follows. If in a row $r$ columns $c_1,c_2$ agree, that is, both contain symbol $0$ or both contain symbol $1$, then in column $c$ in row $r$ we put the same symbol. Otherwise, we put a special mismatch symbol $\bot$. Contracting consecutive rows is defined analogously. Contractions of rows and columns can be then applied further with the rule that reconciling any entry with the mismatch symbol $\bot$ again results in $\bot$. For a nonnegative integer $d\in \N$, we say that $M$ is {\em{$d$-twin-ordered}} if $M$ can be contracted to a $1\times 1$ matrix in such a way that at every point during the process, every row and every column contains at most $d$ symbols $\bot$. Finally, the {\em{twin-width}} of $M$ is the least $d$ for which one can permute rows and columns of $M$ so that the resulting matrix is $d$-twin-ordered.

The notion of twin-width was introduced very recently by Bonnet et al. in~\cite{tww1}, and has immediately gathered immense interest. As shown in~\cite{tww1} and in multiple subsequent works~\cite{tww2,tww3,tww4,BonnetKRTW21,BonnetNOdMST21,DreierGJOdMR21,GajarskyPT21}, twin-width is a versatile measure of complexity not only for matrices, but also for permutations and for graphs by considering a suitable matrix representation, which in the latter case is just the adjacency matrix. In particular, for every fixed $t\in \N$, graphs excluding $K_t$ as a minor and graphs having cliquewidth at most $t$ have bounded twin-width, which means that the concept of boundedness of twin-width is a vast generalization of boundedness of cliquewidth that does not assume tree-likeness of the structure of the graph. As shown in the aforementioned works, this generalization is combinatorially rich~\cite{tww2,tww1,DreierGJOdMR21}, algorithmically useful~\cite{tww3,BonnetKRTW21,tww1}, and exposes deep connections with notions studied in finite model theory~\cite{tww4,tww1,BonnetNOdMST21,GajarskyPT21}. In particular, assuming a suitable contraction sequence is provided on input, model-checking First-Order logic on graphs of bounded twin-width can be done in linear fixed-parameter time~\cite{tww1}.

One of the fundamental directions in the work on twin-width is that of estimating the asymptotic {\em{growth}} of considered classes of objects. It has been proved in~\cite{tww2} that the number of distinct graphs on vertex set $\{1,\ldots,n\}$ that have twin-width at most $d$ is bounded by $2^{\Oh[d]{n}}\cdot n!$, which renders the class of graphs of twin-width at most $d$ {\em{small}}. Similarly, the number of distinct $n\times n$ binary matrices that are $d$-twin-ordered is upper bounded by~$2^{\Oh[d]{n}}$~\cite{tww4} (see also the proof of Lemma~\ref{lem:small_family_exp}).

The latter result raises a natural data structure question, which we address in this work. In principle, a binary $n\times n$ matrix of twin-width at most $d$ can be encoded using $\Oh[d]{n}$ bits, just because the number of such matrices is bounded by $2^{\Oh[d]{n}}$. However, would it be possible to design such a representation so that it is algorithmically useful in the following sense: the representation may serve as a data structure that supports efficient queries for entries of the matrix. This question originates from the area of {\em{compact representations}}, see for instance the work on such representations for graphs of bounded cliquewidth~\cite{Kamali18}.

Let us review some solutions to the above problem that to smaller or larger extent follow from existing literature. The quality of a representation is measured by the number of bits it occupies and the worst-case time complexity of a query for an entry. Here, we assume the standard word RAM model with word length $\Oh{\log n}$.
\begin{itemize}
 \item Storing the matrix explicitly is a representation with bitsize $\Oh{n^2}$ and query time $\Oh{1}$. 
 \item In~\cite{tww2}, Bonnet et al. presented an {\em{adjacency labelling scheme}} for graphs of bounded twin-width, which can be readily translated to the matrix setting. This scheme assigns to each row and each column of the matrix a {\em{label}} --- a bitstring of length $\Oh[d]{\log n}$ --- so that the entry in the intersection of a row and a column can be uniquely decoded from the pair of their labels. In~\cite{tww2} the time complexity of this decoding is not analyzed, but a straightforward implementation runs in time linear in the length of labels. This gives a representation with bitsize $\Oh[d]{n\log n}$ and query time $\Oh[d]{\log n}$.
 \item It follows from the results of~\cite{tww3} that if matrix $M$ is $d$-twin-ordered, then the entries $1$ in $M$ can be partitioned into $\ell=\Oh[d]{n}$ rectangles, say $R_1,\ldots,R_\ell$ (see Lemma~\ref{lem:rectangles} for a proof). This reduces our question to {\em{$2$D orthogonal point location}}: designing a data structure that for a given point in $(i,j)\in \{1,\ldots,n\}^2$, may answer whether $(i,j)$ belongs to any of the rectangles $R_1,\ldots,R_\ell$. For this problem, Chan~\cite{DBLP:journals/talg/Chan13} designed a data structure with bitsize $\Oh{n\log n}$ and query time $\Oh{\log \log n}$ assuming $\ell=\Oh{n}$. So we get a representation of $M$ with bitsize $\Oh[d]{n\log n}$ and query time $\Oh[d]{\log \log n}$.
 \item For $2$D orthogonal point location one can also design a simple data structure by persistently recording a sweep of the square $\{1,\ldots,n\}^2$ using a $B$-ary tree for $B=n^{\eps}$, for any fixed $\eps>0$. This gives a representation with bitsize $\Oh[d]{n^{1+\eps}}$ and query time $\Oh{1/\eps}$. See Appendix~\ref{sec:superlinear-repr} for details.
\end{itemize}
Note that in all solutions above, the bitsize of the representation is $\Omega(n\log n)$, and thus does not reach the information-theoretic limit of $\Theta_d(n)$. 

\subparagraph*{Our result.} We design a compact representation for $d$-twin-ordered matrices that simultaneously occupies $\Oh[d]{n}$ bits and offers query time $\Oh[d]{\log \log n}$. The result is summarized in the statement below.

\begin{theorem}\label{thm:main}
 Let $d\in \N$ be a fixed constant.
 Then for a given binary $n\times n$ matrix $M$ that is $d$-twin-ordered one can construct a data structure that occupies $\Oh[d]{n}$ bits and can be queried for entries of $M$ in worst-case time $\Oh{\log \log n}$ per query. The construction time is $\Oh[d]{n\log n\log\log n}$ in the word RAM model, assuming $M$ is given by specifying $\ell=\Oh[d]{n}$ rectangles $R_1,\ldots,R_\ell$ that form a partition of symbols $1$ in $M$. 
\end{theorem}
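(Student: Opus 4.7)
The plan. I intend to combine three ingredients: the rectangle partition from Lemma~\ref{lem:rectangles}, a balanced hierarchical tree decomposition derived from the contraction sequence, and the exponential upper bound on the number of $d$-twin-ordered matrices of given size from Lemma~\ref{lem:small_family_exp}. The first ingredient reduces the task to $2$D orthogonal point location on a special family of $\Oh[d]{n}$ rectangles; the other two are the lever that lets us beat the $\Oh[d]{n\log n}$-bit barrier of black-box point location.

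First, I invoke Lemma~\ref{lem:rectangles} to partition the $1$-entries of $M$ into $\ell = \Oh[d]{n}$ axis-aligned rectangles. From the contraction sequence witnessing that $M$ is $d$-twin-ordered I then extract a pair of balanced binary trees $\Tc_R$ on rows and $\Tc_C$ on columns, each of height $\bigo{\log n}$, whose nodes correspond to contiguous intervals and whose leaves are individual rows/columns; the standard rebalancing arguments in the twin-width literature justify the balancedness while only inflating $d$ by a constant factor. At every level $t$ of these trees, the grid is split into blocks of side $\Theta(2^t)$, and each such block is the restriction of $M$ to a submatrix, so its rows/columns inherit the contraction sequence of $M$ and therefore the block is itself $d$-twin-ordered. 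By Lemma~\ref{lem:small_family_exp}, its content is one of only $2^{\Oh[d]{2^t}}$ patterns, which I index via a global catalog.

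Second, I encode $M$ by a hierarchy of pattern pointers: each block at level $t$ stores a pointer into the level-$t$ catalog. A naive per-level count would sum to $\Theta(n\log n)$ bits, so instead I store only differential information — namely, how each block's pattern refines into its four children at the next finer level. A careful amortization across the $\bigo{n}$ contractions in the sequence charges $\Oh[d]{1}$ bits of differential data per contraction, giving $\Oh[d]{n}$ bits in aggregate. The catalog itself, being of moderate size at small levels and collapsing into $\Oh[d]{1}$ options at the coarsest levels, contributes a lower-order term.

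Third, a query $M[i,j]$ descends the root-to-leaf paths of $i$ in $\Tc_R$ and $j$ in $\Tc_C$ in parallel, consulting the catalog pattern at the relevant block at each visited level. A naive descent gives $\bigo{\log n}$ time; to reach $\bigo{\log \log n}$ I overlay a layered lookup in the spirit of van~Emde~Boas over the $\bigo{\log n}$ levels, binary-searching for the coarsest level whose stored pattern already determines the value at $(i,j)$. The main obstacle is the $\Oh[d]{n}$-bit accounting in the preceding paragraph: the straightforward level-by-level sum loses a factor of $\log n$, and the correct argument must tie each stored bit to a specific contraction event. This in turn requires a delicate analysis of how rectangles cross block boundaries and how boundary-crossing rectangles are charged once to the level at which they are resolved, rather than redundantly at every coarser level they straddle — this is the place where the bounded-twin-width hypothesis is used in its sharpest form, rather than merely through Lemma~\ref{lem:small_family_exp}.
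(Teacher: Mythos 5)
Your proposal recognizes the right high-level shape (a hierarchy of blocks, a catalog of possible block contents, explicit bottom-level patterns), but both of the quantitative claims on which it rests are left unresolved, and the gaps are precisely where the hard work is.

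\textbf{Query time.} A binary hierarchy over $\{1,\ldots,n\}^2$ has depth $\Theta(\log n)$, so the natural pointer-chasing descent takes $\Theta(\log n)$, not $\Oh{\log\log n}$. Your suggested remedy --- binary-searching over levels for ``the coarsest level whose pattern determines $(i,j)$'' --- does not work as stated: to consult the block at an arbitrary level $t$ you must first know \emph{which} block of level $t$ contains $(i,j)$, and in a hierarchical pointer encoding you only learn that by descending from the root, so a binary search over levels buys nothing. The paper's fix is not an overlay on a binary tree but a different hierarchy: block side lengths $m_0=n$, $m_{i+1}\approx m_i^{2/3}$, which is the van-Emde-Boas recurrence applied to the \emph{sizes}, giving only $\Oh{\log\log n}$ levels, so a plain pointer chase already runs in $\Oh{\log\log n}$.

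\textbf{Bitsize.} You propose storing a pointer per block at each level and amortizing the cost over the $\Oh{n}$ contraction events as ``differential information.'' This is exactly the $\log n$ factor you acknowledge you cannot eliminate, and the amortization-over-contractions idea is incompatible with random-access queries: you would have no way to jump to the relevant differential record for a queried $(i,j)$ without replaying a prefix of the contraction sequence. The paper removes the $\log n$ factor with a structural lemma you do not invoke: Lemma~\ref{lem:small_family_frac} shows that a regular $s$-division of a $d$-twin-ordered $n\times n$ matrix contains only $\Oh[d]{n/s}$ \emph{distinct} zones. It then stores objects only for distinct zones. Combined with the fast-shrinking level sizes, each level $i$ stores $\Oh[d]{n/m_i}$ objects, each holding $(m_i/m_{i+1})^2 \approx m_i^{2/3}$ pointers, giving roughly $\Oh[d]{n\log n / m_i^{1/3}}$ bits per level; since $m_i\ge\log^3 n$ for these levels this sums to $\Oh[d]{n}$. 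Lemma~\ref{lem:small_family_exp} (the bound you do cite) is used only at the single bottom layer, where $m_\ell\in\Theta_d(\log n)$ so that $2^{\Oh[d]{m_\ell}}\in o(n)$. Using Lemma~\ref{lem:small_family_exp} as the catalog bound at \emph{every} level, as you suggest, gives catalogs of size $2^{\Oh[d]{2^t}}$, which is astronomically too large for all but the smallest $t$.

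Two further points: (i) you do not need to extract or rebalance trees from the contraction sequence --- the paper simply uses regular equal-size divisions, and the only fact needed is that any submatrix of a $d$-twin-ordered matrix is $d$-twin-ordered; the ``standard rebalancing argument'' you gesture at is both unnecessary and unsubstantiated. (ii) The construction-time claim $\Oh[d]{n\log n\log\log n}$ is entirely unaddressed in your proposal; in the paper it is a separate section built on auxiliary geometric data structures (orthogonal point location, range emptiness, segment intersection) and a nontrivial algorithm that identifies representative zones per level.
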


The proof of Theorem~\ref{thm:main} proceeds roughly as follows. Consider a parameter $m$ that divides $n$ and a partition of the given matrix $M$ into $(n/m)^2$ {\em{zones}} --- square submatrices --- each of which is induced by $m$ consecutive rows and $m$ consecutive columns. Such a partition is called the {\em{ regular $(n/m)$-division}}. Even though the total number of zones in the regular $(n/m)$-division is $(n/m)^2$, one can use the connections between the notions of being twin-ordered and that of {\em{mixed minors}}, developed in~\cite{tww1}, to show that actually there will be only $\Oh[d]{n/m}$ {\em{different}} zones, in the sense that zones are considered equal if they have exactly the same values in corresponding entries. 

Our data structure describes the zones in the regular $(n/m)$-divisions of $M$ for $m$ ranging over a sequence of parameters $m_0>m_1>\ldots>m_\ell$ for $\ell=\Oh{\log\log n}$, where $m_j$ divides $m_i$ whenever $i\leq j$. Roughly speaking, we set $m_0=n$ and $m_{i}=m_{i-1}^{2/3}$ for $i\geq 1$, though for technical reasons we resort to the recursion $m_i=m_{i-1}/2$ once $m_i$ reaches the magnitude of~$\log^3 n$. Each different zone present in the regular $(n/m_{i})$-division is represented by a square matrix consisting of $(m_i/m_{i+1})^2$ pointers to representations of its subzones in the regular $(n/m_{i+1})$-division. When we reach $m_i<c_d\cdot \log n$ for some small constant $c_d$ depending on $d$, we stop the construction and set $\ell=i$. At this point the number of different zones present in the regular $(n/m_\ell)$-division of $M$ is strongly sublinear in $n$, because we have such an upper bound on the total number of different $(c_d\log n)\times (c_d\log n)$ binary matrices that are $d$-twin-ordered, and $n/m_\ell\leq c_d\log n$. Therefore, all those matrices can be stored in the representation explicitly within bitsize $\Oh[d]{n}$.

The query algorithm is very simple: just follow appropriate pointers through the $\Oh{\log \log n}$ levels of the data structure and read the relevant entry in a matrix stored explicitly in the last level. The analysis of bitsize is somewhat more complicated, but crucially relies on the fact that in the $i$th level, it suffices to represent only $\Oh[d]{n/m_i}$ different matrices that are zones in the $(n/m_i)$-division.
%
%
%
%
%

We remark that the idea of dividing the given matrix into a number of polynomially smaller zones, and describing them recursively, is also the cornerstone of the approach used by Chan for the orthogonal point location problem in~\cite{DBLP:journals/talg/Chan13}. However, when it comes to details, his construction is quite different and technically more complicated. For instance, in~\cite{DBLP:journals/talg/Chan13} the recursion can be applied not only on single zones, but also on wide or tall strips consisting of several zones, or even submatrices induced by non-contiguous subsets of rows and columns. The conceptual simplification achieved here comes from the strong properties implied by the assumption that the matrix is $d$-twin-ordered, which is stronger than the assumption used by Chan that the symbols $1$ in the matrix can be partitioned into $\Oh{n}$ rectangles.

\subparagraph*{Organization of the paper.}
In Section~\ref{sec:preliminaries} we define the twin-width of matrices formally and recall a~number of notions related to $d$-twin-ordered matrices.
In Section~\ref{sec:structural-properties}, we prove several new structural properties of those matrices.
These properties are exploited in Section~\ref{sec:main-structure} to derive an~efficient and compact representation of $d$-twin-ordered matrices, completing the non-constructive part of Theorem~\ref{thm:main}.
Finally, the~efficient algorithm for construction of the data structure is given in Section~\ref{sec:oracle-construction}.

\section{Preliminaries}
\label{sec:preliminaries}

For a positive integer $p$ we write $[p]=\{1,\ldots,p\}$. We use the $\Oh[d]{\cdot}$ notation to hide multiplicative factors depending on $d$.


\subparagraph*{Matrices, divisions, and zones.} A {\em{binary matrix}} is a matrix with entries in $\{0,1\}$; all matrices in this paper are binary unless explicitly stated. 

Let $\matrixx$ be a matrix. Note that rows of $\matrixx$ are totally ordered, and similarly for columns of $\matrixx$. A {\em{row block}} in $\matrixx$ is a non-empty set of rows of $\matrixx$ that are consecutive in this total order; {\em{column blocks}} are defined analogously. If $R$ is a row block and $C$ is a column block, then the {\em{zone}} induced by $R$ and $C$ is the rectangular submatrix of $\matrixx$ consisting of entries at the intersections of rows from $R$ and columns from $C$. In general, by a {\em{submatrix}} of $\matrixx$ we mean the zone induced by some row block and some column block.

A submatrix is {\em{constant}} if all its entries are the same. It is horizontal if all its columns are the same (equivalently, all rows are constant), and {\em{vertical}} if all its rows are the same (equivalently, all columns are constant). Note that thus, a constant submatrix is both horizontal and vertical. A submatrix that is neither horizontal nor vertical is called {\em{mixed}}.

A {\em{division}} of matrix $\matrixx$ is a pair $(\Rc,\Cc)$, where $\Rc$ is a partition of rows into row blocks and $\Cc$ is a partition of columns into column blocks. Note that such a division partitions $\matrixx$ into $|\Rc|\cdot |\Cc|$ zones, each induced by a pair of blocks $(R,C)\in \Rc\times \Cc$. We call them the {\em{zones}} of the division $(\Rc,\Cc)$. A \emph{$\matt$-division} is a division where $|\Rc|=|\Cc|=t$. 


\subparagraph*{Twin-width.}
Let $\matrixx$ be a matrix.
If $(\Rc,\Cc)$ is a division of $\matrixx$, then a {\em{contraction}} of $(\Rc,\Cc)$ is any division $(\Rc',\Cc')$ obtained from $(\Rc,\Cc)$ by either merging two consecutive row blocks $R_1,R_2\in \Rc$ into a single row block $R_1\cup R_2$, or merging two consecutive column blocks $C_1,C_2\in \Cc$ into a single column block $C_1\cup C_2$. A {\em{contraction sequence}} for $\matrixx$ is a sequence of divisions 
$$(\Rc_0,\Cc_0),(\Rc_1,\Cc_1),\ldots,(\Rc_p,\Cc_p),$$
such that
\begin{itemize}
 \item $(\Rc_0,\Cc_0)$ is the finest division where every row and every column is in a separate block;
 \item $(\Rc_p,\Cc_p)$ is the coarsest partition where all rows are in a single row block and all columns are in a single column block; and
 \item for each $i\in \{1,\ldots,p\}$, $(\Rc_i,\Cc_i)$ is a contraction of $\Rc_{i-1},\Cc_{i-1}$.
\end{itemize}
Note that thus, $p$ has to be equal to the sum of the dimension of $\matrixx$ minus $2$.

Finally, for a division $(\Rc,\Cc)$ of $\matrixx$, 
the {\em{error value}} of $(\Rc,\Cc)$ is the least $d$ such that in $(\Rc,\Cc)$, every row block and every column block contains at most $d$ non-constant zones.

With all these ingredients in place, we can formally define the twin-width of matrices. There are a few alternative definitions spanning through \cite{tww2,tww3,tww4,tww1}; here we follow the terminology from~\cite{tww1}.

\begin{definition}
 A binary matrix $\matrixx$ is {\em{$d$-twin-ordered}} if it admits a contraction sequence in which every division has error value at most $d$. The {\em{twin-width}} of a binary matrix $\matrixx$ is the least $d$ such that one can permute the rows and columns of $\matrixx$ so that the obtained matrix is $d$-twin-ordered.
\end{definition}

It is straightforward to see that the definition above is equivalent to the one given in the first paragraph of Section~\ref{sec:intro}.

Observe that in the above definition, certifying that a matrix is $d$-twin-ordered requires showing a suitable contraction sequence where all divisions have error value at most $d$. In our algorithmic results we do not require that such a contraction sequence is given on input, as we will exploit the assumption that the matrix is $d$-twin-ordered only through combinatorial properties provided by the connections with mixed minors, which we discuss next. In fact, as discussed~\cite{tww1}, it is currently unknown how to efficiently compute a contraction sequence witnessing that a matrix is $d$-twin-ordered.

\subparagraph*{Matrix minors and Marcus-Tardos Theorem.}
We need the following definitions of matrix minors, which intuitively are ``complicated substructures'' in matrices.

\begin{definition}
 Let $\matrixx$ be a binary matrix. A {\em{$\matt$-grid minor}} in $\matrixx$ is a $\matt$-division of $\matrixx$ where every zone contains at least one entry $1$. A {\em{$\matt$-mixed minor}} in $\matrixx$ is a $\matt$-division of $\matrixx$ where every zone is mixed. We say that $\matrixx$ is {\em{$\matt$-mixed-free}} if $\matrixx$ does not contain a $\matt$-mixed minor.
\end{definition}

The celebrated result of Marcus and Tardos asserts that if a matrix has a large density of entries $1$, then it contains a large grid minor.

\begin{theorem}[\cite{MarcusTardos}]\label{thm:MT}
For every $t\in \N$ there exists $c_t\in \N$ such that the following holds. Suppose $\matrixx$ is an $n\times m$ binary matrix with at least $c_t\cdot \max(n,m)$ entries $1$. Then $\matrixx$ has a $t$-grid minor. 
\end{theorem}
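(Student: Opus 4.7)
I would prove Theorem~\ref{thm:MT} by induction on $t$, following the original argument of Marcus and Tardos.

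The base case $t = 1$ is immediate: a $1$-grid minor is a $1$-division whose single zone contains at least one entry $1$, so any nonempty matrix has a $1$-grid minor and $c_1 = 1$ suffices.

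For the inductive step, suppose the statement holds for all parameters smaller than $t$. Given an $n \times m$ binary matrix $M$ with no $t$-grid minor, the goal is to bound its number of $1$'s linearly in $\max(n,m)$. I would partition the columns of $M$ into $k = \lceil m/t \rceil$ consecutive blocks $C_1, \ldots, C_k$ of width $t$ each. For each row $r$ and column block $C_i$ that contains at least one $1$ in row $r$, let the \emph{signature} $\sigma(r, i) \subseteq [t]$ record the set of column offsets within $C_i$ where row $r$ has a $1$. For each nonempty $\sigma \subseteq [t]$, form the auxiliary $n \times k$ matrix $M_\sigma$ in which the entry at $(r, i)$ is $1$ iff $\sigma(r, i) = \sigma$. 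Since the presence of a $1$ at $(r, i)$ in $M_\sigma$ implies that row $r$ has a $1$ in $C_i$, any $t$-grid minor of $M_\sigma$ lifts to a $t$-grid minor of $M$ by replacing each column block $B_b$ of the $M_\sigma$-minor with the union $\bigcup_{i \in B_b} C_i$ of actual columns in $M$. Hence every $M_\sigma$ is $t$-grid-minor-free, and the total number of $1$'s in $M$ equals $\sum_{\sigma \ne \emptyset} |\sigma| \cdot N_\sigma$, where $N_\sigma$ is the number of $1$'s in $M_\sigma$.

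To close the induction, I would perform the symmetric signature argument on rows: partition the rows into consecutive blocks of size $t$, define row signatures analogously, and form a doubly-compressed $(n/t) \times k$ matrix encoding the combined row- and column-signatures of each (row-block, column-block) pair. The structural properties of this doubly-compressed matrix allow an invocation of the inductive hypothesis at parameter $t - 1$, producing a linear upper bound on its density. Propagating the bound back through the signature encoding yields the desired $c_t \cdot \max(n, m)$ bound on the density of $M$, with $c_t$ growing roughly as $t^{O(t)}$.

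The main obstacle is the bookkeeping that turns the symmetric row- and column-signature argument into a genuinely linear (not merely polynomial) bound. A naive one-sided signature recursion only produces an $m^{1 + O(t/\log t)}$-type bound, since one loses a factor of $t \cdot 2^t$ per round of column compression. Exploiting simultaneous row and column compression, together with a careful pigeonhole step that extracts a grid minor whenever too many column-blocks share the same signature in too many rows, is precisely what allows the recursion to close at a linear constant — this is the quantitative insight at the heart of the Marcus–Tardos theorem.
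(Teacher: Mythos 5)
The paper does not prove Theorem~\ref{thm:MT}: it is quoted from Marcus and Tardos~\cite{MarcusTardos} and used as a black box, together with the sharpened constant $c_t \leq \frac{8}{3}(t+1)^2 2^{4t}$ of Cibulka and Kyn\v{c}l~\cite{cibulka2019better}. So there is no in-paper argument to compare against, and your sketch can only be judged on its own merits.

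You announce that you are ``following the original argument of Marcus and Tardos,'' but you are not, and the point where your sketch diverges from theirs is exactly where it becomes vague. Marcus and Tardos do \emph{not} induct on $t$. They fix $t$ and recurse on the matrix dimension: the matrix is partitioned into $t^2\times t^2$ blocks (not blocks of width $t$), a block is called \emph{wide} if it has $1$s in at least $t$ distinct columns and \emph{tall} if in at least $t$ distinct rows, and the crucial pigeonhole is that any block-column contains fewer than $t\binom{t^2}{t}$ wide blocks --- otherwise $t$ of them would share the same $t$-element set of occupied columns and one would read off a $t$-grid minor \emph{outright}, with no descent to $t-1$. Blocks that are neither wide nor tall carry at most $(t-1)^2$ ones, and the contracted $(n/t^2)\times(n/t^2)$ indicator matrix of nonempty blocks still avoids a $t$-grid minor, yielding a recursion of the form $f(n) \leq A_t\, n + (t-1)^2 f(n/t^2)$. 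Linearity follows because the dimension shrinks by $t^2$ per level while the multiplicative loss is only $(t-1)^2$, so the geometric series converges.

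The genuine gap in your proposal is the sentence ``the structural properties of this doubly-compressed matrix allow an invocation of the inductive hypothesis at parameter $t-1$.'' You never specify what this matrix is, why it avoids a $(t-1)$-grid minor, or how pulling a $c_{t-1}$-type bound back through two rounds of signature expansion (each multiplying the density by factors like $t\cdot 2^t$) could possibly yield a linear rather than superlinear bound --- and your closing paragraph concedes exactly this. Your signature families $M_\sigma$ for $\sigma\subseteq[t]$ are not the relevant pigeonhole either: what is needed is a pigeonhole over the $\binom{t^2}{t}$ possible $t$-subsets of occupied columns of a $t^2$-wide block, applied within a single block-column, producing a $t$-grid minor directly at parameter $t$. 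I would discard the induction on $t$ and instead carry out the recursion in $n$ with the wide/tall pigeonhole made explicit.
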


The currently best upper bound on $c_t$ is $\frac{8}{3}(t + 1)^2 2^{4t}$, due to Cibulka and Kyn\v{c}l \cite{cibulka2019better}. From now on we adopt the constant $c_t$ in the notation.

In~\cite{tww1}, Bonnet et al. used the result of Marcus and Tardos to show that, intuitively, large mixed minors are canonical obstacles for having bounded twin-width.

\begin{theorem}[\cite{tww1}]\label{thm:tww1-main}
 Let $\matrixx$ be a binary matrix. Then the following implications hold:
 \begin{itemize}
  \item If $\matrixx$ is $d$-twin-ordered, then $\matrixx$ is $(2d+2)$-mixed-free.
  \item If $\matrixx$ is $t$-mixed-free, then $\matrixx$ has twin-width at most $k_t$, where $k_t$ is a constant depending only on $t$. 
 \end{itemize}
\end{theorem}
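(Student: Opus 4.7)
The two implications are of quite different character and I would prove them separately. The first is a short pigeonhole argument via the contrapositive, while the second requires a genuine recursive construction of a contraction sequence and is substantially harder; the Marcus--Tardos theorem is the main combinatorial engine for the second part.

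\textbf{First implication.} Assume $\matrixx$ contains a $(2d+2)$-mixed minor, witnessed by a $(2d+2)$-division $(\Rc^*,\Cc^*)$ with every zone mixed, and suppose towards a contradiction that $\matrixx$ admits a contraction sequence $(\Rc_0,\Cc_0),\ldots,(\Rc_p,\Cc_p)$ of error at most~$d$. For every $j\in[2d+2]$, let $\alpha_j$ be the first time at which $R_j^*\in\Rc^*$ is entirely contained in a single row block of the current division, and define $\beta_k$ analogously for $\Cc^*$. Let $\tau$ be the first moment at which the current division simultaneously ``aligns'' with $(\Rc^*,\Cc^*)$, meaning each $R_j^*$ sits inside a single current row block and each $C_k^*$ sits inside a single current column block. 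Just before $\tau$, some $R_j^*$ (say) is split across two consecutive current row blocks that are about to be merged; the resulting merged row block intersects every column of $\Cc^*$ in a subzone of $R_j^*\times C_k^*$, which is mixed and therefore non-constant. Pigeonholing over the $2d+2$ column blocks of the aligned division produces a row block in $\Rc_{\tau-1}$ meeting at least $d+1$ non-constant zones, contradicting the error bound.

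\textbf{Second implication.} For a $t$-mixed-free matrix $\matrixx$ I would construct a contraction sequence of error at most $k_t$ by iterated \emph{coarsening passes}: at each stage, produce a coarsening of the current division that merges a constant fraction of consecutive row and column blocks while keeping the error bounded. To find such a merge, partition the current row blocks into consecutive pairs and call a pair \emph{unsafe} if merging it would raise the error above $k_t$; similarly for columns. The key claim is that unsafe pairs cannot be too numerous, since an abundance of them yields a large submatrix witnessing many non-constant zones, and an application of Theorem~\ref{thm:MT} to the indicator matrix of non-constantness extracts a $t$-mixed minor from these zones, contradicting the assumption on $\matrixx$. After $\Oh{\log n}$ passes the division reaches the coarsest partition, giving the required contraction sequence; the constant $k_t$ depends on $t$ and on the Marcus--Tardos constant $c_t$.

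\textbf{Main obstacle.} The hard direction is the second. One must control the error value \emph{globally} and for row and column merges \emph{simultaneously}, since a locally innocent merge may introduce non-constant zones in a distant block through cascading effects: merging two row blocks affects the count of non-constant zones inside every column block, and vice versa. Organising these merges so that the error stays bounded at every intermediate division, and quantifying the Marcus--Tardos extraction in a way that yields a genuinely $t$-mixed minor (rather than a weaker grid structure), is the technical heart of the proof; this is where the constant $k_t$ blows up relative to $t$.
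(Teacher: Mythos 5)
This theorem is quoted from~\cite{tww1}; the paper does not prove it, so I assess your sketch on its own terms. Your first implication has a genuine gap. You take $\tau$ to be the \emph{last} alignment time, after which every $R_j^*$ and every $C_k^*$ is absorbed into a single block, and then claim that pigeonholing over the $2d+2$ column blocks yields $d+1$ non-constant zones in a row block of $\Rc_{\tau-1}$. But by that time several of the $C_k^*$ may already sit inside the \emph{same} column block $D$ of $\Cc_\tau$; in the extreme all $2d+2$ of them could share one $D$, so the witnesses you collect are all the single zone $(B,D)$ and the count collapses to one. (Also, the zone you call a ``subzone of $R_j^*\times C_k^*$'' is a superzone.) The repair is to take $\tau$ at the \emph{other} end: let $\tau$ be the first moment a block of $(\Rc_\tau,\Cc_\tau)$ fully contains some $R_j^*$ or some $C_k^*$; without loss of generality a row merge then creates $B\supseteq R_{j_0}^*$ while no column block yet contains any $C_k^*$. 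Since column blocks are intervals and none contains a whole $C_k^*$, each column block meets at most two consecutive $C_k^*$'s. For each $k$, the corner inside the mixed submatrix $R_{j_0}^*\times C_k^*$ lies within $B$ and has at least one non-constant column, so it forces a non-constant zone $(B,D)$ with $D$ meeting $C_k^*$; as each $D$ is charged by at most two indices $k$, you obtain at least $(2d+2)/2=d+1$ \emph{distinct} non-constant zones in row block $B$, contradicting error at most $d$. This is exactly where the constant $2d+2$ comes from, and your choice of the latest rather than the earliest such moment is what loses it.

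For the second implication, iterated coarsening passes with Marcus--Tardos bounding the density of unsafe merges is the right spirit, but, as you acknowledge, the sketch elides the actual difficulties: keeping the error bounded after \emph{every individual} merge inside a pass (not only at the end of a pass), and upgrading a grid minor in the ``non-constantness'' indicator to a genuine $t$-mixed minor of $M$. The argument in~\cite{tww1} is considerably more involved. Since the present paper relies only on the first implication, that is the direction that needed to be airtight here.
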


Note that the conclusion of the second implication of Theorem~\ref{thm:tww1-main} is only a bound on the twin-width: the matrix might still need to be permuted to be $k_t$-twin-ordered. In this work we will only rely on the first implication of Theorem~\ref{thm:tww1-main}: being $d$-twin-ordered implies $(2d+2)$-mixed-freeness.

We now derive some simple properties of mixed-free matrices that will be used throughout the paper. First, in a $t$-mixed-free matrix every $\ell$-division has only $\Oh[t]{\ell}$ mixed zones.

\begin{lemma}\label{lem:mixedzones}
Let $\matrixx$ be a $t$-mixed free matrix, and let $(\Rc,\Cc)$ be an $\ell$-division of $\matrixx$, for some integer $\ell$. Then $(\Rc,\Cc)$ has at most $c_t \cdot \ell$ mixed zones. 
\end{lemma}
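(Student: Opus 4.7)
The plan is to apply the Marcus--Tardos theorem (Theorem~\ref{thm:MT}) to an auxiliary matrix that records which zones of $(\Rc,\Cc)$ are mixed. Specifically, let $\Rc=\{R_1,\ldots,R_\ell\}$ and $\Cc=\{C_1,\ldots,C_\ell\}$ be enumerated in order, and define an $\ell\times\ell$ binary matrix $N$ by setting $N[i,j]=1$ exactly when the zone of $(\Rc,\Cc)$ induced by $(R_i,C_j)$ is mixed in $\matrixx$. The number of mixed zones is precisely the number of $1$-entries of $N$.

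Suppose for contradiction that this count exceeds $c_t\cdot\ell$. Then Theorem~\ref{thm:MT} applied to $N$ (with $n=m=\ell$) yields a $t$-grid minor in $N$: a $t$-division $(\Rc',\Cc')$ of $N$ in which every zone contains at least one entry $1$.

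Next, I would lift this $t$-division back to $\matrixx$. Each block in $\Rc'$ is a set of consecutive rows of $N$, i.e., a set of consecutive blocks from $\Rc$; taking the union of those blocks gives a contiguous row block of $\matrixx$, and doing this for all blocks in $\Rc'$ and $\Cc'$ produces a $t$-division $(\widehat{\Rc},\widehat{\Cc})$ of $\matrixx$. Every zone of $(\widehat{\Rc},\widehat{\Cc})$ contains, as a submatrix, some zone of $(\Rc,\Cc)$ that is mixed (by the property of the $t$-grid minor in $N$).

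The only remaining step is the simple observation that mixedness is preserved upward: if a submatrix $S$ of a matrix $T$ is mixed, then $T$ is mixed as well. Indeed, if $T$ were horizontal (all columns equal), then any submatrix of $T$ would also have all columns equal, and analogously for vertical; so the existence of a non-horizontal, non-vertical submatrix forces $T$ to be non-horizontal and non-vertical. Consequently every zone of $(\widehat{\Rc},\widehat{\Cc})$ is mixed, meaning $(\widehat{\Rc},\widehat{\Cc})$ is a $t$-mixed minor in $\matrixx$, contradicting $t$-mixed-freeness. I do not expect a substantial obstacle here; the main content is just the Marcus--Tardos-style reduction, and the monotonicity of mixedness under taking supermatrices.
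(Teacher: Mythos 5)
Your proposal is correct and takes essentially the same approach as the paper: build the indicator matrix of mixed zones, apply Marcus--Tardos to conclude that too many mixed zones would yield a $t$-grid minor, and lift that grid minor back to a $t$-mixed minor in $\matrixx$ using monotonicity of mixedness under supermatrices. The paper states the lifting step more tersely, but the argument is the same.
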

\begin{proof}
Construct an $\ell \times \ell$ matrix $A$ by taking the division $(\Rc,\Cc)$ and substituting each mixed zone with a single entry $1$, and each non-mixed zone with a single entry $0$. Observe that $A$ may have at most $c_t\cdot \ell$ entries $1$, for otherwise, by Theorem \ref{thm:MT}, $A$ would contain a $t$-grid minor, which would correspond to a $t$-mixed minor in $\matrixx$. Hence $(\Rc,\Cc)$ may have at most $c_t\cdot \ell$ mixed zones.
\end{proof}

For the next observations we need the following notion. A {\em{corner}} in a matrix $\matrixx$ is simply a mixed $2\times 2$ submatrix which is an intersection of two consecutive rows with two consecutive columns. The following observation was pivotally used in the proof of Theorem~\ref{thm:tww1-main} in~\cite{tww1}.


\begin{lemma}[\cite{tww1}]
\label{lem:mixed-has-corner}
A matrix is mixed if and only if it contains a corner. 
\end{lemma}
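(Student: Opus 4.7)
The forward direction is immediate: if $\matrixx$ contains a corner at consecutive rows $r,r+1$ and consecutive columns $c,c+1$, then rows $r$ and $r+1$ of $\matrixx$ already disagree on one of these two columns, so $\matrixx$ is not vertical, and symmetrically columns $c$ and $c+1$ disagree on one of these two rows, so $\matrixx$ is not horizontal. Hence $\matrixx$ is mixed.

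For the converse, the plan is to induct on the sum of the number of rows and columns of $\matrixx$, reducing to a $2 \times 2$ window via a trimming argument. The base case $2 \times 2$ is immediate, since a mixed $2\times 2$ matrix is itself a corner. The inductive step rests on the following \emph{trimming property}: if $\matrixx$ is mixed and has at least three rows, then the submatrix obtained by deleting either the first or the last row is still mixed, and symmetrically for columns. Once this is in hand, one iteratively trims boundary rows and columns of $\matrixx$ down to a mixed $2\times 2$ submatrix. Since trimming only removes extremal rows and columns, this submatrix still sits on two consecutive rows and two consecutive columns of the original $\matrixx$, and is therefore a corner.

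The core of the argument is thus the trimming property itself, which I would prove by contradiction. Assume $\matrixx$ has $n \geq 3$ rows and that both the submatrix $\matrixx_-$ obtained by removing the last row and the submatrix $\matrixx^-$ obtained by removing the first row are non-mixed, hence each is horizontal or vertical. This yields four cases. If both are horizontal, then any two columns of $\matrixx$ agree on rows $1,\ldots,n-1$ and on rows $2,\ldots,n$, hence on every row, making $\matrixx$ horizontal. If both are vertical, then since $n \geq 3$ the overlapping row ranges $\{1,\ldots,n-1\}$ and $\{2,\ldots,n\}$ force all rows of $\matrixx$ to coincide, making $\matrixx$ vertical. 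The mixed case---one of $\matrixx_-,\matrixx^-$ horizontal and the other vertical---is the main technical point: by unpacking the definitions, rows $2,\ldots,n-1$ must equal a single row vector $v$, which is then forced to be constant, and the remaining extremal row is shown to be constant as well, so all columns of $\matrixx$ become identical and $\matrixx$ is horizontal. Each case contradicts $\matrixx$ being mixed, completing the trimming property and thus the lemma. The hard part will be this last mixed case, though it amounts to careful bookkeeping of the definitions rather than any deep idea.
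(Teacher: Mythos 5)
The paper states this lemma as a citation to~\cite{tww1} and does not provide its own proof, so there is no in-paper argument to compare against. Your proof is correct. The forward direction is immediate, as you observe. For the converse, the trimming lemma is verified cleanly in all four cases (both trims horizontal, both vertical, and the two mixed cases), and the hypothesis of at least three rows is used exactly where needed, namely to ensure that the overlap $\{2,\ldots,n-1\}$ of the two trimmed row ranges is nonempty, which forces the common row vector $v$ to be constant in the mixed cases. The induction on the sum of the dimensions then terminates at a mixed $2\times 2$ submatrix that, having been obtained by repeatedly removing only extremal rows and columns, occupies two consecutive rows and two consecutive columns of the original matrix and is therefore a corner.
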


The next lemma is essentially proven in \cite{tww1} but never stated explicitly. So we include a proof for completeness.

\begin{lemma}[implicit in \cite{tww1}]
\label{lem:corner-bound}
A $t$-mixed-free $n\times n$ matrix contains at most $2 c_t (n+2)$ corners.
\end{lemma}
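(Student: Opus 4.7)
The plan is to introduce an auxiliary $(n-1)\times(n-1)$ binary matrix $B$ in which $B[i,j]=1$ iff the $2\times 2$ submatrix of $\matrixx$ on rows $\{i,i+1\}$ and columns $\{j,j+1\}$ is a corner. Then the number of corners of $\matrixx$ equals the number of $1$-entries of $B$, and the goal becomes to upper bound the latter via Marcus--Tardos (Theorem~\ref{thm:MT}).

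A direct application fails, because a $t$-grid minor of $B$ does not obviously lift to a $t$-mixed minor of $\matrixx$: two consecutive row blocks of such a minor would correspond to $\matrixx$-row-blocks meeting in a common row, since a corner at $B[i,j]$ and one at $B[i+1,j']$ both use $\matrixx$'s row $i+1$. To avoid this collision I would split the $1$-entries of $B$ into four classes by the parities of their coordinates, obtaining submatrices $B^{pq}$ for $p,q\in\{0,1\}$, where $B^{pq}$ collects the entries $(i,j)$ with $i\equiv p$ and $j\equiv q \pmod 2$. Each $B^{pq}$ has at most $\lceil(n-1)/2\rceil$ rows and columns.

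The crucial step is to show that each $B^{pq}$ is $t$-grid-minor-free. Given a hypothetical $t$-grid minor of $B^{pq}$ with row blocks $R_1,\ldots,R_t$ and column blocks $C_1,\ldots,C_t$, I would construct a $t$-mixed minor of $\matrixx$ as follows. Any two indices in the union of the $R_k$'s differ by at least $2$ when viewed inside $B$, so the pairs $\{i,i+1\}$ of $\matrixx$-rows used by corners at distinct indices are pairwise disjoint. Inflating each $R_k$ to the contiguous range of $\matrixx$-rows spanned by the corners whose $B$-index lies in $R_k$, and then extending the outermost blocks to cover any remaining top and bottom rows of $\matrixx$, produces a partition of $\matrixx$'s rows into $t$ consecutive row blocks; symmetrically for columns. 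Since each resulting zone still contains a corner picked from the corresponding grid-minor zone of $B^{pq}$, Lemma~\ref{lem:mixed-has-corner} guarantees that every zone is mixed, so this is a $t$-mixed minor of $\matrixx$, contradicting the $t$-mixed-freeness hypothesis.

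Finally, Theorem~\ref{thm:MT} applied to each of the four submatrices gives at most $c_t\cdot\lceil(n-1)/2\rceil\leq c_t(n+1)/2$ ones in each $B^{pq}$, and summing the four contributions yields at most $2c_t(n+1)\leq 2c_t(n+2)$ corners in $\matrixx$, as required. The main technical care will be in the third step, in verifying that the inflated and extended row/column blocks are all non-empty and genuinely form a $t$-division of $\matrixx$; once this is carefully set up, the appeal to Lemma~\ref{lem:mixed-has-corner} and to Marcus--Tardos is essentially mechanical.
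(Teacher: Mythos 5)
Your proof is correct and takes essentially the same route as the paper. The auxiliary matrices $B^{pq}$ you construct are exactly the mixed-zone indicator matrices of the four parity-shifted $2\times 2$-divisions of $\matrixx$ used in the paper's proof, so your parity split, the lifting of a grid minor of $B^{pq}$ to a mixed minor of $\matrixx$, and the appeal to Marcus--Tardos reproduce inline the paper's four applications of Lemma~\ref{lem:mixedzones} rather than citing that lemma as a black box.
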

\begin{proof}
Let $\matrixx$ be a $t$-mixed-free $n \times n$ matrix.
Consider the $\ceil{n/2}$-division $(\Rc,\Cc)$ of $\matrixx$, in which every row block consists of rows with indices $2i-1$ and $2i$ for some $i\in \{1,\ldots,\floor{n/2}\}$, possibly except the last block that consists only of row $n$ in case $n$ is odd, and similarly for column blocks. By Lemma~\ref{lem:mixedzones}, $(\Rc,\Cc)$ has at most $c_t\ceil{n/2}\leq c_t(n/2+1)$ mixed zones, which implies that $\matrixx$ has at most $c_t(n/2+1)$ corners in which the bottom-right entry is in the intersection of an even-indexed row and an even-indexed column. Call such corners of {\em{type $00$}}; corners of types $01$, $10$, and $11$ are defined analogously.
By suitably modifying the pairing of rows and columns in $(\Rc,\Cc)$, we can analogously prove that the number of corners of each of the remaining three types is also bounded by $c_t(n/2+1)$. Hence, in total there are at most $4c_t(n/2+1)=2c_t(n+2)$ corners in $\matrixx$.
\end{proof}

Next, we will need a variant of Lemma~\ref{lem:mixedzones} that focuses on mixed borders between neighboring zones. Here, two different zones in a division $(\Rc,\Cc)$ are called {\em{adjacent}} if they are either in the same row block and consecutive column blocks, or in the same column block and consecutive row blocks. A {\em{mixed cut}} in $(\Rc,\Cc)$ is a pair of adjacent zones such that there is a~corner that crosses the boundary between them, i.e., has two entries in each of them. A \emph{split corner} in $(\Rc,\Cc)$ is a corner intersecting four different zones, i.e., it has an entry in four different zones. 

The proof of the following observation is again essentially present in~\cite{tww1}.

%


\begin{lemma}
\label{lem:mixed-cuts}
Let $\matrixx$ be a $t$-mixed free matrix, and let $(\Rc,\Cc)$ be an $\ell$-division of $\matrixx$, for some integer $\ell$. Then $(\Rc,\Cc)$ has at most $c_t\cdot (\ell+2)$ mixed cuts and at most $2 c_t \cdot (\ell+1)$ split corners.
\end{lemma}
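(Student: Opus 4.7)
The plan is to reduce both bounds to Lemma~\ref{lem:mixedzones} by passing to carefully chosen row- and/or column-restricted versions of $\matrixx$, in which mixed zones of a suitable division correspond exactly to the objects being counted. A preliminary observation I will use throughout is that removing rows and/or columns from $\matrixx$ preserves $t$-mixed-freeness: any $t$-mixed minor of the restricted matrix extends to a $t$-mixed minor of $\matrixx$ by distributing the removed rows/columns into adjacent blocks, since a zone that is ``not horizontal'' (some row is non-constant) and ``not vertical'' (some column is non-constant) remains so after enlarging it with more rows or columns. I will also need a straightforward rectangular variant of Lemma~\ref{lem:mixedzones}, obtained by the same Marcus--Tardos argument, stating that a $t$-mixed-free matrix with an $a\times b$ division has at most $c_t\cdot\max(a,b)$ mixed zones.

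For mixed cuts, by the left--right/top--bottom symmetry it suffices to bound the vertical mixed cuts (those between column-adjacent zones sharing a row block); the horizontal case is analogous. For $i\in[\ell-1]$, let $c_i^\star$ denote the last column of $C_i$ and $c_{i+1}^\circ$ the first column of $C_{i+1}$; these are consecutive columns of $\matrixx$. Unwinding the definition and applying Lemma~\ref{lem:mixed-has-corner}, the vertical mixed cut between $R_j\times C_i$ and $R_j\times C_{i+1}$ is present iff the $|R_j|\times 2$ submatrix of $\matrixx$ on rows $R_j$ and columns $\{c_i^\star,c_{i+1}^\circ\}$ is mixed. I then form the matrix $\matrixx'$ obtained from $\matrixx$ by keeping all rows but only the columns $\{c_i^\star,c_{i+1}^\circ:i\in[\ell-1]\}$, and equip it with the division whose row blocks are $\Rc$ and whose column blocks are the pairs $\{\{c_i^\star,c_{i+1}^\circ\}:i\in[\ell-1]\}$. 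This is an $\ell\times(\ell-1)$ division of $\matrixx'$ whose zones are exactly the two-column submatrices above; by the preliminary observation $\matrixx'$ is $t$-mixed-free, so the rectangular Lemma~\ref{lem:mixedzones} bounds the number of mixed zones, and hence the number of vertical mixed cuts, by $c_t\cdot\ell$. Combining with the symmetric bound on horizontal mixed cuts gives the claimed $\Oh[t]{\ell}$ total, matching $c_t(\ell+2)$ up to the constant.

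For split corners I apply the same idea in both dimensions at once. Let $r_j^\star$ and $r_{j+1}^\circ$ be the last row of $R_j$ and the first row of $R_{j+1}$. Let $\matrixx''$ be the matrix obtained from $\matrixx$ by keeping only the rows $\{r_j^\star,r_{j+1}^\circ:j\in[\ell-1]\}$ and the columns $\{c_i^\star,c_{i+1}^\circ:i\in[\ell-1]\}$, equipped with the $(\ell-1)\times(\ell-1)$ division whose row/column blocks are the natural pairs at each boundary. Every zone is a $2\times 2$ submatrix of $\matrixx$ on two consecutive rows and two consecutive columns that straddle the boundaries $(j,i)$, and by Lemma~\ref{lem:mixed-has-corner} it is mixed iff it is a corner --- which, by the boundary-straddling property, is precisely the split corner at $(j,i)$. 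Since $\matrixx''$ is $t$-mixed-free, the rectangular Lemma~\ref{lem:mixedzones} bounds the number of split corners by $c_t\cdot(\ell-1)$, at most $2c_t(\ell+1)$ as required.

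The main technical obstacle is handling blocks of size~$1$: if some $|C_i|=1$, then $c_i^\star=c_i^\circ$ and the two boundary-pair blocks flanking $C_i$ in the construction of $\matrixx'$ or $\matrixx''$ overlap at that single column, so the column partition fails. I plan to resolve this locally by fusing each such overlapping pair of two-element blocks into a single three-element block $\{c_{i-1}^\star,c_i,c_{i+1}^\circ\}$ and checking that a corner still lives inside it exactly when a mixed cut (or split corner) is present at one of the two affected boundaries; this costs at most a constant factor in the accounting, which is absorbed into $c_t$. Equivalently, one can first merge every singleton block into a neighbor (reducing $\ell$ by $1$) and charge the handful of cuts lost at each such merge separately, which contributes only an additive $\Oh[t]{\ell}$ term.
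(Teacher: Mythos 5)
Your approach is genuinely different from the paper's: rather than coarsening the division $(\Rc,\Cc)$ and applying Lemma~\ref{lem:mixedzones} to the same matrix, you restrict $M$ to the rows/columns adjacent to block boundaries and apply a rectangular version of Lemma~\ref{lem:mixedzones} to the restricted matrix. Both preliminary observations you rely on are correct: $t$-mixed-freeness is inherited by row/column deletion (non-horizontality and non-verticality of a zone survive enlarging it), and the Marcus--Tardos argument in Lemma~\ref{lem:mixedzones} indeed gives the rectangular bound $c_t\max(a,b)$ since Theorem~\ref{thm:MT} is already stated for rectangular matrices. The characterizations of a mixed cut and of a split corner as corners in the appropriate two-column/two-row submatrices are also correct. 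In the case where every block has size at least $2$, the argument is clean and matches the claimed bounds up to constants.

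The gap is in the handling of singleton blocks, and I do not think either of your two proposed fixes closes it. Fusing overlapping pairs: after fusion you no longer have a bijection between column blocks of $M'$ and column boundaries of $(\Rc,\Cc)$. A run of $p$ consecutive singleton blocks $C_i,\ldots,C_{i+p-1}$ forces you to fuse the $p{+}1$ boundary-pairs into one block of $p{+}2$ columns, which covers $p{+}1$ boundaries. A single mixed zone in this fused block witnesses at least one of those $p{+}1$ mixed cuts, but says nothing about how many --- so the number of mixed cuts can exceed the number of mixed zones by a factor of $p{+}1$, which is not absorbed into $c_t$. The ``cost'' is not a constant; it depends on the block structure. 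The alternative fix --- merging singletons into neighbours and ``charging the handful of cuts lost at each such merge separately'' --- is either circular (bounding those lost cuts is exactly a sub-case of what you are proving) or quantitatively too weak: there can be $\Omega(\ell)$ singleton column blocks, each of whose boundary can carry up to $\ell$ mixed cuts (one per row block), so the naive charge is $\Omega(\ell^2)$.

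The clean way to rescue your construction is a parity split, which incidentally is exactly the mechanism behind the paper's proof. Partition the column boundaries into odd-indexed and even-indexed, and build two restricted matrices $M'_{\mathrm{odd}}$ and $M'_{\mathrm{even}}$, each keeping only the boundary columns of one parity class. Within a single parity class the pairs $\{c_i^\star,c_{i+1}^\circ\}$ involve pairwise disjoint sets of blocks (boundaries $i$ and $i+2$ use columns from $C_i,C_{i+1}$ and $C_{i+2},C_{i+3}$ respectively), so they never overlap, even if all blocks are singletons. Each of the two matrices then admits a clean $\ell\times\lceil(\ell-1)/2\rceil$ division whose mixed zones are in bijection with the vertical mixed cuts at boundaries of that parity. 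The same two-dimensional parity split handles split corners with four sub-constructions. Once you do this, your proof becomes a faithful dual rephrasing of the paper's: the paper merges blocks $2i{-}1,2i$ (resp.\ $2i,2i{+}1$) to form the divisions $({\cal R}^{00},{\cal C}^{00})$ and $({\cal R}^{11},{\cal C}^{11})$ of the full matrix, whereas you delete all columns except the ones adjacent to the odd (resp.\ even) boundaries; the counting then proceeds identically. The paper's version has the advantage of never leaving the original matrix and therefore needing neither the inheritance observation nor the rectangular variant of Lemma~\ref{lem:mixedzones}, but the two are essentially the same argument once the parity split is in place.
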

\begin{proof}
Let $({\cal R}^{00},{\cal C}^{00})$ be the division obtained from $(\Rc,\Cc)$ by merging the row blocks indexed $2i-1$ and $2i$ into a single row block, and merging the column blocks indexed $2i-1$ and $2i$ into a single column block, for each $i\in \{1,\ldots,\floor{\ell/2}\}$. Obtain divisions $({\cal R}^{10},{\cal C}^{10})$, $({\cal R}^{01},{\cal C}^{01})$, and $({\cal R}^{11},{\cal C}^{11})$ in a similar manner, where if the first number in the superscript is $1$ then we merge row blocks $2i$ and $2i+1$ for each $i\in \{1,\ldots,\ceil{\ell/2}-1\}$ instead, and if the second number in the superscript is $1$ then we merge column blocks $2i$ and $2i+1$ for each $i\in \{1,\ldots,\ceil{\ell/2}-1\}$ instead.

Observe that for every mixed cut of $(\Rc,\Cc)$, the two zones in the mixed cut end up in the same zone in either $({\cal R}^{00},{\cal C}^{00})$ or in $({\cal R}^{11},{\cal C}^{11})$, rendering this zone mixed. However, by Lemma~\ref{lem:mixedzones}, $({\cal R}^{00},{\cal C}^{00})$ and $({\cal R}^{11},{\cal C}^{11})$ have at most $c_t\cdot (\ell/2+1)$ mixed zones. It follows that $(\Rc,\Cc)$ has at most $2c_t\cdot (\ell/2+1)=c_t\cdot (\ell+2)$ mixed cuts. The bound on the number of split corners follows from the same argument combined with the observation that every split corner in $(\Rc,\Cc)$ is entirely contained in a single zone of exactly one of divisions $({\cal R}^{00},{\cal C}^{00})$, $({\cal R}^{10},{\cal C}^{10})$, $({\cal R}^{01},{\cal C}^{01})$, and $({\cal R}^{11},{\cal C}^{11})$.
\end{proof}

\subparagraph*{Partitioning into rectangles.}
We conclude with another observation about twin-ordered matrices: they can be decomposed into a small number of rectangles. Formally, for a binary matrix $\matrixx$, a {\em{rectangle decomposition}} of $\matrixx$ is a set $\Kc$ of pairwise disjoint rectangular submatrices (i.e., zones induced by some row block and some column block) such that every submatrix in $\Kc$ is entirely filled with $1$s and there is no entry $1$ outside the submatrices in $\Kc$. The following lemma is stated and proved in the graph setting in~\cite{tww3}; we adapt the proof here to the matrix setting.

\begin{lemma}\label{lem:rectangles}
 Let $\matrixx$ be an $n\times n$ binary matrix that is $d$-twin-ordered. Then $\matrixx$ admits a rectangle decomposition $\Kc$ with $|\Kc|\leq d(2n-2)+1$.
\end{lemma}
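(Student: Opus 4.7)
Proposal. The plan is to construct $\Kc$ by harvesting maximal constant-$1$ zones directly from a contraction sequence witnessing that $\matrixx$ is $d$-twin-ordered. Let $(\Rc_0,\Cc_0),\ldots,(\Rc_p,\Cc_p)$ be such a sequence, with $p=2n-2$ and every division of error value at most~$d$. For every $i\in\{0,\ldots,p-1\}$, I would add to $\Kc$ each zone $Z$ of $(\Rc_i,\Cc_i)$ that is constant-$1$ and whose enclosing zone $Z'$ in $(\Rc_{i+1},\Cc_{i+1})$ is strictly larger and non-constant. Finally, if the single zone of $(\Rc_p,\Cc_p)$, namely the whole matrix, is constant-$1$, I would add it as well.

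To check validity of $\Kc$ as a rectangle decomposition, I would track each $1$-entry $e$ through its chain of enclosing zones $Z_0(e)\subseteq\ldots\subseteq Z_p(e)$. Since $Z_0(e)$ is a singleton containing $1$, and non-constancy is monotone along this chain (once some $Z_i(e)$ contains both a $0$ and a $1$, so do all larger zones), there is a unique maximal index $i^\star(e)$ with $Z_{i^\star(e)}(e)$ constant-$1$. By construction this zone lies in $\Kc$ and covers $e$, giving coverage. Disjointness follows because any $Z\in\Kc$ containing $e$ must coincide with $Z_{i^\star(e)}(e)$: a picked zone at any later step $j > i^\star(e)$ containing $e$ would force $Z_j(e)$ to be constant-$1$, contradicting monotonicity.

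For the counting bound, consider a contraction step at level $i\to i+1$ merging row blocks $R_a, R_{a+1}$ into $R$ (the column case is symmetric). Picks from step $i$ are precisely those zones $R_a\times C$ or $R_{a+1}\times C$ that are constant-$1$ while the merged zone $R\times C$ is non-constant. The key observation is that if $R\times C$ is non-constant, at most one of $R_a\times C, R_{a+1}\times C$ can be constant-$1$; otherwise the union $R\times C$ would itself be constant-$1$. Hence each non-constant zone of row block $R$ in $(\Rc_{i+1},\Cc_{i+1})$ contributes at most one rectangle picked at step~$i$, and the error bound at step $i+1$ caps their number by $d$. Summing $d$ picks per step across $p=2n-2$ steps, plus at most one rectangle from the final division, yields $|\Kc|\leq d(2n-2)+1$. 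The only potentially subtle point is the disjointness argument, which hinges on the monotonicity observation above; once that is in place, the remaining steps are routine verifications against the error bound.
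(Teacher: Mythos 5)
Your proposal is correct and takes essentially the same approach as the paper: you harvest the inclusion-maximal constant-$1$ zones from the laminar family of zones appearing along the contraction sequence, and bound their number by charging each pick at step $i$ to a non-constant zone of the newly merged row or column block, which the error-value bound caps at $d$ per step. The only small gap is in the disjointness argument, where you rule out picks at steps $j > i^\star(e)$ but should also observe that a pick at $j < i^\star(e)$ is impossible because $Z_{j+1}(e)$ is still constant-$1$ (as $j+1 \leq i^\star(e)$), so the enclosing zone would fail the non-constancy requirement of the pick.
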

\begin{proof}
 Let $(\Rc_0,\Cc_0),\ldots,(\Rc_{2n-2},\Cc_{2n-2})$ be a contraction sequence for $\matrixx$ with error value at most $d$. Let $\Zc_i$ be the set of zones of $(\Rc_i,\Cc_i)$, and let
 $$\Zc = \bigcup_{i=0}^{2n-2} \Zc_i.$$
 Note that $\Zc$ is a {\em{laminar}} family, that is, every two submatrices in $\Zc$ are either disjoint or one is contained in the other.
 
 Let $\Kc$ be the subfamily of $\Zc$ consisting of those submatrices that are entirely filled with $1$s, and are inclusion-wise maximal in $\Zc$ subject to this property. Note that every entry $1$ in $\matrixx$ is contained in some member of $\Kc$, for the zone of $(\Rc_0,\Cc_0)$ in which this entry is contained is a $1\times 1$ submatrix entirely filled with $1$s.
 Since $\Zc$ is laminar, it follows that $\Kc$ is a rectangle decomposition of $\matrixx$. So it remains to argue that $|\Kc|\leq d(2n-2)+1$.
 
 Consider any $A\in \Kc$ and let $i$ be the largest index such that $A\in \Zc_i$. We may assume that $i<2n-2$, for otherwise the matrix $\matrixx$ is entirely filled with $1$s and the postulated claim is trivial. By maximality, $A$ is contained in a non-constant zone $B\in \Zc_{i+1}$ that resulted from merging $A$ with another adjacent zone $A'\in \Zc_i$, which is not entirely filled with $1$s. In particular, $B$ lies in the unique row block or column block of $(\Rc_{i+1},\Cc_{i+1})$ that resulted from merging two row blocks or two column blocks of $(\Rc_{i},\Cc_{i})$. There can be at most $d$ non-constant zones in this row/column block of $(\Rc_{i+1},\Cc_{i+1})$, and $B$ is one of them. We infer that $i$ can be the largest index satisfying $A\in \Zc_i$ for at most $d$ different submatrices $A\in \Kc$. Since this applies to every index $i\in \{0,1,\ldots,2n-3\}$, we conclude that $|\Kc|\leq d(2n-2)$.
\end{proof}

Observe that Lemma~\ref{lem:rectangles} provides a way to encode an $n\times n$ $d$-twin-ordered matrix in $\Oh[d]{n\log n}$ bits: one only needs to specify the vertices of the submatrices of a rectangle decomposition of size $\Oh[d]{n}$. The proof is also effective, in the sense that given a suitably represented contraction sequence one can compute the obtained decomposition $\Kc$. To abstract away the nuances of representing contraction sequences, throughout this paper we assume that $d$-twin-ordered matrices are provided on input through suitable rectangle decompositions.

\section{Structural properties of divisions}
\label{sec:structural-properties}

Before we proceed to constructing the promised compact representation, we need to describe some new combinatorial properties of twin-ordered matrices.
For the remainder of this section, we fix $d \in \N$ and consider a~matrix $M$ that is $d$-twin-ordered.
In particular, by Theorem~\ref{thm:tww1-main}, $M$ is $(2d+2)$-mixed-free.

\subparagraph*{Strips.}
We begin by considering non-constant vertical and horizontal zones of a~given division of $M$.
We will show that these zones can be grouped into $\Oh[d]{t}$ {\em{strips}} that again are vertical or horizontal, respectively.
This partitioning is formalized as follows.

\begin{definition}
Let $(\Rc,\Cc)$ be a division of a matrix $M$. A~\emph{vertical strip} in $(\Rc,\Cc)$ is an~inclusion-wise maximal set of non-constant vertical zones of $\Dc$ that are contained in the same column block of $(\Rc,\Cc)$, span a contiguous interval of row blocks, and whose union is again a vertical submatrix. {\em{Horizontal strips}} are defined analogously.
\end{definition}

\begin{figure}[h]
  \centering
  \small
  \newcommand{\g}{\cellcolor{blue!12!white}}
  \newcommand{\x}{\cellcolor{blue!26!white}}
  \newcommand{\y}{\cellcolor{yellow!25!white}}
  \newcommand{\z}{\cellcolor{yellow!60!white}}
  \begin{tabular}{ccc|cc|cc|cccc}
  \y 1 & \y 1 & \y 1 & \y 1 & \y 1 & \y 1 & \y 1 & \g 0 & \g 0 & \g 1 & \g 1 \\
  \y 0 & \y 0 & \y 0 & \y 0 & \y 0 & \y 0 & \y 0 & \g 0 & \g 0 & \g 1 & \g 1 \\
  \y 1 & \y 1 & \y 1 & \y 1 & \y 1 & \y 1 & \y 1 & \g 0 & \g 0 & \g 1 & \g 1 \\ \hline
  1 & 1 & 1 & 1 & 1 & 1 & 1 & \g 0 & \g 0 & \g 1 & \g 1 \\ \hline
  0 & 0 & 0 & \z 1 & \z 1 & 1 & 1 & \x 0 & \x 0 & \x 0 & \x 1 \\
  0 & 0 & 0 & \z 0 & \z 0 & 1 & 0 & \x 0 & \x 0 & \x 0 & \x 1 \\ \hline
  \y 1 & \y 1 & \y 1 & \y 1 & \y 1 & 1 & 1 & \x 0 & \x 0 & \x 0 & \x 1 \\
  \y 0 & \y 0 & \y 0 & \y 0 & \y 0 & 1 & 0 & \x 0 & \x 0 & \x 0 & \x 1 \\
  \end{tabular}
  \let\g\undefined
  \let\x\undefined
  \let\y\undefined
  \let\z\undefined
  \caption{Strips in an~example $4$-division of a~matrix.
  Horizontal strips are painted in shades of yellow.
  Vertical strips are painted in shades of blue.
  Unpainted zones are constant or mixed.}
\end{figure}

Naturally, each non-constant vertical zone belongs to exactly one vertical strip; and similarly, each non-constant horizontal zone belongs to exactly one horizontal strip.

We will now show an~upper bound on the number of vertical and horizontal strips present in any $t$-division of $M$.

\begin{lemma}
  \label{lem:cnt_strips_bound}
  For every $t \in \N$, the total number of vertical and horizontal strips in any $t$-division of $M$ is at most $\Oh[d]{t}$.
\end{lemma}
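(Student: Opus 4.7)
By symmetry it suffices to bound the number of vertical strips in a $t$-division; the horizontal case follows by transposing the roles of rows and columns. My plan is to charge each vertical strip to its topmost zone (its \emph{strip start}) and then argue that each non-topmost strip start is charged injectively either to a mixed cut or to a mixed zone of $(\Rc,\Cc)$. Since by Theorem~\ref{thm:tww1-main} the matrix $M$ is $(2d+2)$-mixed-free, Lemmas~\ref{lem:mixedzones} and~\ref{lem:mixed-cuts} will then give both counts as $\Oh[d]{t}$.

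Concretely, fix a column block $C$ with zones $z_1,\ldots,z_t$ from top to bottom, and call an index $i$ a strip start if $z_i$ is a non-constant vertical zone and either $i=1$, or $z_{i-1}$ is not a non-constant vertical zone with the same row vector as $z_i$. The number of vertical strips in $C$ equals the number of strip starts in $C$, and starts with $i=1$ contribute at most $t$ across all column blocks. For a non-topmost strip start $z_i$, its predecessor $z_{i-1}$ falls into exactly one of four kinds: constant, non-constant horizontal, mixed, or non-constant vertical with row vector $r'$ different from the row vector $r$ of $z_i$. In the mixed case I charge the start to $z_{i-1}$, and each mixed zone is charged at most once — by the strip directly beneath it in the same column block. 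In each of the three remaining cases, the last row of $z_{i-1}$ is either constant (if $z_{i-1}$ is constant or horizontal, since each row of such a zone is constant, while $r$ is non-constant) or equals $r'\neq r$ (if $z_{i-1}$ is vertical with a different row vector). Either way, the $2\times|C|$ submatrix formed by the last row of $z_{i-1}$ and the first row of $z_i$ has two distinct rows and is therefore mixed, so by Lemma~\ref{lem:mixed-has-corner} it contains a corner; this corner sits on two consecutive columns with its top row inside $z_{i-1}$ and its bottom row inside $z_i$, so it straddles the boundary and certifies that $(z_{i-1},z_i)$ is a mixed cut.

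Summing, the total number of vertical strips is at most $t + (\text{number of mixed zones}) + (\text{number of mixed cuts}) = \Oh[d]{t}$ by Lemmas~\ref{lem:mixedzones} and~\ref{lem:mixed-cuts}. A symmetric row-wise analysis bounds horizontal strips by $\Oh[d]{t}$, and adding the two yields the stated estimate. The main delicate step is the corner-production argument in the three non-mixed cases: the key observation is that a non-constant vertical zone's first row is a non-constant vector, so every ``incompatible'' non-mixed predecessor leaves a two-row submatrix with distinct rows on which Lemma~\ref{lem:mixed-has-corner} applies to yield a corner on two consecutive columns exactly at the zone boundary.
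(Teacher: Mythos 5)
Your proof is correct and takes essentially the same approach as the paper: classify the start of each vertical strip into those touching the top row, those sitting below a mixed zone, and those sitting below a non-mixed zone, then bound the latter two quantities by Lemmas~\ref{lem:mixedzones} and~\ref{lem:mixed-cuts}. One small phrasing slip worth noting: ``has two distinct rows and is therefore mixed'' is not a valid inference by itself (two distinct \emph{constant} rows form a horizontal, non-mixed $2\times m$ submatrix), but since you have also established that the first row of $z_i$ is non-constant, the submatrix is indeed neither vertical nor horizontal, so the argument stands.
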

  \begin{proof}
    We focus on the bound for vertical strips only; the proof for horizontal strips is symmetric.
    Fix some $t$-division $(\Rc,\Cc)$ of $M$.
    Observe that each vertical strip $S$ of the division either intersects the top row of the matrix, or the top-most zone of $S$ is adjacent from the top to another zone $C$ such that adding $C$ to $S$ yields a submatrix that is not vertical. (We say that $C$ is adjacent to $S$ {\em{from the top}}.)
    Thus, we partition the family of vertical strips in the $t$-division of $M$ into three types:
    \begin{enumerate}[(I)]
      \item \label{item:lem_cnt_strips_top} strips intersecting the top row of $M$;
      \item \label{item:lem_cnt_strips_mixed} strips adjacent to a~mixed zone $C$ from the top; and
      \item \label{item:lem_cnt_strips_nonmixed} strips adjacent to a~non-mixed zone $C$ from the top.
    \end{enumerate}
    Obviously, there are at most $t$ vertical strips of type (\ref{item:lem_cnt_strips_top}).
    Next, each vertical strip of type (\ref{item:lem_cnt_strips_mixed}) can be assigned a~private mixed zone $C$ adjacent to it from the top.
    Hence, the number of vertical strips of this type is upper bounded by the number of mixed zones in $(\Rc,\Cc)$, which by Lemma~\ref{lem:mixedzones} is bounded by $\Oh[d]{t}$.

    Finally, let us consider vertical strips of type (\ref{item:lem_cnt_strips_nonmixed}).
    Let $S$ be a vertical strip of this type, $D$ be its top-most zone, and $C$ be the non-mixed zone adjacent to $D$ from the top.
    Since $D$ is vertical, all rows of $D$ are repetitions of the same row vector $v_D$. Since $D$ is non-constant, $v_D$ is non-constant as well. 
   
   As $C$ is non-mixed, it is either horizontal or vertical.
   If $C$ is vertical, then all its rows are repetitions of the same row vector $v_C$. Observe that since strip $S$ could not be extended by $C$, we have $v_C\neq v_D$. Now, as $v_D$ is non-constant, it follows that the union of the bottom-most row of $C$ and the top-most row of $D$ contains a corner.
   On the other hand, if $C$ is horizontal, then the bottom-most row of $C$ is constant and again there is a corner in the union of the (constant) bottom-most row of $C$ and the (non-constant) top-most row of~$D$.
   
   So in both cases we conclude that $C$ and $D$ form a mixed cut. By Lemma~\ref{lem:mixed-cuts}, the total number of mixed cuts in $(\Rc,\Cc)$ is bounded by $\Oh[d]{t}$, so also there are at most $\Oh[d]{t}$ vertical strips of type (\ref{item:lem_cnt_strips_nonmixed}).
    This concludes the proof.
  \end{proof}

\subparagraph*{Regular divisions.}
We move our focus to a~central notion of our data structure: \emph{regular divisions} of a~matrix:

\begin{definition}
  Given $M$ and an~integer $s \in \N$, we define the \emph{$s$-regular division} of $M$ as the $\left\lceil \frac{n}{s} \right\rceil$-division of $M$ in which each row block (respectively, column block), possibly except the last one, contains $s$ rows (resp. columns).
  Precisely, if $s \nmid n$, then the last row block and the last column block contain exactly $n \bmod s$ rows or columns, respectively.
\end{definition}

In the data structure, given a~square input matrix $M$, we will construct multiple regular divisions of $M$ of varying granularity (the value of $s$).
Crucially, in order to ensure the space efficiency of the data structure, we will require that the number of \emph{distinct} zones in each such regular division of $M$ should be small.
This is facilitated by the following definition:

\begin{definition}\label{def:zonefamily}
  For $s \in \N$, the \emph{$s$-zone family} of $M$, denoted $\Fc_s(M)$, is the set of all different zones participating in the $s$-regular division of $M$.
\end{definition}

Let us stress that we treat $\Fc_s(M)$ as a set of matrices and do not keep duplicates in it. That is, if the regular $s$-division of $M$ contains two or more isomorphic zones --- with same dimensions and equal corresponding entries --- then these zones are represented in $\Fc_s(M)$ only once.

For the remainder of this section, we will prove good bounds on the cardinality of $\Fc_s(M)$.
Trivially, the cardinality of $\Fc_s(M)$ is bounded by $\left\lceil \frac{n}{s} \right\rceil^2$ (i.e., the number of zones in the $s$-regular division).
Also, the same cardinality is trivially bounded by $2^{\Oh{s^2}}$ (i.e., the total number of distinct matrices with at most $s$ rows and columns).
However, given that $M$ is $d$-twin-ordered, both bounds can be improved dramatically.
First, the dependence on $\frac{n}{s}$ in the former bound can be improved to linear:

\begin{lemma}
  \label{lem:small_family_frac}
  For every $s\in \{1,\ldots,n\}$, the cardinality of $\Fc_s(M)$ is bounded by $\Oh[d]{\frac{n}{s}}$.
\end{lemma}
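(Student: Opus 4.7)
My plan is to classify each zone of the $s$-regular division of $M$ into one of four disjoint types and bound the number of \emph{distinct} matrices contributed by each type separately. The types are: (a) constant, (b) non-constant horizontal, (c) non-constant vertical, and (d) mixed. These exhaust all zones and are pairwise disjoint because a submatrix that is simultaneously horizontal and vertical must be constant.

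Case (a) is essentially trivial: a constant zone is determined by its entry value and by its two dimensions, each of which is either $s$ or $n \bmod s$, giving at most $\Oh{1}$ distinct constant matrices. Case (d) is handled directly by Lemma~\ref{lem:mixedzones}: since $M$ is $(2d+2)$-mixed-free, the $\lceil n/s \rceil$-division contains at most $\Oh[d]{n/s}$ mixed zones, and in particular at most that many distinct ones.

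The heart of the argument is case (b); case (c) will then follow by a symmetric argument with rows and columns swapped. Here I would invoke Lemma~\ref{lem:cnt_strips_bound}, which bounds the number of horizontal strips in the $s$-regular division by $\Oh[d]{n/s}$, and then show that each such strip contributes only $\Oh{1}$ distinct matrices to $\Fc_s(M)$. The reason is structural: by the strip definition, the union of the zones of a horizontal strip is itself a horizontal submatrix, so every row of that union is a constant row; consequently all zones of the strip share the same vector of row-values. Because these zones also share the same row block, such a zone is determined uniquely by its width, which in the $s$-regular division takes at most two values ($s$ or $n \bmod s$). Hence each strip contributes $\Oh{1}$ distinct matrices, and summing over all $\Oh[d]{n/s}$ strips yields $\Oh[d]{n/s}$ distinct non-constant horizontal zones.

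Combining the four cases gives $|\Fc_s(M)| = \Oh[d]{n/s}$, as required. The step I expect to require the most care is the within-strip collapse in case (b): one must use the fact that the \emph{union} of the zones of a strip is horizontal (and not merely that each zone is horizontal on its own) to pin down a common row-pattern across the strip; this is what converts the strip-count of Lemma~\ref{lem:cnt_strips_bound} into a bound on the number of distinct matrices appearing in $\Fc_s(M)$.
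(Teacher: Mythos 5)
Your proposal is correct and follows essentially the same route as the paper: classify zones into constant/mixed/horizontal/vertical, bound mixed zones via Lemma~\ref{lem:mixedzones}, bound the non-constant horizontal and vertical zones by grouping them into strips via Lemma~\ref{lem:cnt_strips_bound}, and observe that zones within a strip collapse to $\Oh{1}$ distinct matrices because the union being horizontal (resp.\ vertical) forces a common row- (resp.\ column-) pattern. The only cosmetic difference is the treatment of $s\nmid n$: you fold the two possible dimensions into each case directly, whereas the paper truncates to the largest multiple of $s$ and accounts for the $\Oh{n/s}$ boundary zones separately.
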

  \begin{proof}
    First assume that $s \mid n$; hence, each zone in the $s$-regular division of $M$ has $s$ rows and $s$ columns.
    Then, the matrices in $\Fc_s(M)$ can be categorized into four types:
    \begin{itemize}
      \item Constant zones.
        There are at most $2$ of them --- constant $0$ and constant $1$.
      \item Mixed zones.
        Here, Lemma~\ref{lem:mixedzones} applies directly: since the considered division is an $\frac{n}{s}$-division of $M$, there are at most $\Oh[d]{\frac{n}{s}}$ mixed zones in $M$ in total.
      \item Vertical zones.
        By Lemma~\ref{lem:cnt_strips_bound}, all vertical zones of the considered division can be partitioned into $\Oh[d]{\frac{n}{s}}$ vertical strips.
        As all zones have the same dimensions, the zones belonging to a~single vertical strip are pairwise isomorphic.
        From this we infer the $\Oh[d]{\frac{n}{s}}$ upper bound on the number of different vertical zones.
      \item Horizontal zones are handled symmetrically to vertical zones.
    \end{itemize}
    Finally, if $s \nmid n$, then let $M'$ be equal to $M$, truncated to the first $n - (n \bmod s)$ rows and columns; equivalently, $M'$ is equal to $M$ with all zones with fewer than $s$ rows or columns removed.
    The argument given above applies to $M'$, yielding at most $\Oh[d]{\frac{n}{s}}$ different $s \times s$ zones in $M'$ (and equivalently in $M$).
    The proof is concluded by the observation that $M$ contains exactly $2\left\lceil \frac{n}{s} \right\rceil - 1 = \Oh{\frac{n}{s}}$ zones in its $s$-regular division that have fewer than $s$ rows or columns.
  \end{proof}

Second, from the works of Bonnet et al.~\cite{tww2, tww4} one can easily derive an upper bound that is exponential in $s$ rather than in $s^2$:

\begin{lemma}
  \label{lem:small_family_exp}
  For every $s\in \{1,\ldots,n\}$, the cardinality of $\Fc_s(M)$ is bounded by $2^{\Oh[d]{s}}$.
\end{lemma}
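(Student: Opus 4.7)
The plan is to reduce to the known counting bound on $d$-twin-ordered matrices from~\cite{tww4}. Specifically, the referenced works show that the number of distinct $s\times s$ binary $d$-twin-ordered matrices is at most $2^{\Oh[d]{s}}$. So if we can argue that every zone of the $s$-regular division of $M$ is itself $d$-twin-ordered as a standalone matrix, then we are done: the size of a zone is at most $s\times s$, hence the number of distinct zones is at most $2^{\Oh[d]{s}}$.

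The main step, therefore, is to establish that the $d$-twin-ordered property is hereditary with respect to taking submatrices induced by a contiguous range of rows and a contiguous range of columns. Let $N$ be such a submatrix of $M$, specified by a row block $R$ and column block $C$. The idea is to project a witnessing contraction sequence $(\Rc_0,\Cc_0),\ldots,(\Rc_p,\Cc_p)$ of $M$ onto $N$: for each division $(\Rc_i,\Cc_i)$ in the sequence, define $(\Rc_i|_R, \Cc_i|_C)$ by intersecting each row block with $R$ and each column block with $C$, discarding the empty parts. Then, after removing duplicate consecutive divisions, the resulting sequence constitutes a valid contraction sequence for $N$, as consecutive blocks of the restriction originate from consecutive blocks in the original. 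The crucial observation is that the error value does not grow under restriction: each zone of $(\Rc_i|_R,\Cc_i|_C)$ is a submatrix of a zone of $(\Rc_i,\Cc_i)$, and restriction can only turn a non-constant zone into a constant one (never vice versa). Therefore the number of non-constant zones in any row/column block of the restricted division is bounded by that in the original, i.e., at most $d$.

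Given this hereditary property, every zone $Z$ in the $s$-regular division of $M$ is $d$-twin-ordered with dimensions at most $s\times s$. Invoking the bound from~\cite{tww4} (reiterated in the introduction of the paper), the number of distinct binary $d$-twin-ordered matrices of dimensions at most $s\times s$ is $2^{\Oh[d]{s}}$, which immediately yields $|\Fc_s(M)|\leq 2^{\Oh[d]{s}}$.

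I expect the hereditary argument to be the main (though still routine) technical point; everything else is just bookkeeping. One minor subtlety is that zones in the last row block or column block may have fewer than $s$ rows or columns when $s\nmid n$, but this only makes the bound easier, since the counting bound from~\cite{tww4} is monotone in the dimensions.
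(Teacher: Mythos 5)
Your proposal is correct and takes essentially the same route as the paper: observe that (contiguous) submatrices inherit the $d$-twin-ordered property, then invoke the $2^{\Oh[d]{s}}$ counting bound for ordered binary structures from~\cite{tww4}. The only difference is that the paper states the hereditary observation without justification, whereas you spell out the (routine) projection-of-the-contraction-sequence argument, which is a valid and welcome addition.
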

  \begin{proof}
    Observe that a submatrix of a $d$-twin-ordered matrix is also $d$-twin-ordered.
    Thus, it is only necessary to upper bound the total number of different $s \times s$ matrices that are $d$-twin-ordered.
    To this end, we use the notion of twin-width of ordered binary relational structures introduced in the work of Bonnet et al.~\cite{tww4}.
    This notion is more general than twin-orderedness in the following sense: each $s \times s$ matrix that is $d$-twin-ordered corresponds to a~different ordered binary structure over $s$ elements of twin-width at most $d$.
    As proved in~\cite{tww4},
    the number of different such structures is upper bounded by $2^{\Oh[d]{s}}$.
    The claim~follows.
  \end{proof}

While the bound postulated by Lemma~\ref{lem:small_family_frac} is more powerful for coarse regular divisions of~$M$ (i.e., $s$-regular divisions for large $s$), Lemma~\ref{lem:small_family_exp} yields a~better bound for $s\leq p_d\cdot \log n$, where $p_d>0$ is a sufficiently small constant depending on $d$.

\section{Data structure}
\label{sec:main-structure}

In this section we present the data structure promised in Theorem~\ref{thm:main}. Recall that it should represent a given binary $n\times n$ matrix $M$ that is $d$-twin-ordered, and it should provide access to the following query: for given $(i,j)\in \range{n}^2$, return the entry $M[i,j]$. Here we focus only on the description of the data structure, implementation of the query, and analysis of the bitsize. The construction algorithm promised in Theorem~\ref{thm:main} is given in Section~\ref{sec:oracle-construction}.

%

Without loss of generality, we assume that $n$ is a~power of $2$. Otherwise we enlarge $\matrixx$, so that its order is the smallest power of $2$ larger than $n$. We use dummy $0$s to fill additional entries. It is straightforward to see that the resulting matrix is $(d+1)$-twin-ordered. Similarly, in the analysis we may assume that $n$ is sufficiently large compared to any constants present in the context.

%

\subparagraph*{Description.}
Our data structure consists of $\lnum+1$ layers: $\layer{0}, \ldots, \layer{\lnum}$. Recall from Definition~\ref{def:zonefamily} that $\Fc_{s}(\matrixx)$ is the family of pairwise different zones participating in the $s$-regular division of $\matrixx$. Each layer $\layer{i}$ in our data structure corresponds to $\Fc_{m_i}(\matrixx)$ for a carefully chosen parameter~$m_i$. Let $\low{x}$ be the largest power of $2$ smaller or equal to $x$. We define parameters $m_i$ inductively as follows: set $m_0=n$ and for $i\geq 0$, 
$$
m_{i+1}=
\begin{cases}
\low{{m_i}^{2/3}} & \text{ if } m_i \geq \log^3 n \\
m_i / 2 & \text{ if } \log n / (2 \beta_d) \leq m_i < \log^3 n 
\end{cases}
$$
where $\beta_d$ is the constant hidden in the $\Oh[d]{\cdot}$ notation in  Lemma~\ref{lem:small_family_exp}, i.e., $|\Fc_s(M)|\leq 2^{\beta_d\cdot s}$.
The construction stops when we reach $m_i$ satisfying $m_i < \log n/(2\beta_d)$, in which case we set $\ell=i$. Note that all parameters $m_i$ are powers of $2$, so $m_j$ divides $m_i$ whenever $i\leq j$.

We also observe the following.

\begin{claim}
 $\lnum\in \Oh{\log \log n}$.
\end{claim}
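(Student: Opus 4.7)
The plan is to bound $\lnum$ by splitting the recursion defining $m_0,m_1,\ldots,m_\lnum$ into its two phases and counting steps in each separately.

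In the first phase, while $m_i \geq \log^3 n$, we have $m_{i+1} = \low{m_i^{2/3}}$. Writing $m_i = 2^{a_i}$ (which is legitimate since every $m_i$ is a power of $2$ by construction), the update becomes $a_{i+1} = \floor{2a_i/3}$, so $a_{i+1} \leq (2/3)\,a_i$. Starting from $a_0 = \log_2 n$, a simple induction gives $a_i \leq (2/3)^i \log_2 n$. The phase ends as soon as $m_i < \log^3 n$, i.e., $a_i < 3\log_2 \log_2 n$, which is guaranteed once $(2/3)^i \log_2 n < 3 \log_2 \log_2 n$. This inequality is satisfied for $i = \Oh{\log\log n}$, so the first phase lasts $\Oh{\log\log n}$ steps.

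In the second phase, $m_i$ is halved at each step until it drops below $\log n / (2\beta_d)$. The phase starts with $m_i < \log^3 n$, so the number of halvings needed is at most $\log_2\!\left( \frac{\log^3 n}{\log n / (2\beta_d)} \right) = 2\log_2\log n + \Oh[d]{1}$, which is also $\Oh{\log \log n}$.

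Adding the two phase lengths yields $\lnum \in \Oh{\log\log n}$, as required. No step here is a real obstacle — the only thing to be careful about is that the $\low{\cdot}$ rounding in the first phase does not spoil the $2/3$ geometric decay of the exponents $a_i$, and that the second phase cannot last longer than $\log_2(\log^3 n) = 3 \log_2 \log n$ halvings because $m_i$ only ever decreases in that phase.
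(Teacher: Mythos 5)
Your proposal is correct and follows essentially the same two-phase decomposition as the paper: bound the number of steps while $m_i \geq \log^3 n$ via the geometric decay of the exponent (equivalently, $m_i \leq n^{(2/3)^i}$), then bound the number of halvings until $m_i$ drops below $\log n/(2\beta_d)$. The only cosmetic difference is that you work explicitly with $a_i = \log_2 m_i$ to make the effect of the $\low{\cdot}$ rounding visible, whereas the paper states the equivalent bound $m_i \leq n^{(2/3)^i}$ directly.
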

\begin{claimproof}
 Let $k$ be the least index for which $m_k<\log^3 n$. Observe that for $i\in \range{1,k}$ we have $m_i\leq n^{(2/3)^i}$. So it must be that $k\leq \log_{3/2} \log n+1\in \Oh{\log \log n}$, for otherwise we would have $m_{k-1}\leq n^{(2/3)^{\log_{3/2} \log n}}=n^{1/\log n}=2<\log^3 n$. Next, observe that for $i\in \range{k+1,\lnum}$ we have $m_i=m_k/2^{i-k}$. Therefore, we must have $\ell-k\leq \log (\log^3 n)+1\in \Oh{\log \log n}$, for otherwise we have
 $m_{\ell-1}\leq m_k/2^{\log \log^3 n}<\log^3 n/\log^3 n=1$. The claim follows.
\end{claimproof}


Layer $\layer{\lnum}$ is special and we describe it separately, so let us now describe the content of layer $\layer{i}$ for each $i<\lnum$.  Since $n$ is divisible by $m_i$, every $Z \in \Fc_{m_i}(\matrixx)$ is an $m_i\times m_i$ matrix that appears at least once as a zone in the $(n/m_i)$-regular division of $\matrixx$. Such $Z$ will be represented by an object $\object{Z}$ in $\layer{i}$. Each object $\object{Z}$ stores $(m_i / m_{i+1})^2$ pointers to objects in $\layer{i+1}$; recall here that $m_{i+1}$ divides $m_{i}$.  Consider the $m_{i+1}$-regular division of $Z$. This division consists of $(m_i / m_{i+1})^2$ zones; index them as $\szone{Z}{i}{j}$ for $i,j \in \range{m_i / m_{i+1}}$ naturally. Observe that for all $i,j \in \range{m_i / m_{i+1}}$, it holds that $\szone{Z}{i}{j} \in \Fc_{m_{i+1}}(\matrixx)$. In our data structure, each object $\object{Z} \in \layer{i}$, corresponding to a matrix $Z\in \Fc_{m_i}(\matrixx)$, stores an array $\ptr$ of $(m_i / m_{i+1})^2$ pointers, where $\ptr[i,j]$ points to the address of $\szone{Z}{i}{j}$ for all $i,j \in \range{m_i / m_{i+1}}$. This concludes the description of layer $\layer{i}$ for $i<\lnum$.

We now describe layer $\layer{\lnum}$. It is also a collection of objects, and for each matrix $Z \in \Fc_{m_\lnum}(\matrixx)$ there is an object $\object{Z} \in \layer{\lnum}$; these objects are pointed to by objects from $\layer{\lnum-1}$. However, instead of storing further pointers, each object $\object{Z} \in \layer{\lnum}$ stores the entire matrix $Z \in \Fc_{m_\lnum}(\matrixx)$ as a binary matrix of order $m_\lnum \times m_\lnum$, using $m_\lnum^2$ bits. This concludes the description of $\layer{\lnum}$.

Observe that in $\layer{0}$ there is only one object corresponding to the entire matrix $\matrixx$. We store a global pointer $\ptrglo$ to this object. Our data structure is accessed via $\ptrglo$ upon each query.

\subparagraph*{Implementation of the query.}
The description of the data structure is now complete and we move on to describing how the query is executed. The query is implemented as method $\entry{i}{j}$ and returns $\matrixx[i,j]$; see Algorithm~\ref{alg:query} for the pseudocode (where $\ptrit \rightarrow$ stands for dereference of a pointer $\ptrit$, i.e., the object pointed to by $\ptrit$). Given two integers $i,j \in \range{n}$, the method starts with pointer $\ptrglo$, and uses $i$ and $j$ and iterator pointer $\ptrit$ to navigate via pointers down the layers, ending with a pointer to an object in layer $\layer{\lnum}$. Initially, the iterator $\ptrit$ is set to $\ptrglo$ and it points to $\object{Z}$ for the only matrix $Z \in \Fc_{m_0}(\matrixx)$. Integers $i,j$ are the positions of the desired entry with respect to zone $Z$. 
After a number of iterations, $\ptrit$ points to an object $\object{Z} \in \layer{k}$ for a matrix $Z \in \Fc_{m_k}(\matrixx)$, and maintains current coordinates $i$ and $j$. The invariant is that the desired output is the entry $Z[i,j]$.
In one step of the iteration, the algorithm finds the matrix $Z'$ in $\Fc_{m_{k+1}}(\matrixx)$ containing the desired entry $Z[i,j]$, which is the zone $\szone{Z}{i \; \divi \; m_{k+1}}{j \; \divi \; m_{k+1}}  \in \Fc_{m_{k+1}}(\matrixx)$, and moves the pointer $\ptrit$ to $\object{Z'} \in \layer{k+1}$.  The new coordinates  of the desired entry with respect to $Z'$ are $(i \; \modu \; m_{k+1})$ and $(j \; \modu \; m_{k+1})$, so $i$ and $j$ are altered accordingly. Once the iteration reaches $\layer{\lnum}$, the object pointed to by $\ptrit$ contains the entire zone explicitly, so it suffices to return the desired entry. Obviously, the running time of the query is $\Oh{\log \log n}$, since the algorithm iterates through $\lnum \in \Oh{\log \log n}$ layers. 
\begin{algorithm}
     \SetKwInOut{Input}{Input}
     \SetKwInOut{Output}{Output}
    \SetKw{Raise}{raise exception:\ }
 
     \vskip 0.2cm
     
     \Input{Integers $i,j \in \range{n}$}
     \Output{$\matrixx[i,j]$}
     
     \vskip 0.1cm
     
     $\ptrit \gets \ptrglo$ \\
     
     \For{ $k \gets 0$ \KwTo $\lnum-1$ } 
     {
        $\ptrit \gets (\ptrit \rightarrow \ptr[i \; \divi \; m_{k+1},j \; \divi \; m_{k+1}]$ \;
        $i \gets i \; \modu \; m_{k+1}$ \;
        $j \gets j \; \modu \; m_{k+1}$ \;
     }
        
     \Return $\ptrit \rightarrow Z[i,j]$
     
     \caption{Query algorithm}
     \label{alg:query}
 \end{algorithm}

\subparagraph*{Analysis of bitsize.}
  We now analyze the number of bits occupied by the data structure. First note that the total number of objects stored is bounded by the total number of submatrices of $\matrixx$, which is polynomial in $n$. Hence, every pointer can be represented using $\Oh{\log n}$ bits. Keeping this in mind, the total bitsize occupied by the data structure is proportional to 
  \begin{equation}\label{eq:1}
\sum_{i=0}^{\ell-1} |\Fc_{m_i}(\matrixx)| \left(\frac{m_i}{m_{i+1}}\right)^2 \log n + |\Fc_{m_\ell}(\matrixx)| m_\ell^2,   
  \end{equation}
  This is because for all layers $\layer{i}$ for $i<\ell$ we store $|\Fc_{m_i}(\matrixx)|$ objects, each storing $\left(\frac{m_i}{m_{i+1}}\right)^2$ pointers, and in $\layer{\ell}$ we store $|\Fc_{m_\ell}(\matrixx)|$ objects, each storing a binary matrix of order $m_\ell\times m_\ell$. 
  
We first bound the second term of Equation~(\ref{eq:1}). By Lemma~\ref{lem:small_family_exp}, we have 
$$ |\Fc_{m_\ell}(\matrixx)| m_\ell^2 \leq 2^{\beta_d \cdot m_\ell}\cdot m_\ell^2\leq 2^{\beta_d \cdot \frac{\log n}{2 \beta_d}}\cdot \left(\frac{\log n}{2\beta_d}\right)^2=\sqrt{n}\cdot \left(\frac{\log n}{2 \beta_d}\right)^2\in o(n).$$
We move on to bounding the first term of Equation~(\ref{eq:1}). Let $k$ be the least index for which $m_k < \log^3 n$. We can split the first term of Equation~(\ref{eq:1}) into two sums:
\begin{align}
& \sum_{i=0}^{\ell-1} |\Fc_{m_i}(\matrixx)| \left(\frac{m_i}{m_{i+1}}\right)^2 \log n= \notag \\
& \qquad =\sum_{i=0}^{k-1} |\Fc_{m_i}(\matrixx)| \left(\frac{m_i}{m_{i+1}}\right)^2 \log n + \label{eq:2}\\
& \qquad +\sum_{i=k}^{\ell-1} |\Fc_{m_i}(\matrixx)| \left(\frac{m_i}{m_{i+1}}\right)^2 \log n \label{eq:3}
\end{align}


We first apply Lemma~\ref{lem:small_family_frac} to bound the sum~(\ref{eq:2}). More precisely, if $\alpha_d$ is the constant hidden in the $\Oh[d]{\cdot}$ notation in Lemma~\ref{lem:small_family_frac}, we have
\begin{equation}\label{eq:4}(\ref{eq:2}) \leq \log n\cdot \sum_{i=0}^{k-1} \alpha_d \frac{n}{m_i} \cdot 4 m_i^{2/3}=4 \alpha_d n \log n\cdot \sum_{i=0}^{k-1} \frac{1}{m_i^{1/3}}\end{equation} 
Since for $i\in \range{k-1}$ we have $m_{i+1}=\low{m_i^{2/3}}$ and $m_i\geq \log^3 n$, we have $m_i/m_{i+1}\geq 2$. Therefore $m_i \geq 2^{k-i-1} m_{k-1}$ for $i \in \interval{0}{k-1}$, so we can continue bounding the last expression in Equation~(\ref{eq:4}):
\begin{equation*} (\ref{eq:4}) \leq \alpha_d n \log n \sum_{i=0}^{k-1}\cdot \frac{1}{(2^{k-i-1}m_{k-1})^{1/3}} \leq \alpha_d n\cdot \frac{\log n}{m_{k-1}^{1/3}}\cdot \sum_{i=0}^{k-1} \frac{1}{(2^{k-i-1})^{1/3}} \in \Oh[d]{n}. \end{equation*}
It remains to bound sum~(\ref{eq:3}). We use Lemma~\ref{lem:small_family_frac} similarly as above: $$(\ref{eq:3}) = 4 \log n\cdot \sum_{i=k}^{\ell-1} |\Fc_{m_i}(\matrixx)| \leq  4 \alpha_d n \log n\cdot \sum_{i=k}^{\ell-1} \frac{1}{m_i} \leq 4 \alpha_d n \log n\cdot \sum_{i=0}^{\infty} \frac{1}{(\frac{\log n}{2 \beta_d}) \cdot 2^i}  \in \Oh[d]{n}.$$
By summing up all the bounds we infer that the total number of bits occupied by our data structure is $\Oh[d]{n}$.


\section{Construction algorithm}
\label{sec:oracle-construction}

In this section we complete the proof of Theorem~\ref{thm:main} by presenting an algorithm that constructs the data structure described in  Section~\ref{sec:main-structure} in time $\Oh[d]{n\log n\log\log n}$. Here, we assume that the matrix $\matrixx$ is specified on input by a rectangle decomposition $\Kc$ satisfying $|\Kc|\leq \Oh[d]{n}$. 
%
%
We remark that the our construction algorithm will itself consume superlinear memory.
In fact, even storing the input decomposition $\Kc$ requires $\Omega(n \log n)$ bits of memory.
However, we stress that the data structure constructed by the algorithm occupies only $\Oh[d]{n}$ bits.

The construction will proceed in three phases.
First, in Section~\ref{ssec:submatrix-types}, we will set up a~data structure that given a submatrix $S$ of $M$, returns the \emph{type} of $S$; that is, verifies whether $S$ is constant, vertical, horizontal, or mixed.
Next, in Section~\ref{ssec:construction-meat} we use the results of Section~\ref{ssec:submatrix-types} to find an~effective approximation $\Gc_s(M)$ of the zone families $\Fc_s(M)$.
Finally, these effective approximations will be used in the construction of the data structure itself (Section~\ref{ssec:construction-finale}). 

\newcommand{\ProbSubmatrixTypes}{\textsc{Submatrix Types}\xspace}
\newcommand{\ProbOrthogonalLocation}{\textsc{Orthogonal Point Location}\xspace}
\newcommand{\ProbOrthogonalEmptiness}{\textsc{Orthogonal Range Emptiness}\xspace}
\newcommand{\ProbOrthogonalSegment}{\textsc{Orthogonal Segment Intersection Emptiness}\xspace}
\newcommand{\ProbVerticalRayShooting}{\textsc{Vertical Ray Shooting}\xspace}
\newcommand{\InstOrthogonalLocation}{\ensuremath{\mathcal{I}_L}\xspace}
\newcommand{\InstOrthogonalHorizontal}{\ensuremath{\mathcal{I}_H}\xspace}
\newcommand{\InstOrthogonalVertical}{\ensuremath{\mathcal{I}_V}\xspace}
\newcommand{\InstOrthogonalEmptiness}{\ensuremath{\mathcal{I}_E}\xspace}
\newcommand{\Area}[1]{\ensuremath{\mathcal{A}(#1)}}
\newcommand{\WArea}[1]{\ensuremath{\widehat{\mathcal{A}}(#1)}}
\newcommand{\bnd}[1]{\ensuremath{\partial #1}}

\subsection{Data structure for submatrix types}
\label{ssec:submatrix-types}

We will now define the announced subproblem formally.
In \ProbSubmatrixTypes, we are given a~rectangle decomposition $\Kc$ of an~$n \times n$ matrix $M$, and we are required to preprocess it so as to handle the following queries efficiently: given a~submatrix $S$ of $M$, return:
\begin{itemize}
  \item \emph{constant $c$} ($c \in \{0, 1\}$) if $S$ is constant with $c$ being the common entry;
  \item \emph{horizontal} if $S$ is non-constant horizontal;
  \item \emph{vertical} if $S$ is non-constant vertical; or
  \item \emph{mixed} if $S$ is mixed (i.e., neither horizontal nor vertical).
\end{itemize}
In this section, we prove the following:

\begin{lemma}
  \label{lem:submatrix-types-ds}
  Fix $d \in \N$ and assume $M$ is a binary $n\times n$ matrix that is $d$-twin-ordered.
  Then there is a~data structure for \ProbSubmatrixTypes on $M$ that supports queries in worst-case time $\Oh[d]{\log \log n}$ in the word RAM model.
  The data structure can be constructed in time $\Oh[d]{n \log \log n}$, assuming $M$ is represented on input by a rectangle decomposition $\Kc$ with~$|\Kc|\leq \Oh[d]{n}$.
\end{lemma}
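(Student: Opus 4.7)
The plan is to reduce \ProbSubmatrixTypes to a small number of orthogonal 2D queries handled by standard word-RAM data structures. By Lemma~\ref{lem:mixed-has-corner}, $S$ is mixed iff it contains a corner of $M$, so the first step at query time is to test whether any corner lies inside the query submatrix $S=[r_1,r_2]\times[c_1,c_2]$; if so, return \emph{mixed}. Otherwise $S$ is constant, horizontal or vertical, and a short case analysis across these three non-mixed options shows that the type is determined by two auxiliary yes/no tests: is the first row of $S$ constant, and is the first column of $S$ constant? Both constant forces $S$ to be constant (with common value $v=M[r_1,c_1]$), only the row constant gives horizontal, only the column constant gives vertical, and both non-constant is impossible, since each of the three non-mixed categories forces at least one of the two segments to be constant. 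Hence every query boils down to one corner-emptiness query, one point-value query, and two row/column constancy queries.

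To realize the first two primitives I would build Chan-style~\cite{DBLP:journals/talg/Chan13} 2D orthogonal structures on inputs of size $\Oh[d]{n}$. Point location on $\Kc$ handles the $M[i,j]$ queries in $\Oh{\log\log n}$ time, while orthogonal range emptiness on the set of all corners of $M$ handles the corner test. To enumerate the corners of $M$, observe that by Lemma~\ref{lem:corner-bound} there are $\Oh[d]{n}$ of them, and since a mixed $2\times 2$ window must contain both a $0$ and a $1$ entry, every corner is adjacent to the boundary of some rectangle in $\Kc$. A single scan over the boundaries of the rectangles in $\Kc$, inspecting a constant number of candidate windows per boundary segment via point-location on $\Kc$, therefore produces the corner list in $\Oh[d]{n\log\log n}$ time; the range-emptiness structure is then built on this list within the same budget.

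The main obstacle is answering the row- and column-constancy queries within $\Oh{\log\log n}$ time each. Testing whether row $r$ is constant $0$ over $[c_1,c_2]$ is directly a range-emptiness query on $\Kc$ restricted to $\{r\}\times[c_1,c_2]$, so only the constant-$1$ case needs new machinery. I would handle it by performing a top-to-bottom row sweep and maintaining the sorted sequence of maximal $1$-runs of the current row in a persistent word-RAM predecessor data structure with $\Oh{\log\log n}$-time updates and queries. The only events during the sweep are starts and ends of rectangles in $\Kc$, giving $\Oh[d]{n}$ updates in total. A row-constancy query at row $r$ then accesses the version of the structure indexed by $r$ and resolves the test by a single predecessor search for the $1$-run containing $c_1$ followed by a comparison of its right endpoint against $c_2$. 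The column case is handled symmetrically. The most delicate point I expect to wrestle with is amortizing the interval merges and splits produced by the rectangle events so that the sweep stays within the $\Oh[d]{n\log\log n}$ preprocessing budget; since each rectangle contributes only $\Oh{1}$ endpoint events, a careful invariant on the persistent structure should suffice.
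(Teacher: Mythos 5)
Your high-level classification matches the paper's exactly: decide \emph{mixed} via a corner-emptiness query (using Lemma~\ref{lem:mixed-has-corner} and Lemma~\ref{lem:corner-bound} to enumerate $\Oh[d]{n}$ corners from the boundaries of $\Kc$), and then, for the non-mixed cases, reduce \emph{vertical}/\emph{horizontal}/\emph{constant} to the constancy of the first column and first row of $S$ --- this is precisely what the paper's Claim~\ref{cl:types-when-vertical} does, just phrased geometrically. Your case analysis is sound, and the corner-enumeration and corner-emptiness parts line up with the paper's use of Theorem~\ref{thm:efficient-orthogonal-emptiness}.

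Where you diverge is in how you realize the row/column constancy tests, and this is where the gap sits. You assert a ``persistent word-RAM predecessor data structure with $\Oh{\log\log n}$-time updates and queries,'' but this is not a standard black box: van Emde Boas / $y$-fast trie structures are array- or hash-based and do not become persistent with $\Oh{\log\log n}$ per version by naive path copying. What you are really describing --- a top-to-bottom sweep that maintains the set of $1$-runs of the current row persistently, and answers ``which run contains $c_1$ and how far does it extend'' --- is exactly a (one-dimensional) \ProbVerticalRayShooting / \ProbOrthogonalLocation instance in disguise, which is the route the paper takes: it tests constancy of a row or column of $S$ by asking whether a short segment at half-integer coordinates crosses the boundary $\bnd{\WArea{M}}$, reducing directly to Theorem~\ref{thm:efficient-orthogonal-segments} and hence to Chan~\cite{DBLP:journals/talg/Chan13}. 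This also sidesteps your split into a constant-$0$ test and a constant-$1$ test; note that your constant-$0$ test (``range-emptiness query on $\Kc$ restricted to $\{r\}\times[c_1,c_2]$'') is a rectangle-stabbing query against the rectangles of $\Kc$, not the point-set range emptiness of Theorem~\ref{thm:efficient-orthogonal-emptiness}, so it too would need its own reduction. If you replace both of your constancy mechanisms by the paper's boundary-segment-intersection test, the argument closes cleanly; as written, the persistence claim is the missing piece.
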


Observe that by restricting $S$ to one-element matrices in Lemma~\ref{lem:submatrix-types-ds}, we will produce a~data structure testing contents of individual entries of $M$ in doubly-logarithmic time --- the same as in the compact representation of $M$ provided Section~\ref{sec:main-structure}.
However, the data structure from Lemma~\ref{lem:submatrix-types-ds} is by no means compact --- in fact, its bitsize is $\Oh{n \log n \log \log n}$, which is even worse than the bitsize $\Oh{n \log n}$ achieved by the direct application of Chan's data structure for orthogonal point location~\cite{DBLP:journals/talg/Chan13}.
Thus, \ProbSubmatrixTypes can only be used as a~building block of an algorithm constructing the compact representation of $M$.

In order to implement the data structure for \ProbSubmatrixTypes, we shall first define three auxiliary geometric problems.
In each problem it can be assumed that each geometric object given on input has integer coordinates between $0$ and $\Oh{n}$.

In \ProbOrthogonalLocation, we are given a~set of $\Oh{n}$ horizontal and vertical segments, where the segments may only intersect at their endpoints.
The segments subdivide the~plane into regions.
In the problem, it is required to preprocess the regions and construct a data structure that can can efficiently locate the~region containing a~given query point. We will use the following data structure of Chan~\cite{DBLP:journals/talg/Chan13} for this problem.

\begin{theorem}[\cite{DBLP:journals/talg/Chan13}]
  \label{thm:efficient-orthogonal-location}
  There is a data structure for \ProbOrthogonalLocation that can answer each query in worst-case time $\Oh{\log \log n}$ and can be constructed in time $\Oh{n \log \log n}$.
\end{theorem}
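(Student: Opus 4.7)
The plan is to reduce orthogonal point location to vertical ray shooting among horizontal segments, implement the ray-shooting query via a plane sweep whose intermediate states are stored as a partially persistent integer-key dictionary, and exploit the polynomial integer universe to compress query time to $\Oh{\log\log n}$.

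First I would preprocess the planar subdivision induced by the input. Trace its $\Oh{n}$ faces, assign them identifiers, and for every horizontal segment $h$ precompute $\mathsf{below}(h)$, the identifier of the face lying directly beneath $h$; a single left-to-right sweep accomplishes this in $\Oh{n\log n}$ time. A query at $q=(x_0,y_0)$ then becomes a vertical-ray-shooting query: among horizontal segments whose $x$-range covers $x_0$, return $\mathsf{below}$ of the one with smallest $y$-value strictly above $y_0$. Queries falling in the unbounded face are handled uniformly by inserting a sufficiently large bounding rectangle at the start.

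Next I would sweep a vertical line from left to right. Each horizontal segment $h$ contributes two events sorted by $x$-coordinate: insertion of $y(h)$ at the left endpoint of $h$ and deletion at its right endpoint. The active $y$-values are maintained in a dictionary $D$ that is made partially persistent, so each vertical slab has an associated version $D_{x_0}$ storing exactly the active coordinates there. A query $(x_0,y_0)$ becomes a single predecessor lookup of $y_0$ in $D_{x_0}$. Since all coordinates are integers in $[0,\Oh{n}]$, $D$ is implemented as a y-fast trie (or an equivalent integer predecessor structure on a polynomial universe), supporting insert, delete, and predecessor in $\Oh{\log\log n}$.

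The main obstacle is to combine an $\Oh{\log\log n}$ integer predecessor structure with persistence without inflating each operation to $\Oh{\log n}$. Generic node-copying persistence over a y-fast trie destroys the bound because trie navigation goes through hash tables and the $\Theta(\log n)$-sized buckets contain balanced BSTs. I would address this on two fronts: (i) render the hash table backing the x-fast trie partially persistent via the Driscoll--Sarnak--Sleator--Tarjan fat-node scheme, which adds $\Oh{1}$ amortized overhead per pointer update, and (ii) replace each bucket by an immutable sorted array that is rebuilt whenever it doubles or halves in size, amortizing the $\Theta(\log n)$ rebuild cost to $\Oh{\log\log n}$ per update via a weight-balance argument. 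Processing the $\Oh{n}$ events at $\Oh{\log\log n}$ each gives the claimed $\Oh{n\log\log n}$ construction time, and each query is a version-index computation followed by one predecessor search, both in $\Oh{\log\log n}$.
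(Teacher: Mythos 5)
This statement is cited from Chan~\cite{DBLP:journals/talg/Chan13}; the paper gives no proof of its own, so the relevant comparison is with Chan's actual argument. Your reduction --- sweep a vertical line over the horizontal segments and turn a point-location query into a predecessor query in a partially persistent version of the sweep status --- is indeed the framework Chan uses, and the offline reduction to persistent predecessor search is correct.

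The gap is in your claim that a y-fast trie can be made partially persistent with $\Oh{\log\log n}$ overhead by applying Driscoll--Sarnak--Sleator--Tarjan fat-node persistence to its level hash tables and rebuilding buckets as immutable sorted arrays. Fat-node (and node-copying) persistence applies to linked structures with bounded out-degree and bounded in-degree; a hash table is a directly indexed array and has neither property, so the ``$\Oh{1}$ amortized overhead per pointer update'' claim is unjustified on the read side: retrieving a slot's value at a given version requires searching a version list whose length you cannot bound by a constant without node copying, and node copying is unavailable here. Known techniques for persistent direct-addressed arrays on the word RAM do not give $\Oh{1}$ access; they cost on the order of $\log\log n$ per probe, and since an x-fast-trie query already performs $\Theta(\log\log n)$ hash lookups (a binary search over $\Theta(\log n)$ trie levels), composing these would yield $\Theta((\log\log n)^2)$ query time rather than $\Oh{\log\log n}$. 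Hashing also gives expected rather than worst-case guarantees, while the theorem asserts worst-case bounds. This obstacle --- that van~Emde~Boas/y-fast-trie structures resist standard persistence precisely because they rely on hashing and direct addressing --- is the very problem Chan's paper was written to solve, and his solution is a purpose-built persistent predecessor structure, not a generic persistence wrapper around a y-fast trie. As written, your proposal correctly identifies the difficulty but does not overcome it.
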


In \ProbOrthogonalEmptiness, we are given a~set of $\Oh{n}$ points in the plane.
It is required to preprocess the points in order construct a data structure that can efficiently find whether a~queried axis-parallel rectangle contains any of the input points.
In the positive case, it is not required to return any points: a yes/no answer suffices. For this problem, we will use the data structure of Chan et al.~\cite{DBLP:conf/compgeom/ChanLP11}.

\begin{theorem}[\cite{DBLP:conf/compgeom/ChanLP11}]
  \label{thm:efficient-orthogonal-emptiness}
  There is a data structure for \ProbOrthogonalEmptiness that can answer each query in worst-case time $\Oh{\log \log n}$ and can be constructed in time $\Oh{n \log \log n}$.
\end{theorem}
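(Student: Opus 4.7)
The plan is to build the data structure by combining coordinate compression with a sweep-line reduction to one-dimensional predecessor search, using standard word RAM machinery. First, I would compress coordinates: since all coordinates lie in $\{0,1,\ldots,\Oh{n}\}$, I can build a van Emde Boas tree or y-fast trie in time $\Oh{n \log \log n}$ that translates any query coordinate to its rank in $\Oh{\log \log n}$ worst-case time. After this step, I may assume that point and query coordinates are integers in $[n]$.

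Next, I would use inclusion--exclusion to decompose each rectangle query $[x_1,x_2]\times[y_1,y_2]$ into $\Oh{1}$ \emph{dominance} queries of the form ``is there a point $(x,y)$ with $x\leq a$ and $y\leq b$?''. Such a dominance query is naturally answered by sweeping the points in increasing order of $x$-coordinate and maintaining, after each insertion, the set $Y_a$ of $y$-coordinates of points with $x\leq a$; the query then amounts to asking whether $Y_a$ contains an element $\leq b$, i.e., a predecessor query in $Y_a$.

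To support this sweep, I would build a \emph{persistent} predecessor data structure over $y$-coordinates, producing one version per sweep event. The engine would be a vEB-like tree over the universe $[n]$, which supports predecessor queries in $\Oh{\log \log n}$ time per version. Each sweep event inserts a single $y$-value and therefore modifies only $\Oh{\log \log n}$ nodes in the vEB tree, which I would make persistent using Driscoll--Sarnak--Sleator--Tarjan style path copying (or fat nodes). The persistent handle for version $a$ then serves as the query structure for the dominance query with $x$-bound $a$.

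The main obstacle, I expect, will be tuning the persistent vEB implementation so that the amortized number of words touched per insertion is $\Oh{\log \log n}$ and the per-version query time stays $\Oh{\log \log n}$ in the word RAM model; standard persistence techniques can inflate one or the other by extra $\log$ factors, and avoiding this requires exploiting the fact that each insertion only flips $\Oh{1}$ bits at each of $\Oh{\log \log n}$ vEB levels, which is a packed-word operation. Once this is in place, summing over the $\Oh{n}$ sweep events yields the claimed $\Oh{n \log \log n}$ construction time, while each user query costs $\Oh{\log \log n}$ for the rank translations plus $\Oh{1}$ persistent vEB lookups, each in $\Oh{\log \log n}$.
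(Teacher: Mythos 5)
This theorem is a cited black-box result from Chan, Larsen, and P\u{a}tra\c{s}cu (SoCG 2011); the paper provides no proof of it, so there is no "paper's proof" to compare against. Your attempt to reconstruct a proof, however, has a concrete gap.

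The central problem is the inclusion--exclusion step. You propose to decompose a rectangle \emph{emptiness} query $[x_1,x_2]\times[y_1,y_2]$ into $\Oh{1}$ dominance queries of the form ``is there a point with $x\le a$ and $y\le b$?''. Inclusion--exclusion is valid for \emph{counting}: $\mathsf{cnt}([x_1,x_2]\times[y_1,y_2]) = D(x_2,y_2)-D(x_1-1,y_2)-D(x_2,y_1-1)+D(x_1-1,y_1-1)$, where $D(a,b)$ is the number of points dominated by $(a,b)$. But the Boolean answers to four dominance-\emph{emptiness} queries carry no such structure: with points at $(0,5)$ and $(5,0)$, every dominance region $[0,a]\times[0,b]$ with $a,b\ge 1$ is non-empty, yet $[2,4]\times[2,4]$ is empty; and replacing those two points with a single point at $(3,3)$ gives identical dominance-emptiness answers while the query rectangle becomes non-empty. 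So the reduction as written simply does not determine the answer. To salvage it you would need dominance \emph{rank} (counting) queries rather than predecessor/emptiness queries, which is a strictly harder primitive, or you would need the standard alternative reduction of 4-sided emptiness to 3-sided queries (find the point of minimum $y$ among those with $x\in[x_1,x_2]$ and $y\ge y_1$, then compare to $y_2$), which leads to priority-search-tree-like machinery rather than a bare predecessor structure.

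Beyond that, the persistent van Emde Boas route is not what the cited work does (Chan--Larsen--P\u{a}tra\c{s}cu use grid refinement and rank/select-based succinct structures to get $\Oh{\log\log n}$ queries in linear or near-linear space), and you yourself flag that getting worst-case $\Oh{\log\log n}$ per version out of path-copied vEB without extra logarithmic blow-up is the hard part. That concern is well-founded, but it is secondary to the fact that even a perfect persistent predecessor oracle would not answer the right question because of the inclusion--exclusion issue above.
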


In \ProbOrthogonalSegment, we are given a~set of $\Oh{n}$ horizontal segments in the plane.
It is required to preprocess the segments in order to construct a data structure that can efficiently decide whether a~queried vertical segment intersects any of the horizontal segments.
In the positive case, it is not required to return any segments: a yes/no answer suffices.

\begin{theorem}
  \label{thm:efficient-orthogonal-segments}
  There is a data structure for \ProbOrthogonalSegment that can answer each query in worst-case time $\Oh{\log \log n}$ and can be constructed in time $\Oh{n \log \log n}$.
\end{theorem}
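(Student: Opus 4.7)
My plan is to reduce \ProbOrthogonalSegment to \ProbOrthogonalLocation (Theorem~\ref{thm:efficient-orthogonal-location}) through a \emph{trapezoidal decomposition} $\Tc$ of the input. Adding a sufficiently large bounding box around the $\Oh{n}$ input horizontal segments, from each segment endpoint I shoot a vertical ray upward and downward until it hits another input segment or the box, producing an axis-aligned planar subdivision with $\Oh{n}$ rectangular cells. The $x$-coordinates of the input endpoints chop the box into $\Oh{n}$ \emph{vertical strips}, and within each strip no endpoint lies in the strip's interior, so the only features subdividing a strip are horizontal input segments that fully span it. Consequently, two points lying in the same strip are in the same cell of $\Tc$ if and only if no input segment separates them vertically---that is, if and only if the vertical segment joining them is disjoint from all input segments.

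Given a query vertical segment $(x_Q, c_Q, d_Q)$, its two endpoints $(x_Q, c_Q)$ and $(x_Q, d_Q)$ lie in the same vertical strip (the one containing $x_Q$). By the observation above, the query intersects some input segment if and only if its two endpoints fall in distinct cells of $\Tc$. Hence each query is resolved by two orthogonal point location queries on $\Tc$ followed by a cell comparison, costing $\Oh{\log\log n}$ worst-case time by Theorem~\ref{thm:efficient-orthogonal-location}.

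For the construction in time $\Oh{n\log\log n}$, I first coordinate-compress all endpoint $x$- and $y$-coordinates into $[\Oh{n}]$ and then perform a standard left-to-right sweepline over the $\Oh{n}$ endpoint events, maintaining the set of segments currently crossing the sweepline in a van Emde Boas tree keyed by $y$-coordinate. Each event is an insertion or deletion accompanied by predecessor/successor queries that identify the upper and lower active neighbors and emit the corresponding trapezoidal cells; all of this takes $\Oh{\log\log n}$ per event, for a total of $\Oh{n\log\log n}$. I then feed $\Tc$ (of total size $\Oh{n}$) into Theorem~\ref{thm:efficient-orthogonal-location}, incurring another $\Oh{n\log\log n}$ preprocessing time. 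The main technical nuisance will be a clean handling of degeneracies in $\Tc$---collinear endpoints, coincident $y$-coordinates, and segments sharing endpoints---which I plan to dispatch through a standard symbolic perturbation tie-breaking scheme; the sweepline machinery itself and its interface to the vEB structure are entirely classical.
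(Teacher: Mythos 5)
Your reduction is correct and is essentially the paper's: both answer a query by running point-location (equivalently, vertical ray-shooting) from the two endpoints of the query segment and comparing the results. The paper reduces to \ProbVerticalRayShooting and then cites Chan's equivalence with \ProbOrthogonalLocation as a black box, whereas you unfold that cited equivalence into an explicit trapezoidal decomposition plus a sweepline construction --- the same argument at a different level of detail.
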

  \begin{proof}
    The problem admits a~trivial reduction to the \ProbVerticalRayShooting problem, in which it is required to preprocess $\Oh{n}$ horizontal segments in order to construct a data structure that can find, for a~given query point $p$, the lowest horizontal segment intersecting the vertical ray shooting upwards from $p$.
    Namely, a~vertical segment $pq$ intersects some horizontal input segment if and only if the lowest horizontal segment returned by an~instance of \ProbVerticalRayShooting for query point $p$ is different than the segment returned for query point $q$.
    As shown by Chan~\cite{DBLP:journals/talg/Chan13}, \ProbVerticalRayShooting is equivalent to \ProbOrthogonalLocation, so we can use Theorem~\ref{thm:efficient-orthogonal-location}.
  \end{proof}

We are now ready to give the data structure for \textsc{Submatrix Types}.

\begin{proof}[Proof of Lemma~\ref{lem:submatrix-types-ds}]
  We interpret $M$ geometrically by representing the matrix as an~$n \times n$ square in the plane, in which each entry corresponds to a~single unit square.
  Let $\Area{M} \coloneqq \bigcup \{[j - 1, j] \times [i - 1, i] \,\mid\, M[i, j] = 1\}$ be the area covered by the $1$ entries of $M$ in this interpretation.
  We remark that $\Area{M}$ is an~orthogonal subset of $[0, n]^2 \subseteq \R^2$.
  Equivalently, $\Area{M}$ can be defined as the (interior-disjoint) union of the rectangles $[j_1 - 1, j_2] \times [i_1 - 1, i_2]$ for each submatrix $M[i_1\ldots i_2, j_1\ldots j_2] \in \Kc$.
  The boundary $\bnd{\Area{M}}$ of $\Area{M}$ can be found in $\Oh[d]{n}$ time by observing that a~unit segment $s$ with integral coordinates is a~subset of $\bnd{\Area{M}}$ if and only if it belongs to the boundary of exactly one rectangle corresponding to a submatrix in $\Kc$.
  For convenience, let $\WArea{M}$ denote the region $\Area{M}$ with all coordinates doubled, and $\bnd{\WArea{M}}$ denote the boundary of $\WArea{M}$.
  
  We now use Theorem~\ref{thm:efficient-orthogonal-location} to set up a data structure \InstOrthogonalLocation for \ProbOrthogonalLocation for $\bnd{\WArea{M}}$.
  Then, given access to \InstOrthogonalLocation, we can verify in $\Oh{\log \log n}$ time whether $M[i, j] = 1$ for given $(i,j) \in \range{n}^2$ by querying \InstOrthogonalLocation whether the point $(2j - 1, 2i - 1)$ belongs to some region that is a~part of $\Area{M}$.
  This, in turn, enables us to locate all corners of $M$.
  Indeed, observe that if $C$ is a corner in $M$, then at least one of the $4$ entries of $C$ is a corner of some submatrix in $\Kc$.
  Hence, by iterating over all submatrices $M[i_1\ldots i_2, j_1\ldots j_2] \in \Kc$ and examining the neighborhood of each of the cells $(i_1, j_1)$, $(i_1, j_2)$, $(i_2, j_1)$, $(i_2, j_2)$ of $M$, we can find all corners in $M$.
  This takes $\Oh[d]{n}$ queries to \InstOrthogonalLocation, and results in a~maximum of $\Oh[d]{n}$ corners in $M$ (reiterating the statement of Lemma~\ref{lem:corner-bound}).
  Henceforth, let $\Bc$ be the set of those pairs $(j,i)$ for which $\{M[i, j],\, M[i, j+1],\, M[i+1, j],\, M[i+1, j+1]\}$ is a~corner in $M$.
  
  Finally, let us consider a~query about the type of a submatrix $S$ of $M$.
  Say that $S=M[r_1\ldots r_2,c_1\ldots c_2]$, that is, $S$ spans the block of rows from $r_1$ to $r_2$, inclusive, and the block of columns from $c_1$ to $c_2$, inclusive ($1 \leq r_1 \leq r_2 \leq n$, $1 \leq c_1 \leq c_2 \leq n$).
  
  We first focus on deciding whether $S$ is mixed or not.
  Recall from Lemma~\ref{lem:mixed-has-corner} that $S$ is mixed if and only if it contains a~corner.
  For this reason, we use Theorem~\ref{thm:efficient-orthogonal-emptiness} to set up a data structure \InstOrthogonalEmptiness for \ProbOrthogonalEmptiness for $\Bc$; this takes time $\Oh[d]{n \log \log n}$.
  Now, $S$ is mixed if and only if $r_1 < r_2$, $c_1 < c_2$, and the rectangle $[c_1, c_2 - 1] \times [r_1, r_2 - 1] \subseteq \R^2$ covers any point in $\Bc$.
  This condition can be verified using \InstOrthogonalEmptiness in time $\Oh[d]{\log \log n}$.
  
  From now on assume that $S$ is not mixed.
  We will now decide whether $S$ is vertical (possibly constant).
  This can be easily done using the following observation:
  
  \begin{claim}
    \label{cl:types-when-vertical}
    Assume that $S$ is not mixed.
    Then $S$ is vertical if and only if the vertical segment $s$ connecting the points $(2c_1 - 1, 2r_1 - 1)$ and $(2c_1 - 1, 2r_2 - 1)$ intersects no horizontal segments of $\bnd{\WArea{M}}$.
  \end{claim}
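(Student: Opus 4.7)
The plan is to translate the geometric condition on the segment $s$ into a purely combinatorial condition about the leftmost column of $S$, and then use the non-mixed hypothesis to promote it to a statement about all columns of $S$.

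First, I would make precise the geometric dictionary between $M$ and $\WArea{M}$. By construction, the entry $M[i,j]$ corresponds in $\WArea{M}$ to the square $[2(j-1),2j]\times[2(i-1),2i]$, so that the horizontal unit segment joining $(2(j-1),2i)$ and $(2j,2i)$ lies in $\bnd{\WArea{M}}$ if and only if $M[i,j]\neq M[i+1,j]$ (with the usual convention that out-of-range entries are $0$; this only matters in the edge case and can be brushed aside since $s$ lives strictly inside the range of $S$).

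Next, I would analyze the intersections of $s$ with horizontal boundary segments. The segment $s$ has $x$-coordinate $2c_1-1$, which is odd, so it cannot slide along any boundary segment; it can only cross one transversally. It passes through $y=2i$ precisely for integers $i$ with $r_1\le i\le r_2-1$, and such a crossing lies on $\bnd{\WArea{M}}$ if and only if $M[i,c_1]\neq M[i+1,c_1]$. Therefore, $s$ intersects no horizontal segment of $\bnd{\WArea{M}}$ if and only if the column $c_1$ of $S$ is constant. This already gives the forward direction of the claim: if $S$ is vertical, then by definition every column of $S$ is constant, in particular the column $c_1$, so $s$ crosses no horizontal segment of $\bnd{\WArea{M}}$.

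The main step, and the place that genuinely uses the hypothesis that $S$ is not mixed, is the converse. Assume that $s$ avoids all horizontal segments of $\bnd{\WArea{M}}$, so the leftmost column of $S$ is constant; I need to conclude that \emph{every} column of $S$ is constant. Since $S$ is not mixed, it is horizontal, vertical, or constant. If $S$ is vertical there is nothing to prove. If $S$ is horizontal, then all columns of $S$ are equal to one another; combined with the leftmost column being constant this forces $S$ to be constant, and in particular vertical. The degenerate cases $r_1=r_2$ (in which $s$ is a single point at odd coordinates and therefore trivially disjoint from $\bnd{\WArea{M}}$, while $S$ is a single-row matrix that is vertical by definition) are handled directly. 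I expect this lifting step to be the only subtle part: it is where the hypothesis that $S$ is non-mixed does the real work, ruling out the scenario in which columns are individually constant on the left edge yet disagree further to the right.
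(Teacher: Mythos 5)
Your proof is correct and follows essentially the same approach as the paper: both reduce the geometric condition to the statement that column $c_1$ of $S$ is constant and then use the non-mixed hypothesis to upgrade this to $S$ being vertical. Your version is slightly cleaner in that it makes the geometric-to-combinatorial dictionary an explicit equivalence before handling either direction, whereas the paper argues the forward implication via an open rectangle disjoint from $\bnd{\WArea{M}}$ and the converse (contrapositively) by constructing the offending horizontal segment directly, but the underlying reasoning is identical.
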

    \begin{proof}
      ($\Rightarrow$)
      If $S$ is vertical, then $M[r_1, c_1] = M[r_1 + 1, c_1] = \dots = M[r_2, c_1]$.
      Thus, the open rectangle $R \coloneqq (2(c_1 - 1), 2c_1) \times (2(r_1 - 1), 2r_2)$ is either fully contained within $\WArea{M}$ (if $M[r_1, c_1] = 1$), or is disjoint with $\WArea{M}$ (otherwise).
      Hence, $R$ is disjoint with $\bnd{\WArea{M}}$.
      Since $s \subseteq R$, the implication follows.
      
      ($\Leftarrow$) Suppose $S$ is not vertical. Hence, it is horizontal and non-constant, so there exists $r \in \{r_1, r_1 + 1, \dots, r_2 - 1\}$ for which $M[r, c_1] \neq M[r + 1, c_1]$.
      Define now the horizontal segment $m$ connecting $(2(c_1 - 1), r)$ with $(2c_1, r)$.
      By $M[r, c_1] \neq M[r + 1, c_1]$ we have that $m \subseteq \bnd{\WArea{M}}$; thus, $m$ is a~part of some horizontal segment $m'$ of $\bnd{\WArea{M}}$.
      Since $m$ intersects~$s$, so does $m'$.
    \end{proof}
  
  By Claim~\ref{cl:types-when-vertical}, we can determine whether $S$ is vertical as follows. We use Theorem~\ref{thm:efficient-orthogonal-segments} to set up a data structure \InstOrthogonalHorizontal for \ProbOrthogonalSegment for the set of horizontal segments of $\bnd{\WArea{M}}$.
  Since $\bnd{\WArea{M}}$ consists of $\Oh[d]{n}$ segments, \InstOrthogonalHorizontal can be constructed in $\Oh[d]{n \log \log n}$ time.
  Then, verifying whether $S$ is vertical can be reduced to a~single query on \InstOrthogonalHorizontal, which takes $\Oh[d]{\log \log n}$ time.
  Using a~symmetric data structure for vertical segments of $\bnd{\WArea{M}}$, we can also verify whether $S$ is horizontal.
  If $S$ is both vertical and horizontal, then it is constant; in this case, a~single call to \InstOrthogonalLocation is enough to determine whether $S$ is constant $0$ or constant $1$.
  
  Summing up, the construction of the data structure takes $\Oh[d]{n \log \log n}$ time, and each query requires time $\Oh[d]{\log \log n}$ in the worst case. This concludes the proof.
\end{proof}

\subsection{Efficient approximation of zone families}
\label{ssec:construction-meat}

\newcommand{\zone}[3]{\ensuremath{\mathsf{zone}_{#1}({#2}, {#3})}}
\newcommand{\partition}{\ensuremath{\mathcal{U}}\xspace}
\newcommand{\mapping}{\ensuremath{\xi}\xspace}
\newcommand{\GetFirstProc}{\ensuremath{\mathsf{getFirst}()}\xspace}
\newcommand{\ExtendRightProc}{\ensuremath{\mathsf{extendRight}()}\xspace}
\newcommand{\CoverProc}[2]{\ensuremath{\mathsf{cover}(#1, #2)}\xspace}
\newcommand{\ZoneApproxAlgo}{\textsc{Zone Approximation}\xspace}

In this section, we use the findings of Section~\ref{ssec:submatrix-types} to construct a~concise representation of a~given family of zones in the input matrix.
Recall from Lemma~\ref{lem:small_family_frac} that for a~$d$-twin-ordered $n \times n$ matrix $M$, its $s$-zone family $\Fc_s(M)$, defined as the set of distinct zones in the $s$-regular division of $M$, contains at most $\Oh[d]{\frac{n}{s}}$ submatrices.
We shall now generalize this result: given $s \in \N$ and access to $M$ via an oracle for \ProbSubmatrixTypes, we will efficiently compute a~subset $\Gc_s(M)$ of zones of the $s$-regular division of $M$ that {\em{represents}} the $s$-zone family $\Fc_s(M)$ in the following sense: we require that every submatrix in $\Fc_s(M)$ should be represented by at least one zone in $\Gc_s(M)$ equal to this submatrix.
The subset $\Gc_s(M)$ will still contain at most $\Oh[d]{\frac{n}{s}}$ submatrices; hence, it can be regarded as an~efficient over-approximation of $\Fc_s(M)$.
Moreover, we will give an~effective mapping $\xi$, sending any zone of the $s$-regular division of $M$ onto its representative in $\Gc_s(M)$.

Formally, assume that $s \mid n$.
For $i, j \in \left[\frac{n}{s}\right]$, by $\zone{s}{i}{j}$ we mean the zone of the $s$-regular division of $M$ in the intersection of the $i$-th block of rows and the $j$-th block of columns of the division.
Similarly, let $\zone{s}{i_1 \dots i_2}{j_1 \dots j_2} \coloneqq \bigcup_{i = i_1}^{i_2} \bigcup_{j = j_1}^{j_2} \zone{s}{i}{j}$.
We shall prove the following observation:

\begin{lemma}
  \label{lem:efficient-zone-representatives}
  Assume that an~$n \times n$ matrix $M$ is $d$-twin-ordered for a fixed $d \in \N$ and is given through an~oracle $\Tc$ for \ProbSubmatrixTypes from Lemma~\ref{lem:submatrix-types-ds}. Then
  there exists an~algorithm \ZoneApproxAlgo which, given an~integer $s \mid n$, computes:
  \begin{itemize}
    \item a~set $\Gc_s(M) \subseteq \left[\frac{n}{s}\right]^2$ of size $\Oh[d]{\frac{n}{s}}$ and
    \item a~mapping $\mapping_s\,\colon\left[\frac{n}{s}\right]^2 \to \Gc_s(M)$,
  \end{itemize}
   such that for each $i, j \in \left[\frac{n}{s}\right]$, if $(i', j') \coloneqq \mapping(i, j)$ then $\zone{s}{i}{j} = \zone{s}{i'}{j'}$.
  Both $\Gc_s(M)$ and $\mapping_s$ are constructed by \ZoneApproxAlgo in time $\Oh[d]{\frac{n}{s} \log \frac{n}{s} \log \log n}$. For given $(i,j)\in \left[\frac{n}{s}\right]^2$, the value $\mapping_s(i,j)$ can be computed in time $\Oh[d]{\log \log n}$.
\end{lemma}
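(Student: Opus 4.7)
Let $N = n/s$. I will build two auxiliary structures on top of the oracle $\Tc$ for \ProbSubmatrixTypes provided by Lemma~\ref{lem:submatrix-types-ds}: a hierarchical decomposition of $M$ that identifies every mixed zone of the $s$-regular division, and per-row / per-column predecessor-search catalogues of the horizontal and vertical strips. Together these supply the representatives and support the fast mapping.

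Starting from the full matrix, I recursively split each block into four equal quadrants whenever $\Tc$ reports it as mixed, halting once the block is non-mixed or has become one leaf zone of the $s$-regular division. By Lemma~\ref{lem:mixedzones} applied at each level $k \in \{0, \ldots, \log N\}$, at most $\Oh[d]{2^k}$ super-zones at that level can be mixed, so the total number of oracle calls telescopes to $\Oh[d]{N}$ and this phase runs in $\Oh[d]{N \log \log n}$ time. The result partitions the $N \times N$ grid into $\Oh[d]{N}$ \emph{stopping super-zones}---the mixed leaf zones, plus non-mixed rectangular blocks each labelled with its type (constant $0$/$1$, horizontal, or vertical) read off from the halting query.

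The key structural observation driving the next step is that inside a horizontal non-constant stopping super-zone $S$, all leaf zones sharing one row block of the $s$-regular division must be equal, because every row of $S$ is a single constant value and therefore each leaf zone in a fixed row block inherits the same row-value vector; a symmetric statement holds for vertical super-zones. Thus, for each horizontal $S$ I iterate over the $N / 2^{k(S)}$ row blocks it covers and query $\Tc$ on one leaf zone per row block to decide whether the chunk is constant or non-constant horizontal; in the latter case I register a horizontal \emph{strip piece} spanning $S$'s column range in that row block. Vertical strip pieces are extracted analogously from vertical super-zones by iterating over their column blocks instead. Summing $N/2^k$ over $\Oh[d]{2^k}$ stopping super-zones per level produces $\Oh[d]{N \log N}$ such pairs, each handled by a single oracle call, for a total of $\Oh[d]{N \log N \log \log n}$ time. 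I then bucket the raw pieces per row block into a y-fast trie keyed by starting column and, for every pair of column-consecutive pieces, merge them iff $\Tc$ reports their union as horizontal. Because two horizontal chunks in a common row block have horizontal union iff their row-value vectors coincide iff they are equal as matrices, this exactly recovers the maximal strips of the $s$-regular division, of which there are $\Oh[d]{N}$ by Lemma~\ref{lem:cnt_strips_bound}; vertical strips are built symmetrically.

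The set $\Gc_s(M)$ is finally assembled from one position per mixed leaf zone, one per horizontal or vertical strip, and a cached witness for constant $0$ and constant $1$ if either appears; hence $|\Gc_s(M)| = \Oh[d]{N}$. A query $\mapping_s(i,j)$ first calls $\Tc$ on $\zone{s}{i}{j}$: if the zone is mixed I return $(i,j)$ itself; if constant I return the cached witness; if horizontal (resp.\ vertical) I predecessor-search the trie attached to row block $i$ (resp.\ column block $j$) to locate the strip containing $(i,j)$ and return its stored representative. Every branch costs $\Oh[d]{\log \log n}$. The delicate point I anticipate is verifying that no non-constant horizontal leaf zone can hide inside a constant, vertical, or mixed stopping block (so every strip piece is genuinely found), and that the union-is-horizontal test is both necessary and sufficient for membership in the same strip; both arguments rest on the structural material collected in Section~\ref{sec:structural-properties}, in particular the equivalence between horizontality of a union and pairwise equality of horizontal zones used implicitly in the proof of Lemma~\ref{lem:cnt_strips_bound}.
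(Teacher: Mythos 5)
Your proposal is correct in outline, but it takes a genuinely different route from the paper. The paper's algorithm sweeps the $s$-zone grid in row-major order, repeatedly taking the first uncovered zone, querying its type, and greedily extending it into a maximal rectangle of pairwise-equal zones (a single mixed zone, a whole strip, or a constant block); representatives are then located via one instance of \ProbOrthogonalLocation on the resulting rectangular subdivision. The technical heart of that argument is bounding the number of \emph{constant} rectangles the greedy sweep produces, which is done through a ``guarded submatrix'' charging scheme (Claims~\ref{cl:each_submatrix_guarded} and~\ref{cl:few-constant-submatrices}): every interior constant rectangle has a mixed shell, which is then charged to a mixed zone, mixed cut, or split corner. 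Your quadtree over the $s$-zone grid sidesteps this entirely: since you never need to \emph{subdivide} the constant region (two cached witnesses suffice for $\mapping_s$), the only counting you need is the quadtree-node bound (which follows from Lemma~\ref{lem:mixedzones} level by level) plus Lemmas~\ref{lem:mixedzones} and~\ref{lem:cnt_strips_bound} at the leaf level. The cost is extra machinery: a quadtree, per-row-block strip pieces, merges, and a y-fast trie per row block. Your telescoping bound is phrased as ``at most $\Oh[d]{2^k}$ super-zones at that level can be mixed,'' which by itself does not bound the number of oracle calls at level $k$; the bound you need is that there are at most $4\cdot\Oh[d]{2^{k-1}}=\Oh[d]{2^k}$ nodes of \emph{any} kind at level $k\geq 1$, because level-$k$ nodes are exactly the children of mixed level-$(k-1)$ nodes --- this is what actually drives both the $\Oh[d]{N}$ oracle-call count and the $\Oh[d]{N\log N}$ count of raw strip pieces. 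The two ``delicate points'' you flag are both true and provable from the material already established: a non-constant horizontal zone cannot sit inside a vertical or constant stopping block because any submatrix of a vertical (resp.\ constant) matrix is vertical (resp.\ constant), and a mixed stopping block only exists at the bottom level, where it is itself a single zone; and for two column-adjacent, non-constant horizontal chunks in the same row block, the union is horizontal iff the per-row constants agree iff the zones are equal iff they lie in the same strip (and if the union were vertical, both chunks would be constant). Two small points worth noting: y-fast tries bring in randomization, whereas the paper's construction routes the location through Chan's structure from Theorem~\ref{thm:efficient-orthogonal-location}; and sorting the $\Oh[d]{N\log N}$ raw pieces by (row block, starting column) must be done by radix/bucket sort to stay within the time budget. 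With those details filled in, your algorithm meets the same $\Oh[d]{\frac{n}{s}}$ size, $\Oh[d]{\frac{n}{s}\log\frac{n}{s}\log\log n}$ construction, and $\Oh[d]{\log\log n}$ query bounds as the paper's.
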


The remainder of this section is devoted to the proof of Lemma~\ref{lem:efficient-zone-representatives}.

\subparagraph*{Sketch of the algorithm.}
In \ZoneApproxAlgo, we  implement the following strategy. First, create a~partition \partition of the $s$-regular partition of $M$ into $\Oh[d]{\frac{n}{s}}$ contiguous rectangular submatrices, each comprising pairwise equal zones.
Then,  form $\Gc_s(M)$ by picking one zone from each submatrix in \partition.
For the mapping $\mapping_s$, we  set up an~instance of \ProbOrthogonalLocation (Theorem~\ref{thm:efficient-orthogonal-location}).
Given a~query $(i, j)$, we  locate the rectangular submatrix of \partition containing $\zone{s}{i}{j}$, and return the representative of this submatrix.
    
We  consider the following submatrices for \partition:
\begin{itemize}
  \item individual mixed zones;
  \item separate strips (horizontal and vertical); and
  \item constant submatrices of $M$.
\end{itemize}
We now sketch how \partition is populated.
Roughly speaking, the algorithm  traverses all zones $\zone{s}{i}{j}$ of the $s$-regular partition in the row-major order (in the increasing order of $i$, breaking ties in the increasing order of $j$).
The algorithm will repeatedly choose the zone $Z = \zone{s}{i}{j}$ outside of $\bigcup \partition$ that is the earliest in the row-major order.
Then, for some suitably chosen integers $i' \geq i$, $j' \geq j$, a~new submatrix $\zone{s}{i \dots i'}{j \dots j'}$, disjoint with $\bigcup \partition$, will be created and added to \partition.
The new submatrix will have $Z$ in its top-left corner.

Moreover, this process will at each step preserve the following invariant: within each column block of the $s$-partition, \partition covers a~prefix of zones with respect to the row order. Formally, if $\zone{s}{i}{j}$ is a~part of some submatrix of \partition for $i \geq 2$, then so is $\zone{s}{i - 1}{j}$.
Indeed: adding $\zone{s}{i \dots i'}{j \dots j'}$ to \partition would break the invariant only if there existed an~uncovered zone $\zone{s}{\bar{i}}{\bar{j}}$ for some $\bar{i} \in \{1, 2, \dots, i-1\}$, $\bar{j} \in \{j, j+1, \dots, j'\}$.
However, by the choice of $(i, j)$, all such zones already belong to \partition.

\subparagraph*{Auxiliary data structure.}
In order to implement \ZoneApproxAlgo, we first need to show an~efficient way to find the earliest zone in the row-major order that is disjoint with $\bigcup \partition$, under the aforementioned updates of \partition:
    
\begin{lemma}
  \label{lem:rectangle-partition-maintenance}
  Given $m \in \N$, we can construct a~data structure maintaining an~initially empty family \partition of pairwise disjoint subsets of $[m]^2$ under the following queries and updates:
  \begin{itemize}
    \item \GetFirstProc: returns the lexicographically smallest pair of integers $(i, j) \in [m]^2$ outside of $\bigcup \partition$, or $\bot$ if no such pair exists;
    \item \ExtendRightProc: let $(i, j) \coloneqq \GetFirstProc$; the function returns the largest integer $j' \in \{j, j+1, \dots, m\}$ such that all elements $(i, j), (i, j + 1), \dots, (i, j')$ are disjoint with $\bigcup \partition$;
    \item $\CoverProc{i'}{j'}$: let $(i, j) \coloneqq \GetFirstProc$; the function adds a~rectangle $\{i, i + 1, \dots, i'\} \times \{j, j + 1, \dots, j'\}$ as a~new subset of \partition; it is required that $i' \geq i$, $j' \geq j$, and the rectangle is disjoint with $\bigcup \partition$.
  \end{itemize}
  The data structure processes any query in time $\Oh{\log m}$.
  \end{lemma}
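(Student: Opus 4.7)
The plan is to encode the family $\partition$ compactly by a single array $c = (c_1, \ldots, c_m) \in \{0, 1, \ldots, m\}^m$, where $c_j$ records the number of elements of $\bigcup \partition$ lying in column $j$. The property that makes this encoding faithful is the following \emph{prefix invariant}: for every $j \in [m]$, the set $\bigcup \partition \cap ([m] \times \{j\})$ equals exactly $\{(1, j), (2, j), \ldots, (c_j, j)\}$. Initially all $c_j = 0$, so the invariant holds trivially. I will show that it is preserved by any legal $\CoverProc{i'}{j'}$ call: if $(i, j) = \GetFirstProc$ at that moment, then $c_j = i - 1$ and $c_j = \min_k c_k$; the required disjointness of the rectangle $\{i, \ldots, i'\} \times \{j, \ldots, j'\}$ with $\bigcup \partition$ forces $c_k < i$ for each $k \in [j, j']$, while the minimality of $c_j$ gives $c_k \geq i - 1$, so in fact $c_k = i - 1$ for all such $k$. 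Hence updating $c_k \gets i'$ for $k \in [j, j']$ produces a state consistent with the invariant.

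Under this encoding, the three operations reduce to standard queries on $c$. Namely, $\GetFirstProc$ returns $(1 + \min_j c_j,\, j^*)$ where $j^*$ is the leftmost index attaining $\min_j c_j$, or $\bot$ if this minimum equals $m$; $\ExtendRightProc$ asks for the largest $j'$ with $c_j = c_{j+1} = \ldots = c_{j'}$, which, using that $c_k \geq c_j$ for all $k$, is equivalent to locating the smallest $k > j$ with $c_k > c_j$ and returning $k - 1$ (or $m$ if no such $k$ exists); and $\CoverProc{i'}{j'}$ is precisely the range assignment $c_k \gets i'$ for $k \in [j, j']$.

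To realize these queries efficiently, I will build a segment tree over $[m]$ in which every node stores the minimum value of $c$ in its range together with the leftmost index achieving that minimum, and supports lazy propagation of ``set-to-constant'' tags for range assignment. With this setup, $\GetFirstProc$ is read from the root in $\Oh{1}$ time, $\CoverProc{i'}{j'}$ is a textbook range-assign update running in $\Oh{\log m}$ time, and $\ExtendRightProc$ is realized by a top-down descent that, guided by the stored minima, locates the first position $k \geq j$ whose value strictly exceeds $c_j$, again in $\Oh{\log m}$ time. I expect the only real content of the proof to be verifying the prefix invariant above; once it is established, the remaining work is routine segment-tree bookkeeping.
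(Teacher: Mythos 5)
Your proposal is correct and follows essentially the same route as the paper: you establish the same column-prefix invariant (that $\bigcup\partition$ restricted to any column $j$ is a prefix $\{1,\dots,c_j\}$ of rows, preserved by the constrained $\CoverProc{\cdot}{\cdot}$ calls because lexicographic minimality and disjointness pin $c_k = i-1$ on $[j,j']$), thereby reducing the three operations to min-with-argmin, first-strict-increase-to-the-right, and range-assignment on the 1D array $c$. The only difference is the choice of the underlying sequential structure — a segment tree with lazy range-assign tags versus the paper's balanced BST over maximal constant runs of $c$ — and both deliver $\Oh{\log m}$ per operation.
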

  \begin{proof}
    Consider an~array $H[1\dots m]$, where $H[i]$ ($i \in \{1, \dots, m\}$) is defined as the number of distinct $j \in \{1, \dots, m\}$ such that $(i, j) \in \bigcup \partition$.
    By the invariant above, for any pair of integers $i, j \in [m]$, we have that $(i, j) \in \bigcup \partition$ if and only if $j \leq H[i]$.
    The array will be maintained implicitly using a~set $\Sc$ of triples $(h, \ell, r)$ of integers, denoting the maximal intervals of equal values in $H$.
    Formally, $(h, \ell, r) \in \Sc$ if and only if $\ell \leq r$, $H[\ell] = H[\ell + 1] = \dots = H[r] = h$, and $H[\ell - 1] \neq h$, $H[r + 1] \neq h$ (we assume that $H[0] = H[m+1] = \infty$).
    Initially, $\Sc = \{(0, 1, m)\}$.
    The set also maintains the lexicographic order on the triples of integers, as well as there is a linked list that links the elements of $\Sc$ in the natural order in $[1\ldots m]$ (that is, by increasing second, or equivalently third, coordinate).
    Thus, if $\Sc$ is implemented using a~balanced binary search tree, such as an~AVL tree, we can perform any update or query on $\Sc$ in worst-case $\Oh{\log m}$ time.
    
    Given the representation of $H$ through $\Sc$, answering queries \GetFirstProc and \ExtendRightProc is easy in $\Oh{\log m}$ time: let $(h, \ell, r)$ be the lexicographically smallest element of $\Sc$.
    If $h = m$, we return $\bot$; otherwise, $\GetFirstProc = (h + 1, \ell)$ and $\ExtendRightProc = r$.
    Now, consider $\CoverProc{i'}{j'}$ for $i' \geq h+1$, $j' \geq \ell$.
    We must have that $j' \leq r$: by the choice of $(h, \ell, r)$, we know that $H[r + 1] > H[r]$, so the new rectangle cannot extend past the $r$th column of $[m]^2$.
    Hence, we can update $H$ through $\Sc$ by:
    \begin{itemize}
     \item removing $(h, \ell, r)$;
     \item adding $(i', \ell, j')$ back to $\Sc$; and if $j' < r$, also inserting $(h, j'+1, r)$ back to $\Sc$; and
     \item merging $(i', \ell, j')$ with the neighboring intervals in $\Sc$ if necessary, to ensure that $\Sc$ only keeps the maximal intervals of equal values in $H$.
    \end{itemize}
    This involves $\Oh{1}$ updates to $\Sc$.
    Thus, a~single update requires $\Oh{\log m}$ time.
  \end{proof}

  We remark that Lemma~\ref{lem:rectangle-partition-maintenance} implements an~auxiliary method \ExtendRightProc.
  This method is not required to locate the earliest uncovered zone, but will be useful later in the algorithm.
  
  \subparagraph*{Implementation of the algorithm.}
  We now give the implementation of \ZoneApproxAlgo.
  We set up an~instance $\Pc$ of the data structure from Lemma~\ref{lem:rectangle-partition-maintenance} for $m = \frac{n}{s}$.
  In $\Pc$, each element $(i, j) \in [m]^2$ will correspond to the zone $\zone{s}{i}{j}$ of the $s$-regular division.
  Also, recall that $\Tc$ is an~instance of the data structure for \ProbSubmatrixTypes (Lemma~\ref{lem:submatrix-types-ds}) for the matrix $M$.
  We will populate \partition while maintaining the following invariants:
  \begin{enumerate}[(I)]
    \item \label{item:invariant-prefix} within each column block of the $s$-partition, \partition covers a~prefix of zones with respect to the row order; and
    \item \label{item:invariant-strip} each strip of the $s$-partition either is an~element of \partition, or is disjoint with $\bigcup \partition$.
  \end{enumerate}
  Note that we have already shown that Invariant (\ref{item:invariant-prefix}) is preserved throughout the algorithm.
  
  \ZoneApproxAlgo consists of a~main loop which performs the following operations repeatedly: let $(i, j) \coloneqq \GetFirstProc$.
  Depending on the type of $\zone{s}{i}{j}$, we will create a~new rectangular submatrix $S$ of $M$, disjoint with all elements of \partition so far, and add $S$ to \partition by calling $\CoverProc{\cdot}{\cdot}$.
  The loop is repeated until $\GetFirstProc = \bot$.
  
  Thus, assume that some submatrices have been already added to \partition, and let integers $i, j, j_{\max}$ be so that $(i, j) = \GetFirstProc$ and $j_{\max} = \ExtendRightProc$.
  Let $Z \coloneqq \zone{s}{i}{j}$.
  We find the type of $Z$ by a~single call to $\Tc$.
  What we do next depends on the type of the zone:
  \begin{itemize}
    \item \emph{Mixed zone.} In this case, we simply add $Z$ to \partition by calling $\CoverProc{i}{j}$ and proceed to the next iteration of the loop.
    \item \emph{Non-constant vertical zone.}
    We perform a~binary search to locate the largest integer $i' \in \{i, i + 1, \dots, \frac{n}{s}\}$ for which the submatrix $Z' \coloneqq \zone{s}{i\dots i'}{j \dots j}$ is vertical.
    This requires $\Oh{\log\frac{n}{s}}$ calls to $\Tc$.
    Then we add $Z'$ to \partition by calling $\CoverProc{i'}{j}$.
    \item \emph{Non-constant horizontal zone.}
    As above, use binary search to find the largest index $j' \in \{j, j + 1, \dots, j_{\max}\}$ for which the submatrix $Z' \coloneqq \zone{s}{i \dots i}{j \dots j'}$ is horizontal.
    We add $Z'$ to \partition by calling $\CoverProc{i}{j'}$.
    \item \emph{Constant zone.}
    We use the same binary search as in the vertical case to locate the largest index $i' \in \{i, i+1, \dots, \frac{n}{s}\}$ such that the submatrix $Z' \coloneqq \zone{s}{i \dots i'}{j \dots j}$ is constant.
    We then run another binary search to find the largest index $j' \in \{j, j+1, \dots, j_{\max}\}$ such that the submatrix $Z'' \coloneqq \zone{s}{i \dots i'}{j \dots j'}$ is constant.
    Then, we add $Z''$ to \partition by calling $\CoverProc{i'}{j'}$.
  \end{itemize}
  
  It is easy to see that Invariants~(\ref{item:invariant-prefix}) and (\ref{item:invariant-strip}) ensure that the new submatrix is disjoint with $\bigcup \partition$.
  Then, Invariant~(\ref{item:invariant-strip}) guarantees that the zone $Z$ in the vertical and horizontal cases is the earliest zone in the row-major order of the strip $S$ containing $Z$, and $S$ is disjoint with $\bigcup \partition$.
  Thus, by the definition of a~strip as a~maximal vertical of horizontal submatrix, the presented binary search scheme will find the submatrix $Z'$ equal to $S$.
  Adding the strip to \partition maintains the invariant.
  
  After the main loop terminates, \partition is a~partitioning of $M$ into rectangular submatrices of $M$: mixed zones, strips, and a~number of constant submatrices.
  For each submatrix, we locate its earliest zone $\zone{s}{i}{j}$ in the row-major order, and we add $(i, j)$ to $\Gc_s(M)$.
  Thus, $|\Gc_s(M)| = |\partition|$.
  For $\mapping_s$, observe that \partition is isomorphic to a~subdivision of the square $[0, \frac{n}{s}]^2 \subseteq \R^2$ into rectangular regions, each corresponding to a~single submatrix of \partition.
  Thus, we instantiate an~instance $\Ic_L$ of \ProbOrthogonalLocation for this set of rectangles.
  Each query $\mapping_s(i, j)$ is relayed to $\Ic_L$.
  The answer from $\Ic_L$ can be translated into a~reference to the rectangular submatrix $S$ of $M$ containing $\zone{s}{i}{j}$.
  The value of $\mapping_s(i, j)$ can then be immediately deduced from $S$.
  
  \subparagraph*{Analysis of the algorithm.}
  First, we bound the number of iterations of the main loop:
  
  \begin{lemma}
    \label{lem:small-partition}
    $|\partition| \leq \Oh[d]{\frac{n}{s}}$.
  \end{lemma}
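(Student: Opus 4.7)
By the design of \ZoneApproxAlgo, every rectangle in $\partition$ is of one of three types: an individual mixed zone of the $(n/s)$-regular division, a strip (vertical or horizontal), or a constant rectangular submatrix. I would establish $|\partition| \leq \Oh[d]{n/s}$ by bounding the number of rectangles of each type separately.

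For mixed zones, a direct application of Lemma~\ref{lem:mixedzones} to the $(n/s)$-regular division of $M$ gives at most $\Oh[d]{n/s}$ mixed zones. For strips, Lemma~\ref{lem:cnt_strips_bound} likewise yields $\Oh[d]{n/s}$ strips. Since the algorithm places each mixed zone as its own rectangle and each strip as a single rectangle --- the latter guaranteed by Invariant~(\ref{item:invariant-strip}) together with the fact that the binary search in the vertical/horizontal case recovers the whole maximal strip when it first meets a zone of that strip --- these two categories together contribute $\Oh[d]{n/s}$ rectangles to $\partition$.

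The main obstacle is bounding the number of constant rectangles. Here one cannot simply match them with maximal constant regions, because the greedy row-major extension may tile a single connected constant region into many rectangles. My plan is a charging argument: each constant rectangle $R$ with top-left zone $\zone{s}{i}{j}$ is charged to a nearby structural feature of $M$. If $i = 1$ or $j = 1$, charge $R$ to its position along the top row or left column, of which there are only $n/s$ each. Otherwise, Invariant~(\ref{item:invariant-prefix}) guarantees that the zones $\zone{s}{i-1}{j}$ and $\zone{s}{i}{j-1}$ belong to rectangles $R^\uparrow, R^\leftarrow \in \partition$ placed earlier, terminating respectively at row $i-1$ in column $j$ and at column $j-1$ in row $i$. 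Each such termination reflects a change in zone type or value at the border of $R$, which I would attribute to a mixed zone, a strip, a mixed cut, or a split corner of the $(n/s)$-division in the sense of Lemma~\ref{lem:mixed-cuts}.

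The technical subtlety is showing that this charging is $\Oh(1)$-to-one, i.e.\ that each structural witness is charged by only a constant number of constant rectangles. I expect to establish this by a careful case analysis on the types of $R^\uparrow$ and $R^\leftarrow$: if $R^\uparrow$ (say) is itself a constant rectangle of the same value as $R$, then the fact that its downward extension stopped exactly at row $i-1$ in the particular column $j$ must be witnessed by a mixed cut or a corner in the two rows straddling that boundary, local to $R$'s top-left, uniquely (up to $\Oh(1)$) identifying $R$; if $R^\uparrow$ is a mixed zone or a strip, the charge to $R^\uparrow$ is already among the $\Oh[d]{n/s}$ bounded objects, and one verifies it absorbs only $\Oh(1)$ such $R$'s by again using the local geometry around $(i,j)$. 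Combining with Lemmas~\ref{lem:mixedzones}, \ref{lem:cnt_strips_bound} and \ref{lem:mixed-cuts}, which bound the numbers of mixed zones, strips, mixed cuts, and split corners in the $(n/s)$-division all by $\Oh[d]{n/s}$, this yields the required bound on constant rectangles and hence on $|\partition|$.
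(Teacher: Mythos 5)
Your high-level decomposition of $\partition$ into mixed zones, strips, and constant rectangles matches the paper's, and the bounds on the first two categories via Lemmas~\ref{lem:mixedzones} and~\ref{lem:cnt_strips_bound} are exactly the paper's. The gap is in the treatment of constant rectangles, where your proposed charging argument does not hold up.

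Specifically, you claim that when $R^\uparrow$ is a constant rectangle of the same value as $R$, the termination of $R^\uparrow$'s downward extension at row $i-1$ in column $j$ ``must be witnessed by a mixed cut or a corner in the two rows straddling that boundary, local to $R$'s top-left.'' This is false. In \ZoneApproxAlgo the vertical extension of a constant rectangle is determined entirely by its \emph{leftmost} column (the first binary search checks $\zone{s}{i_0 \dots i'}{j_0 \dots j_0}$ for constancy); so if $R^\uparrow$ begins at a column $j_0 < j$, the reason $R^\uparrow$ stopped at row $i-1$ is a defect at column $j_0$, which may be arbitrarily far from $(i,j)$ and not charged to any feature local to $R$. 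Symmetrically, when $R^\uparrow$ is a horizontal strip, the strip can be $R^\uparrow$ for many constant rectangles starting in the row below it, so the charge to the strip alone is not $\Oh{1}$-to-one, and you have not identified a disjoint collection of witnesses to split these charges among.

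What the paper does instead is to establish that every non-boundary constant $S \in \partition$ is \emph{guarded}: its shell (the rectangle obtained by expanding $S$ by one zone on every side) is mixed. Crucially this is not proven by charging but by contradiction: assuming the shell is non-mixed, a case analysis around the zones adjacent to the left side of $S$ shows that the greedy row-major algorithm would already have covered $S$'s top-left zone by an earlier rectangle. Once guardedness is established, each non-boundary $S$ contributes a corner inside its shell, which is then charged to a mixed zone, a mixed cut, or a split corner; because the elements of $\partition$ are pairwise disjoint, each cell of $M$ lies in at most $9$ shells, giving the $\Oh{1}$-to-one bound cleanly. Your sketch leaves the $\Oh{1}$-to-one property as ``I expect to establish this by a careful case analysis,'' but that case analysis is precisely the hard part, and as discussed, the witness you propose for one of the main cases does not exist.
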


  Before we prove Lemma~\ref{lem:small-partition}, let us verify that the time complexity of the algorithm \ZoneApproxAlgo promised in the statement of Lemma~\ref{lem:efficient-zone-representatives} follows from it.
  The main loop of the algorithm runs $\Oh{|\partition|} = \Oh[d]{\frac{n}{s}}$ times.
  Therefore, the oracle $\Tc$ is called $\Oh[d]{\frac{n}{s} \log \frac{n}{s}}$ times in our algorithm, requiring $\Oh[d]{\frac{n}{s} \log \frac{n}{s} \log \log n}$ time in total.
  Next, the time complexity of all calls to $\Pc$ is bounded by $\Oh[d]{\frac{n}{s} \log \frac{n}{s}}$.
  Finally, the time complexity of the construction of $\Ic_L$ is bounded by $\Oh[d]{\frac{n}{s} \log \log \frac{n}{s}}$, and each call to $\mapping_s$ takes time $\Oh{\log \log \frac{n}{s}}$.
  Thus, the time complexity analysis of \ZoneApproxAlgo is complete; it remains to prove Lemma~\ref{lem:small-partition}.
  
    \begin{proof}[Proof of Lemma~\ref{lem:small-partition}]
      In the regular $s$-division of $M$, there are at most $\Oh[d]{\frac{n}{s}}$ mixed zones (Lemma~\ref{lem:mixedzones}) and at most $\Oh[d]{\frac{n}{s}}$ strips (Lemma~\ref{lem:cnt_strips_bound}).
      It remains to bound the number of constant submatrices in $\partition$ by $\Oh[d]{\frac{n}{s}}$.
      
      We say that a~constant submatrix $S = \zone{s}{i_1 \dots i_2}{j_1 \dots j_2}$ is \emph{guarded} if $S$ either touches the boundary of $M$ (i.e., $i_1 = 1$, $j_1 = 1$, $i_2 = \frac{n}{s}$, or $j_2 = \frac{n}{s}$), or the slightly larger submatrix $\zone{s}{i_1-1 \dots i_2+1}{\ j_1 - 1 \dots j_2 + 1}$, called the \emph{shell} of $S$, is mixed.
      
      \begin{claim}
      \label{cl:each_submatrix_guarded}
      Every constant submatrix in \partition is guarded.
    \end{claim}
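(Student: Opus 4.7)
The plan is to argue by contradiction: suppose $S = \zone{s}{i_1 \dots i_2}{j_1 \dots j_2}$ is a constant submatrix in \partition with value $c$, that $S$ does not touch the boundary of $M$ (so $i_1, j_1 \geq 2$ and $i_2, j_2 \leq \frac{n}{s} - 1$), and yet the shell $T = \zone{s}{i_1 - 1 \dots i_2 + 1}{j_1 - 1 \dots j_2 + 1}$ is not mixed. I would derive a contradiction from the maximality of the vertical and horizontal binary searches performed by \ZoneApproxAlgo when $S$ was added, together with Invariant~(\ref{item:invariant-prefix}).

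First, I would consider the submatrix $V \coloneqq \zone{s}{i_1 \dots i_2+1}{j_1} \subseteq T$. By the maximality of the vertical binary search, $V$ is non-constant; and if $V$ were mixed, Lemma~\ref{lem:mixed-has-corner} would place a corner inside $T$, contradicting our assumption. Hence $V$ is non-constant and non-mixed. The vertical alternative cannot hold, because each column of $V$ contains rows of $S$ valued $c$, so column-constancy would make $V$ uniformly $c$; therefore $V$ must be horizontal non-constant. Choosing the smallest $r^\star \in \{i_2 s + 1, \dots, (i_2+1) s\}$ whose row in $V$ is constant $\bar c \coloneqq 1 - c$, I obtain $M[r^\star - 1, k] = c$ and $M[r^\star, k] = \bar c$ for every $k \in \{(j_1 - 1) s + 1, \dots, j_1 s\}$. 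I would then inspect the $2 \times 2$ submatrix of $M$ at rows $r^\star - 1, r^\star$ and columns $(j_1 - 1) s, (j_1 - 1) s + 1$, which lies in $T$ because $j_1 \geq 2$. Writing $\alpha \coloneqq M[r^\star - 1, (j_1 - 1) s]$ and $\beta \coloneqq M[r^\star, (j_1 - 1) s]$, this submatrix equals $\left(\begin{smallmatrix} \alpha & c \\ \beta & \bar c \end{smallmatrix}\right)$, which is a corner in every configuration except precisely $\alpha = c, \beta = \bar c$, giving the desired contradiction in all configurations other than this edge case.

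The main obstacle is handling the edge case $\alpha = c$, $\beta = \bar c$, in which the $\bar c$ pattern in row $r^\star$ extends into column block $j_1 - 1$. I plan to handle it through a case analysis on the submatrix $S_L \in \partition$ covering $\zone{s}{i_1}{j_1 - 1}$, which exists by Invariant~(\ref{item:invariant-prefix}). A mixed-zone $S_L$ supplies a corner inside $T$ directly. A vertical-strip $S_L$ provides a within-row transition in $\zone{s}{i_1}{j_1 - 1}$ that, together with the constant rows of $S$, yields a $2 \times 2$ corner in $T$. A horizontal-strip $S_L$ supplies a within-column transition handled analogously. For a constant $S_L$, if its value equals $c$ then its own horizontal maximality should have extended it to swallow $\zone{s}{i_1}{j_1}$, contradicting that $(i_1, j_1)$ was uncovered when $S$ was processed; and if its value is $\bar c$, a second round of $2 \times 2$ analyses---this time straddling the top edge of $S$ and using the submatrix $S_U \in \partition$ covering $\zone{s}{i_1 - 1}{j_1}$---produces the corner through an analogous case split. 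A fully symmetric argument treats the right side of $S$, where in the case $j_2 = j_{\max} < \frac{n}{s}$ the submatrix of \partition covering $\zone{s}{i_1}{j_2 + 1}$ takes the role of $S_L$.
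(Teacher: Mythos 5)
Your opening move is sound and matches the paper's: the maximality of the vertical binary search forces $V=\zone{s}{i_1\dots i_2+1}{j_1}$ to be non-constant, and since it sits inside the shell $T$ (assumed non-mixed) and its rows inside $S$ are constant $c$, it must be non-constant horizontal. This already yields two different rows of $T$, so $T$ is neither vertical nor constant, hence $T$ is horizontal (this is exactly the paper's conclusion that $\widehat{S}$ is non-constant horizontal). But note that once $T$ is horizontal, row $r^\star-1$ of $T$ is constant $c$ and row $r^\star$ is constant $\bar c$, so your $2\times2$ analysis can \emph{only} land in the edge case $\alpha=c,\beta=\bar c$; it never produces a corner on its own and isn't carving out sub-cases. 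Similarly, $T$ horizontal forces $\zone{s}{i_1}{j_1-1}$ to be constant $c$ (its rows lie in a row block of $S$), so the mixed, vertical-strip, horizontal-strip, and constant-$\bar c$ cases of $S_L$ are all vacuous. (They're also not argued correctly: a non-constant vertical zone sitting to the left of a constant zone has all of its rows identical to one another, so the "within-row transition together with the constant rows of $S$" only produces vertical $2\times2$s, not corners.) The entire burden therefore falls on the single live case, $S_L$ constant $c$.

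And that case has a real gap. You argue that the horizontal maximality of $S_L$ ``should have extended it to swallow $\zone{s}{i_1}{j_1}$,'' but the horizontal binary search for $S_L$ is capped by $j_{\max}=\ExtendRightProc$, which is computed with respect to $S_L$'s \emph{top-left} zone. If $S_L$ also contains $\zone{s}{i_1-1}{j_1-1}$ (i.e., its top row is $<i_1$), then $(\,\cdot\,,j_1)$ in that top row may already have been covered when $S_L$ was created, so $j_{\max}$ can be $<j_1$ and the horizontal extension is legitimately blocked; no contradiction. The paper sidesteps this by first using rectangularity of $\partition$'s elements to argue that at least one of $\zone{s}{i_1}{j_1-1}$ and $\zone{s}{i_1-1}{j_1}$ must lie in a \emph{different} element of $\partition$ than $\zone{s}{i_1-1}{j_1-1}$ (otherwise that element would have to contain $\zone{s}{i_1}{j_1}\in S$). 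This dichotomy is what pins the relevant corner of the adjacent element to row $i_1$ (for the horizontal-extension argument on $S_L$) or to column $j_1$ (for the vertical-extension argument on $S_U$), and it is exactly what your case split on $S_L$'s type does not supply. Your ``second round'' with $S_U$ is keyed to the never-occurring $\bar c$ case rather than to this dichotomy, so as written the proof does not close.
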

      \begin{proof}
      Assume otherwise, and choose an~unguarded constant submatrix $S \in \partition$.
      Then, $S$ is not incident to the boundary of $M$, and each zone adjacent to $S$ (by a~side or by a~corner) is horizontal or vertical.

      Suppose $S = \zone{s}{i_1 \dots i_2}{j_1 \dots j_2}$.
      Let $\widehat{S} \coloneqq \zone{s}{i_1-1 \dots i_2+1}{\ j_1 - 1 \dots j_2 + 1}$ be the shell of $S$.
      Without loss of generality, let $0$ be the common entry in $S$.
    
      Let $S_{\shortarrow{3}}$ be the zone in the top left corner of $S$ (i.e., the earliest zone of $S$ in the row-major order), and $S_{\shortarrow{5}}$ be the zone in the bottom left corner of $S$.
      We also consider the following zones in $\widehat{S}$: $Y_{\shortarrow{2}}$, $Y_{\shortarrow{3}}$, and $Y_{\shortarrow{4}}$, adjacent to $S_{\shortarrow{3}}$ from the top, top left, and left, respectively; and $Y_{\shortarrow{6}}$ and $Y_{\shortarrow{5}}$, adjacent to $S_{\shortarrow{5}}$ from the bottom and bottom left, respectively (Figure~\ref{fig:guarded_submatrix_config}).
    
      \begin{figure}[h]
      \centering
      \begin{tikzpicture}
      \draw [black] (0, 0) rectangle (6, 4);
      \draw [fill=gray!15!white, thick, dashed] (0.8, 0.3) rectangle (4.2, 3.7);
      \draw [fill=gray!50!white, thick] (1.5, 1.0) rectangle (3.7, 3.0);
      \foreach \x in {0.8, 1.5, 2.6, 3.7, 4.2, 5.0} {
        \draw [gray!90!black] (\x, 0) -- (\x, 4);
      };
      \foreach \y in {0.3, 1.0, 1.8, 2.4, 3.0, 3.7} {
        \draw [gray!90!black] (0, \y) -- (6, \y);
      }
      \node at (2.05, 1.4) {$S_{\shortarrow{5}}$};
      \node at (2.05, 2.7) {$S_{\shortarrow{3}}$};
      \node at (2.05, 3.35) {$Y_{\shortarrow{2}}$};
      \node at (1.15, 3.35) {$Y_{\shortarrow{3}}$};
      \node at (1.15, 2.7) {$Y_{\shortarrow{4}}$};
      \node at (1.15, 0.65) {$Y_{\shortarrow{5}}$};
      \node at (2.05, 0.65) {$Y_{\shortarrow{6}}$};
      \end{tikzpicture}
      \caption{An~example constant submatrix $S$ (dark gray). The zones in $\widehat{S} \setminus S$ are marked light gray. In this figure, the first row block is at the top, and the first column block is on the left.}
      \label{fig:guarded_submatrix_config}
      \end{figure}
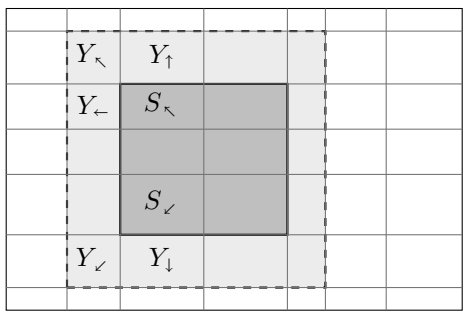
      
      Consider the zone $Y_{\shortarrow{6}}$.
      As $\widehat{S}$ is not mixed, $Y_{\shortarrow{6}}$ is not mixed either.
      Moreover, by the construction of $S$, the submatrix $\zone{s}{i_1 \dots i_2 + 1}{j_1 \dots j_1}$ is not constant, so the zone $Y_{\shortarrow{6}}$ is not constant $0$.
      Also, $Y_{\shortarrow{6}}$ cannot be non-constant vertical; otherwise, a~mixed cut would appear on the boundary between $S_{\shortarrow{5}}$ and $Y_{\shortarrow{6}}$, and $\widehat{S}$ would be mixed.
      Hence, $Y_{\shortarrow{6}}$ is either non-constant horizontal or constant $1$.
      It can be now easily verified that $\widehat{S}$ is non-constant horizontal.
      It immediately follows that: $Y_{\shortarrow{4}}$ is constant $0$; both $Y_{\shortarrow{5}}$ and $Y_{\shortarrow{6}}$ are non-constant horizontal; and both $Y_{\shortarrow{3}}$ and $Y_{\shortarrow{2}}$ are horizontal (possibly constant) and repeat the same column vector.
    
      Since the elements of $\partition$ are rectangular, either $Y_{\shortarrow{2}}$ or $Y_{\shortarrow{4}}$ must belong to a~different element of $\partition$ than $Y_{\shortarrow{3}}$.
      We proceed by refuting both cases.
    
      \smallskip
    
      \emph{\underline{Case 1:} $Y_{\shortarrow{4}}$ is in a~different submatrix $A \in \partition$ than $Y_{\shortarrow{3}}$.}
      Then, $Y_{\shortarrow{4}}$ is the zone in the top-right corner of $A$.
      Also, $A$ is constant $0$, since $Y_{\shortarrow{4}}$ is constant $0$.
      Moreover, $Y_{\shortarrow{5}}$ is not constant and thus remains outside of $A$.
      Hence, the set of rows spanned by $A$ is a~subset of the set of rows spanned by $S$.
      But the top-left zone of $A$ is earlier in the row-major order than the top-left zone of $S$, so $A$ must have been added to \partition by the algorithm before $S$; and when $A$ was being added to \partition, all zones of $S$ were outside of \partition.
      Hence, the binary search scheme would have extended $A$ more to the right, in particular covering $S_{\shortarrow{3}}$ as a~zone --- a~contradiction since $S_{\shortarrow{3}} \notin A$.
    
      \smallskip
    
      \emph{\underline{Case 2:} $Y_{\shortarrow{2}}$ is in a~different submatrix $A \in \partition$ than $Y_{\shortarrow{3}}$.}
      Then, $Y_{\shortarrow{2}}$ is constant; otherwise, it would be non-constant horizontal, and would form a~strip together with $Y_{\shortarrow{3}}$.
      Hence, $Y_{\shortarrow{2}}$ is the zone in the bottom-left corner of $A$.
      But again, $A$ would have been added to \partition by the algorithm before $S$, and $A$ would have extended downwards to cover $S_{\shortarrow{3}}$ --- a~contradiction.
    
    \smallskip
    
      Since all cases have been exhausted, we conclude that $S$ must be guarded.
    \end{proof}
    
    \begin{claim}
    \label{cl:few-constant-submatrices}
    \partition contains at most $\Oh[d]{\frac{n}{s}}$ constant submatrices.
    \end{claim}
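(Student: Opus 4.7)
My plan is to split the constant submatrices in $\partition$ into two groups, based on the two cases of guardedness from Claim~\ref{cl:each_submatrix_guarded}: those touching the boundary of $M$, and those whose shell is mixed. For the first group, each such submatrix must contain at least one zone in the topmost or bottommost row block, or in the leftmost or rightmost column block, of the $s$-regular division. Since the submatrices in \partition are pairwise disjoint, this immediately yields a bound of $\Oh{n/s}$ for this group.

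The main work will go into the second group. For each such $S$, I would invoke Lemma~\ref{lem:mixed-has-corner} on its shell to extract a corner. Because $S$ itself is constant, no corner can lie entirely inside $S$, so the corner must have at least one of its four entries in the ring of zones surrounding $S$. Inspecting how the four cells of the corner distribute among the zones of the $s$-regular division, I would classify the corner as: (i) a corner contained in a single zone, which is then necessarily a mixed zone adjacent to $S$; (ii) a corner straddling a single zone-boundary, which witnesses a mixed cut in the $s$-regular division; or (iii) a corner split among four zones, which is a split corner. In each case, this provides a canonical combinatorial object of the $s$-regular division to which $S$ can be charged.

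The heart of the argument is then to bound how many different $S$'s can be charged to the same object. An object charged by $S$ must involve only zones at positions within Chebyshev distance $1$ of $S$ in the zone grid, and it itself occupies a bounded neighborhood. Because the submatrices in \partition are pairwise disjoint and each zone belongs to at most one of them, there are only $\Oh{1}$ candidate $S$'s meeting this condition for any fixed object. Combining this with Lemma~\ref{lem:mixedzones} for mixed zones and Lemma~\ref{lem:mixed-cuts} for mixed cuts and split corners, each giving a bound of $\Oh[d]{n/s}$ on the respective counts in the $s$-regular division, I conclude that the second group also has size $\Oh[d]{n/s}$, which together with the first group yields the claim.

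The main obstacle, I expect, will be the precise case analysis of how a corner inside the mixed shell interacts with the zones of the $s$-regular division. In particular, one must carefully verify that every placement of the corner (entirely within the ring, straddling a ring/ring boundary, or straddling the $S$/ring boundary) indeed produces a mixed zone, mixed cut, or split corner of the $s$-regular division, and that the resulting charging scheme maintains $\Oh{1}$ multiplicity per object.
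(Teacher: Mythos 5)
Your proposal is correct and matches the paper's proof almost step for step: split off the boundary-touching submatrices, extract a corner from the mixed shell of each remaining $S$ via Lemma~\ref{lem:mixed-has-corner}, classify the corner's placement as inside a mixed zone, straddling a mixed cut, or being a split corner, charge $S$ to that object, and bound the multiplicity by the disjointness of $\partition$ (the paper phrases this as each entry lying in at most $9$ shells; your ``Chebyshev distance $1$'' framing is the same observation). The case analysis you flag as the remaining obstacle is exactly what the paper carries out, and it goes through without surprises.
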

    \begin{proof}
    Obviously, there are at most $4\cdot\frac{n}{s}$ different submatrices of \partition touching the boundary of $M$.
    Consider then a~constant submatrix $S \in \partition$ that does not touch the boundary of $M$.
    Let $S = \zone{s}{i_1\dots i_2}{j_1 \dots j_2}$, and let $\widehat{S} \coloneqq \zone{s}{i_1-1 \dots i_2+1}{\ j_1 - 1 \dots j_2 + 1}$ be the shell of $S$.
    By Claim~\ref{cl:each_submatrix_guarded}, $\widehat{S}$ is mixed, so it contains a~corner $C$.
    We consider three cases, depending on the location of $C$.
    \begin{itemize}
      \item If $C$ is fully contained within some (mixed) zone $Z$, we assign $S$ to $Z$.
      \item If $C$ is split in halves by some (mixed) cut $\mu$, we assign $S$ to $\mu$.
      \item If $C$ is split by the zone boundaries into four $1 \times 1$ submatrices, we assign $S$ to $C$.
    \end{itemize}
    As all submatrices of \partition are pairwise disjoint, each entry of $M$ belongs to at most $9$ shells of the submatrices of \partition.
    In particular, each object (mixed zone, mixed cut or corner) belongs to at most $9$ shells.
    It follows that each such object may be assigned to at most $\Oh{1}$ guarded submatrices.
    Since the $s$-regular division of $M$ contains at most $\Oh[d]{\frac{n}{s}}$ mixed zones (Lemma~\ref{lem:mixedzones}), $\Oh[d]{\frac{n}{s}}$ mixed cuts (Lemma~\ref{lem:mixed-cuts}), and $\Oh[d]{\frac{n}{s}}$ split  corners (also Lemma~\ref{lem:mixed-cuts}), we conclude that \partition contains at most $\Oh[d]{\frac{n}{s}}$ constant submatrices.
    \end{proof}
    As discussed, with Claim~\ref{cl:few-constant-submatrices} established, the statement of the lemma is immediate.
    \end{proof}

\subsection{Construction algorithm for Theorem~\ref{thm:main}}
\label{ssec:construction-finale}

\newcommand{\filtering}{\ensuremath{\psi}}

We now combine the results of Sections~\ref{ssec:submatrix-types} and~\ref{ssec:construction-meat} to construct the data structure described in Section~\ref{sec:main-structure}.
As promised, the construction will take time $\Oh[d]{n \log n \log \log n}$, provided that the input matrix is given by specifying a~rectangle decomposition $\Kc$ with~$|\Kc|\leq \Oh[d]{n}$.
This will conclude the proof of Theorem~\ref{thm:main}.

First, we set up an~instance $\Ic_T$ of the data structure for \ProbSubmatrixTypes on $M$.
$\Ic_T$~can be instantiated in time $\Oh[d]{n \log \log n}$ from $\Kc$ (Lemma~\ref{lem:submatrix-types-ds}).
Recall that each access to $\Ic_T$ is realized in worst-case time $\Oh{\log \log n}$.
From now on, we assume that $M$ is accessed only through $\Ic_T$.

\subparagraph*{Recap of the data structure.}
In Section~\ref{sec:main-structure} we defined parameters $m_0 > m_1 > \dots > m_\ell$ such that $m_0 = n$, $m_\ell \in \Theta_d(\log n)$, $\ell \in \Oh{\log \log n}$, and $m_{i+1} \mid m_i$ and $m_{i+1}^2 \geq m_i$ for each $i \in \{0, 1, \dots, \ell - 1\}$.
The data structure consists of $\ell + 1$ layers: $\layer{0}, \dots, \layer{\ell}$.
Each layer $\layer{i}$ contains one object $\object{Z}$ for each element $Z$ of the zone family $\Fc_{m_i}$.
For $i < \ell$, each object $\object{Z}$ in $\layer{i}$ contains an~$\frac{m_i}{m_{i+1}} \times \frac{m_i}{m_{i+1}}$ array $\ptr$ of pointers to the objects in $\layer{i+1}$, corresponding to the elements of the regular $m_{i+1}$-division of $Z$.
For $i = \ell$, each object $\object{Z}$ in $\layer{\ell}$ stores the entire submatrix $Z$ using $m_\ell^2$ bits.
By carefully choosing the parameters, we guarantee that the data structure occupies $\Oh[d]{n}$ bits.

\subparagraph*{Strategy.}
Assume that the suitable parameters $m_0, \dots, m_\ell$ have already been selected.
We construct the data structure bottom-up, starting from layer $\layer{\ell}$, and concluding with layer~$\layer{0}$.
For each $i = \ell, \ell - 1, \dots, 0$, we construct a~set $\Gc_{m_i}(M)$ of representative zones in the regular $m_i$-division of $M$; and a~mapping $\mapping_{m_i}\,\colon\left[\frac{n}{m_i}\right]^2 \to \Gc_{m_i}(M)$, sending any zone of the $m_i$-regular division of $M$ onto their representative in $\Gc_{m_i}(M)$.
Since $m_i \in \Omega_d(\log n)$, this construction will take time $\Oh[d]{n \log \log n}$ for each $i$.
Moreover, $|\Gc_{m_i}(M)| \in \Oh[d]{\frac{n}{m_i}}$, and the mapping $\mapping_{m_i}$ can be evaluated on any zone in $\Oh{\log \log n}$ time.
Next, we construct $\Fc_{m_i}(M)$ by filtering out identical matrices from $\Gc_{m_i}(M)$; formally, we construct a~surjection $\filtering_{m_i}$ mapping $\Gc_{m_i}(M)$ onto the set of objects in $\layer{i}$ such that $\filtering_i(a, b) = \filtering_i(a', b')$ if and only if $\zone{m_i}{a}{b} = \zone{m_i}{a'}{b'}$.
For $i = \ell$, this will be done directly, by listing all entries in the zone; for $i < \ell$, this will be done by taking all representative zones and comparing the subzones in their $m_{i+1}$-regular divisions.
Finally, the pointers from $\layer{i}$ to $\layer{i+1}$ will be derived from the mapping $\mapping_{m_{i+1}}$.

\subparagraph*{The bottom layer $\layer{\ell}$.}
We begin by constructing $\layer{\ell}$.
Using Lemma~\ref{lem:efficient-zone-representatives}, we find $\Gc_{m_\ell}(M)$ and $\mapping_{m_\ell}$.
Next, for each representative $(i, j) \in \Gc_{m_\ell}(M)$, we examine each individual entry in $\zone{s}{i}{j}$ using $m_\ell^2$ queries to $\Ic_L$.
This requires $\Oh[d]{n \log n}$ queries in total for all elements of $\Gc_{m_\ell}(M)$, resulting in time $\Oh[d]{n \log n \log \log n}$.
Thus, each representative zone is now fully described by a~bitvector of length $m_\ell^2 \in \Oh[d]{\log^2 n}$; and two representative zones $\zone{m_\ell}{a}{b}$, $\zone{m_\ell}{a'}{b'}$ are equal if and only if their corresponding bitvectors are equal.
The bitvectors can be sorted using radix sort in time $\Oh[d]{\frac{n}{\log n} \cdot \log^2 n} = \Oh[d]{n \log n}$.
Then, the zones can be grouped into equivalence classes with respect to their equality; each such class corresponds to one zone in the zone family $\Fc_{m_\ell}(M)$.
Eventually, we pick one matrix from each equivalence class and store it in its entirety as an~object of $\layer{\ell}$.

This concludes the construction of $\layer{\ell}$.
The time complexity is $\Oh[d]{n \log n \log \log n}$, dominated by querying $\Ic_L$ for individual elements of $M$.
The choice of a~matrix from each class induces the surjection $\filtering_{m_\ell}$.

\subparagraph*{Layers $\layer{i}$ for $i < \ell$.}
Assume that the layer $\layer{i+1}$ has already been constructed, together with the~auxiliary set $\Gc_{m_{i+1}}(M)$ and functions $\mapping_{m_{i+1}}$ and $\filtering_{m_{i+1}}$.
By Lemma~\ref{lem:efficient-zone-representatives}, we find $\Gc_{m_i}(M)$ and $\mapping_{m_i}$.

We enumerate the objects in $\layer{i + 1}$ as $A_1, A_2, \dots, A_s$, where $s = |\Fc_{m_{i+1}}(M)| \in \Oh[d]{\frac{n}{m_{i+1}}}$.
Recall that each $A_j$ corresponds to a~different matrix in the zone family $\Fc_{m_{i+1}}$.
Then, take some $(a, b) \in \Gc_{m_i}(M)$ and let $Z = \zone{m_i}{a}{b}$ denote the corresponding~representative zone in~$M$.
We list all subzones $\szone{Z}{p}{q}$ ($p, q \in [m_i / m_{i+1}]$) in the regular $m_{i+1}$-division of $Z$ and interpret each of them as an~element of $\Fc_{m_{i+1}}(M)$.
Since $\szone{Z}{p}{q} = \zone{m_{i+1}}{(a - 1)\frac{m_i}{m_{i+1}} + p}{\, (b - 1)\frac{m_i}{m_{i+1}} + q}$, we can find the unique element $j \in [s]$ such that $A_j = \object{\szone{Z}{p}{q}}$ in $\Oh{\log \log n}$ time by locating the representative zone of $\szone{Z}{p}{q}$ in the $m_{i+1}$-regular division of $M$ using $\mapping_{m_{i+1}}$, and then using $\filtering_{m_{i+1}}$ to find the corresponding object in $\layer{i + 1}$.
This way, we describe each representative zone $Z$ of the regular $m_i$-division of $M$ as an~$\frac{m_i}{m_{i+1}} \times \frac{m_i}{m_{i+1}}$ square matrix $\Dc(Z)$ of elements from $1$ to $s$; again, two representative zones $Z_1$, $Z_2$ are equal to each other if and only if their descriptions are equal.
As $|\Gc_{m_i}| \in \Oh[d]{\frac{n}{m_i}}$ and we spend $(m_i / m_{i+1})^2$ calls to $\mapping_{m_{i+1}}$ for each zone in $\Gc_{m_i}$, this in total requires time
$\Oh[d]{\frac{n}{m_i} \cdot \left(\frac{m_i}{m_{i+1}}\right)^2 \cdot \log \log n}$, which by $m_{i+1}^2 \geq m_i$ is bounded by $\Oh[d]{n \log \log n}$.

We now filter the repeated occurrences of the descriptions of the representative zones in $\Gc_{m_i}$.
We do it by sorting the descriptions in the lexicographic row-major order using any comparison sort, where in each comparison we simply compare two arrays of length $(m_i / m_{i+1})^2$, and then grouping equal descriptions. This takes time $\Oh[d]{\frac{n}{m_i} \cdot \log \frac{n}{m_i} \cdot \left(\frac{m_i}{m_{i+1}}\right)^2}$, which is bounded by $\Oh[d]{n \log n}$.
Afterwards, we pick one representative from each equivalence class and store it as~an object in $\layer{i}$.
Again, each object of $\layer{i}$ corresponds to a~single zone in the zone family $\Fc_{m_i}$, and the construction above naturally gives rise to the surjection~$\filtering_{m_i}$.
Given an~object $\object{Z} \in \layer{i}$, the pointers from $\object{Z}$ to the objects of $\layer{i+1}$ can be immediately deduced from $\Dc(Z)$ and the sequence $A_1, A_2, \dots, A_s$.

\subparagraph*{Summary.}
We constructed the bottom-most layer $\layer{\ell}$ in time $\Oh[d]{n \log n \log \log n}$.
For each $i = \ell - 1, \ell - 2, \dots, 0$, the construction of $\layer{i}$ takes time $\Oh[d]{n \log n}$, dominated by the comparison sort of the descriptions of the zones.
Since $\ell \in \Oh{\log \log n}$, we conclude that the time complexity of the entire construction is $\Oh[d]{n \log n \log \log n}$.
Therefore, the constructive part of Theorem~\ref{thm:main} is proved.


\bibliographystyle{plainurl}
\bibliography{references}

\begin{thebibliography}{10}

\bibitem{tww2}
{\'{E}}douard Bonnet, Colin Geniet, Eun~Jung Kim, St{\'{e}}phan Thomass{\'{e}},
  and R{\'{e}}mi Watrigant.
\newblock Twin-width {II:} small classes.
\newblock In {\em 2021 {ACM-SIAM} Symposium on Discrete Algorithms, {SODA}
  2021}, pages 1977--1996. {SIAM}, 2021.

\bibitem{tww3}
{\'{E}}douard Bonnet, Colin Geniet, Eun~Jung Kim, St{\'{e}}phan Thomass{\'{e}},
  and R{\'{e}}mi Watrigant.
\newblock Twin-width {III:} {M}ax {I}ndependent {S}et, {M}in {D}ominating
  {S}et, and {C}oloring.
\newblock In {\em 48th International Colloquium on Automata, Languages, and
  Programming, {ICALP} 2021}, volume 198 of {\em LIPIcs}, pages 35:1--35:20.
  Schloss Dagstuhl --- Leibniz-Zentrum f{\"{u}}r Informatik, 2021.

\bibitem{tww4}
\'Edouard Bonnet, Ugo Giocanti, Patrice~Ossona de~Mendez, Pierre Simon,
  St\'ephan Thomass\'e, and Szymon Toru\'nczyk.
\newblock Twin-width {IV:} ordered graphs and matrices.
\newblock {\em CoRR}, abs/2102.03117, 2021.
\newblock URL: \url{https://arxiv.org/abs/2102.03117}, \href
  {http://arxiv.org/abs/2102.03117} {\path{arXiv:2102.03117}}.

\bibitem{BonnetKRTW21}
{\'{E}}douard Bonnet, Eun~Jung Kim, Amadeus Reinald, St{\'{e}}phan
  Thomass{\'{e}}, and R{\'{e}}mi Watrigant.
\newblock Twin-width and polynomial kernels.
\newblock {\em CoRR}, abs/2107.02882, 2021.
\newblock URL: \url{https://arxiv.org/abs/2107.02882}, \href
  {http://arxiv.org/abs/2107.02882} {\path{arXiv:2107.02882}}.

\bibitem{tww1}
\'Edouard Bonnet, Eun~Jung Kim, St{\'{e}}phan Thomasse, and R{\'{e}}mi
  Watrigant.
\newblock Twin-width~{I:} tractable {FO} model checking.
\newblock In {\em IEEE 61st Annual Symposium on Foundations of Computer
  Science, FOCS 2020}, pages 601--612. IEEE Computer Society, 2020.

\bibitem{BonnetNOdMST21}
{\'{E}}douard Bonnet, Jaroslav Ne\v{s}et\v{r}il, Patrice {Ossona de Mendez},
  Sebastian Siebertz, and St{\'{e}}phan Thomass{\'{e}}.
\newblock Twin-width and permutations.
\newblock {\em CoRR}, abs/2102.06880, 2021.
\newblock URL: \url{https://arxiv.org/abs/2102.06880}, \href
  {http://arxiv.org/abs/2102.06880} {\path{arXiv:2102.06880}}.

\bibitem{DBLP:journals/talg/Chan13}
Timothy~M. Chan.
\newblock Persistent {P}redecessor {S}earch and {O}rthogonal {P}oint {L}ocation
  on the {W}ord {RAM}.
\newblock {\em {ACM} Trans. Algorithms}, 9(3):22:1--22:22, 2013.

\bibitem{DBLP:conf/compgeom/ChanLP11}
Timothy~M. Chan, Kasper~Green Larsen, and Mihai P{\v{a}}tra{\c{s}}cu.
\newblock Orthogonal {R}ange {S}earching on the {RAM}, {R}evisited.
\newblock In {\em 27th {ACM} Symposium on Computational Geometry, SoCG 2011},
  pages 1--10. {ACM}, 2011.

\bibitem{cibulka2019better}
Josef Cibulka and Jan Kynčl.
\newblock Better upper bounds on the {F}\"uredi-{H}ajnal limits of
  permutations, 2019.
\newblock \href {http://arxiv.org/abs/1607.07491} {\path{arXiv:1607.07491}}.

\bibitem{DreierGJOdMR21}
Jan Dreier, Jakub Gajarsk\'y, Yiting Jiang, Patrice {Ossona de Mendez}, and
  Jean-Florent Raymond.
\newblock Twin-width and generalized coloring numbers.
\newblock {\em CoRR}, abs/2104.09360, 2021.
\newblock URL: \url{https://arxiv.org/abs/2104.09360}, \href
  {http://arxiv.org/abs/2104.09360} {\path{arXiv:2104.09360}}.

\bibitem{GajarskyPT21}
Jakub Gajarsk{\'{y}}, Michal Pilipczuk, and Szymon Toru\'nczyk.
\newblock Stable graphs of bounded twin-width.
\newblock {\em CoRR}, abs/2107.03711, 2021.
\newblock URL: \url{https://arxiv.org/abs/2107.03711}, \href
  {http://arxiv.org/abs/2107.03711} {\path{arXiv:2107.03711}}.

\bibitem{Kamali18}
Shahin Kamali.
\newblock Compact representation of graphs of small clique-width.
\newblock {\em Algorithmica}, 80(7):2106--2131, 2018.

\bibitem{MarcusTardos}
Adam Marcus and Gábor Tardos.
\newblock Excluded permutation matrices and the {S}tanley–{W}ilf conjecture.
\newblock {\em Journal of Combinatorial Theory, Series A}, 107(1):153--160,
  2004.

\bibitem{DBLP:conf/stoc/PatrascuT06}
Mihai P{\v{a}}tra{\c{s}}cu and Mikkel Thorup.
\newblock Time-space trade-offs for predecessor search.
\newblock In {\em 38th Annual {ACM} Symposium on Theory of Computing, STOC
  2006}, pages 232--240. {ACM}, 2006.

\end{thebibliography}

\newpage

\appendix

\section{Representation with bitsize $\Oh{n^{1+\varepsilon}}$ and query time $\Oh{1 / \varepsilon}$}
\label{sec:superlinear-repr}

In this section we provide a~brief sketch of another data structure representing twin-ordered matrices. For any fixed $\varepsilon > 0$, we will construct a data structure that represents a given $d$-twin-ordered $n\times n$ matrix $M$ in bitsize $\Oh{n^{1+\varepsilon}}$, and can be queried for entries of $M$ in worst-case time $\Oh{1/\varepsilon}$ per query.
Actually, the data structure solves the \ProbOrthogonalLocation problem within the same space and time bounds, provided that the input is given as a~set of orthogonal rectangles with pairwise disjoint interiors, and with integer coordinates between $0$ and $n$.
As the set of $1$ entries in any $d$-twin-ordered matrix $M$ admits a~rectangle decomposition into $\Oh[d]{n}$ rectangles (Lemma~\ref{lem:rectangles}), this also yields a~data structure representing $M$.

Notably, Chan~\cite{DBLP:journals/talg/Chan13} observed that \ProbOrthogonalLocation can be reduced to the static variant of the \textsc{Predecessor Search} problem, even if the input coordinates are from $0$ to $\Oh{n}$.
P{\v{a}}tra{\c{s}}cu and Thorup proved that each data structure for \textsc{Predecessor Search} with $\Oh{n \log^{\Oh{1}} n}$ bitsize necessarily requires $\Omega(\log \log n)$ query time, even in a~much more powerful cell probe model~\cite{DBLP:conf/stoc/PatrascuT06}.
Therefore, for general \ProbOrthogonalLocation, one cannot expect to achieve constant query time with bitsize significantly smaller than $\Oh{n^{1+\varepsilon}}$.

\subparagraph*{Data structure for disjoint intervals.}
Consider integers $k, h \geq 1$, and let $n = k^h$.
We will first sketch a~data structure that maintains a~set of disjoint integer intervals that are subintervals of $[0, n - 1]$.
The data structure shall allow adding or removing intervals in time $\Oh{kh}$ and querying whether a~point is contained in any interval in time $\Oh{h}$.

Consider a~perfect $k$-ary tree of depth $h$.
The tree has $k^h$ leaves, numbered from $0$ to $n - 1$ according to the pre-order traversal of the tree.
Each internal node at depth $i \in \{0, 1, \dots, h - 1\}$ in the tree corresponds to a~contiguous interval of leaves of length $k^{h-i}$.
Each such interval is called a~\emph{base interval}.
Each internal node contains an~array of $k$ pointers to the children in the tree, allowing access to the $j$-th child in constant time.
Additionally, alongside each node $v$ of the data structure, we store an~additional bit $b_v$, initially set to $0$.

Assume an~interval $[\ell, r]$ is to be inserted to the set.
We traverse the tree recursively, starting from the root, entering only nodes whose base intervals intersect $[\ell, r]$, and cutting the recursion at nodes whose base intervals are entirely within $[\ell, r]$.
It can be shown that the recursion visits at most $\Oh{kh}$ nodes and decomposes $[\ell, r]$ into $\Oh{kh}$ disjoint base intervals.
For each node $v$ corresponding to such a~base interval, we set $b_v \gets 1$.
Removing an~interval from the set is analogous.
Now, to verify whether an~element $y$ belongs to the set, we descend recursively from the root of the tree to the $y$-th leaf of the tree and verify if any of the visited nodes $v$ has $b_v = 1$.
This requires time $\Oh{h}$.

Since each update and query to the data structure is essentially a~recursive search from the root of the tree, the data structure can be made persistent: on each update, we create a~copy of each altered node and each of their ancestors, and we reset the pointers in the copies accordingly.
As $\Oh{kh}$ nodes are updated at each query, and each internal node stores an~array of $\Oh{k}$ pointers, the update time increases to $\Oh{k^2h}$ due to the copying of the nodes; and each update increases the bitsize of the data structure by $\Oh{k^2 h \log n}$.
Thus, after $\Oh[d]{n}$ updates, the bitsize of the data structure is $\Oh[d]{nk^2 h \log n}$.
The query time remains at $\Oh{h}$.

\subparagraph*{Orthogonal point location with small coordinates.}
Fix any $\varepsilon > 0$.
Given a~matrix $M$ of order $n$, we set $h \coloneqq \left\lceil 2/\varepsilon \right\rceil + 1$ and $k \coloneqq \left\lceil n^{1/h} \right\rceil$.
We instantiate a~persistent $k$-ary tree of depth $h$ as above.
We sweep the set of rectangles from the left of the right, maintaining a~vertical sweep line.
The tree maintains an intersection of the sweep line with the union of rectangles as a~set of disjoint intervals contained in $[0, n]$.
Hence, for each rectangle, the tree is updated twice: a~vertical interval is added when the sweep line reaches the left end of the rectangle, and is removed as soon as it reaches the right end of the rectangle.
At each $x$ coordinate, we store the pointer $\mathsf{ver}_x$ to the root of the current version of the tree.
After the preprocessing, for each query $(x, y)$, we fetch the pointer $\mathsf{ver}_x$ and check whether this version of the tree contains $y$ as an~element.

Let us analyze the query time and the bitsize of the data structure.
For convenience, let $\delta \coloneqq 1/h$.
We can see that $0 < \delta < \frac{\varepsilon}{2}$.
Each query is performed in time $\Oh{h} = \Oh{1/\varepsilon}$.
Storing pointers $\mathsf{ver}_x$ requires bitsize $\Oh{n \log n}$.
Since we processed $\Oh[d]{n}$ rectangles, the persistent tree has bitsize $\Oh[d]{nk^2 h \log n} = \Oh[d]{n^{1+2\delta} \log n / \varepsilon} = \Oh[d]{n^{1+\varepsilon}}$.

\end{document}